\definecolor{r}{rgb}{1,0,0}
\def\d{\mathrm{d}}
\newcommand{\ES}{\mathrm{ES}}
\newcommand{\X}{\mathcal {X}}
\newcommand{\VaR}{\mathrm{VaR}}
\newcommand{\RVaR}{\mathrm{RVaR}}
\newcommand{\E}{\mathbb{E}}
\newcommand{\F}{\mathcal{F}}
\newcommand{\R}{\mathbb{R}}
\newcommand{\p}{\mathbb{P}}
\newcommand{\id}{\mathds{1}}
\renewcommand{\ge}{\geqslant}
\renewcommand{\le}{\leqslant}
\renewcommand{\epsilon}{\varepsilon}
\newcommand{\esssup}{\mathrm{ess\mbox{-}sup}}
\newcommand{\essinf}{\mathrm{ess\mbox{-}inf}}
\theoremstyle{plain}
\newtheorem{theorem}{Theorem}
\newtheorem{corollary}{Corollary}
\newtheorem{lemma}{Lemma}
\newtheorem{proposition}{Proposition}
\theoremstyle{definition}
\newtheorem{definition}{Definition}
\newtheorem{example}{Example}
\newtheorem{assumption}{Assumption}
\theoremstyle{remark}
\newtheorem{remark}{Remark}
\newcommand{\cet}{\begin{center}}
\newcommand{\ecet}{\end{center}}
\begin{document}
\title{Calibrating distribution models from PELVE}
\date{\today }
\author{Hirbod Assa\thanks{ Kent Business School, UK.
    \texttt{h.assa@kent.ac.uk}.} \and
    Liyuan Lin\thanks{Department of Statistics and Actuarial Science, University of Waterloo,
    Canada. \texttt{l89lin@uwaterloo.ca}.} \and
    Ruodu Wang\thanks{Department of Statistics and Actuarial Science, University of Waterloo,
    Canada. \texttt{wang@uwaterloo.ca}.} }

\maketitle

\begin{abstract}
The Value-at-Risk (VaR) and the Expected Shortfall (ES)
are the two most popular risk measures in banking and insurance regulation.
To bridge between the two regulatory risk measures, the Probability Equivalent Level of VaR-ES (PELVE) was recently proposed to convert a level of VaR to that of ES. It is straightforward to compute the value of  PELVE for a given distribution model.
In this paper, we study the converse problem of PELVE calibration, that is, to find a distribution model that yields a given PELVE, which may either be obtained from data or from expert opinion. We discuss separately the cases when one-point, two-point, $n$-point and curve constraints are given. In the most complicated case of a curve constraint, we convert the calibration problem to that of an advanced differential equation.  We apply the model calibration techniques to estimation and simulation for   datasets used in insurance.
We further study some technical properties of PELVE by offering a few new results on monotonicity and convergence.
\end{abstract}

\textbf{Keywords:} Value-at-Risk, Expected Shortfall, risk measures, heavy tails, advanced differential equation.

\newcommand\blfootnote[1]{%
  \begingroup
  \renewcommand\thefootnote{}\footnote{#1}%
  \addtocounter{footnote}{-1}%
  \endgroup
}

\section{Introduction}
Value-at-Risk (VaR) and Expected Shortfall (ES, also known as TVaR and CVaR) are the most widely used risk measures for regulation in finance and insurance. The former has gained its popularity due to its simplistic approach toward risk as the risk quantile, and the second one is perceived to be useful as a modification of VaR with more appealing properties, such as tail-sensitivity and subadditivity, as studied in the  seminal work of \cite{ADEH99}.

In the Fundamental Review of the Trading Book (FRTB), the Basel Committee on Banking Supervision (\citet{BASEL19}) proposed to replace VaR at 1\% confidence with ES with a 2.5\% confidence interval for the internal model-based approach.\footnote{In this paper, we   use  the ``small $\alpha$" convention for $\VaR$ and $\ES$. Hence,   ``VaR at 1\% confidence" and ``ES at 2.5\% confidence"  correspond to  $\VaR_{99\%}$ and $\ES_{97.5\%}$ in \cite{BASEL19}, 
respectively.} The main reason, as mentioned in the FRTB, was that $\ES$ can better capture tail risk; see  \cite{ELW18} for a concrete risk sharing model where tail risk is captured by ES and ignored by VaR. On the other hand, VaR also has advantages that ES does not have, such as elicitability (e.g., \cite{G11} and \cite{KP16}) or backtesting tractability (e.g., \cite{AS14}), and the two risk measures admit different axiomatic foundations (see \cite{C09} and \cite{WZ21}). We refer to  the reviews of \citet{EPRWB14} and \citet{EKT15} for general discussions on VaR and ES, and \citet{MFE15} for a standard treatment on risk management including the use of VaR and ES.
The technical contrasts of the two risk measures and their co-existence in regulatory practice give rise to  great interest from both researchers and practitioners to explore the relationship between them.

To understand the balancing point of VaR and ES during the transition in the FRTB, \citet{LW22} proposed the Probability Equivalent Level of VaR-ES (PELVE). The value of PELVE is the multiplier to the tail probability when replacing VaR with ES such that the capital calculation stays unchanged.
More precisely, the PELVE of $X$ at level $\epsilon$ is the multiplier $c$ such that $\ES_{c\epsilon}(X)=\VaR_{\epsilon}(X)$; such $c$ uniquely exists under mild conditions.
For instance, if $\VaR_{1\%}(X)=\ES_{2.5\%}(X)$ for a future portfolio loss $X$, then PELVE of $X$  at probability level 0.01 is the multiplier $2.5$. In this case, replacing $\VaR_{1\%}$ with  $ \ES_{2.5\%} $ in FRTB does not have much effect on the capital requirement for the bank bearing the loss $X$. Instead, if $\ES_{2.5\%}(X)>\VaR_{1\%}(X)$, then the bank has a larger capital requirement under the new regulatory risk measure; this is often the case for financial assets and portfolios as shown by the empirical studies in \cite{LW22}. The PELVE enjoys many convenient properties, and it has been extended in a few  ways. In particular,
\citet{FG2021} defined generalized PELVE by replacing $\VaR$ and $\ES$ with another pair of monotone risk measures $(\rho,\tilde{\rho})$, and
\cite{B22} extended  PELVE by replacing ES with a   higher-order ES.

For a given distribution model or a data set, its PELVE can be computed or estimated in a straightforward manner.
As argued by \cite{LW22}, the PELVE for a small $\epsilon$ may be seen as a summarizing index measuring  tail heaviness in a non-limit sense. As such, one may like to generate models for a given PELVE, in a way similar to constructing models for other given statistical information; see e.g., \cite{EMS02, EHW16} for constructing multivariate models with a given correlation or tail-dependence matrix. Such statistical information may be obtained either from data or from expert opinion, but there is no a priori guarantee that a corresponding model exists.
Since PELVE involves a parameter $\epsilon\in (0,1)$, its information is represented by a curve. The calibration problem, that is, to find a distribution model for given PELVE values or a given PELVE curve, turns out to be highly non-trivial, and it is the main objective of the current paper.

From now on,   suppose that we receive some information on the PELVE of a certain random loss from an expert opinion, and we aim to build a distribution model consistent with the supplied information.
Since PELVE is location-scale invariant, such a distribution, if it exists, is not unique.

The calibration problem is trivial if we are supplied with only one point on the PELVE curve.
As the PELVE curve of the generalized Pareto distribution is a constant when the PELVE is well defined, we can use the generalized Pareto distribution to match the given PELVE value, which has a tail index implied from the expert opinion.
The calibration problem becomes more involved if we are supplied with two points on the PELVE curve, because the value of the PELVE at two different probability levels interact with each other.
The situation becomes more complicated as the number of points increases, and we further turn to the problem of calibration from a fully specified PELVE curve. Calibrating distribution from the PELVE curve can be reformulated as solving for a function $f$ via the integral equation $\int_{0}^{y}f\left(s\right)\d s=yf\left(z\left(y\right)y\right)$, where the curve $z$ is computed from the PELVE curve. This integral equation can be
further converted to an advanced differential equation (see \cite{BC63}).
For the case that $z$ is a constant curve, we can explicitly obtain all solutions for $f$. We find other distributions that also have constant PELVE curves other than the simple ones with a Pareto or exponential distribution. As a consequence, a PELVE curve does not characterize a unique location-scale family of distributions; this provides a negative answer to a question posed by \citet[Section 7, Question (iv)]{LW22}. For general function $z$,
we develop a numerical method to compute $f$.


The calibrated distribution can be used to estimate the value of other risk measures such as VaR and ES at different levels. We illustrate by an empirical example that two points of PELVE give a quite good summary of  the tail distribution of risk.  Daily log-losses (negative log-returns) of Apple (AAPL) from Yahoo Finance are collected for the period from January 3, 2012  to December 31, 2021 within total of 2518 observations.
We calculate the empirical PELVE at levels 0.01 and 0.05 using the empirical PELVE estimator provided by \citet[Section 5]{LW22} with a moving window of  500 trading days. 
For each pair of two points of PELVE at levels 0.01 and 0.05, we produce a quantile curve from the two empirical PELVE points by our calibration model in Section \ref{sec:32}, which is scaled such that $\VaR_{0.01}$ and $\VaR_{0.05}$ are equal to their empirical values.\footnote{Recall that PELVE is location-scale free, and hence we need to pick two free parameters to specify a  distribution calibrated from PELVE.}  
Figure \ref{fig:VaR} presents the empirical and calibrated quantile curves on December 31, 2021 using 500 trading days prior to that date. The two quantile curves are close to each other, with our calibrated curve being more smooth.  We also report the  values of $\ES_{0.025}$ of the calibrated distribution, which we call the calibrated $\ES_{0.025}$, and compare it with empirical $\ES_{0.025}$.  The left panel of Figure \ref{fig:ES} shows the curves of empirical  and calibrated $\ES_{0.025}$. In the right panel of Figure \ref{fig:ES}, we create a scatter plot using empirical  and calibrated $\ES_{0.025}$.
Both figures show that the empirical and calibrated $\ES_{0.025}$ curves are quite close.
 
 \begin{figure}[h]
    \centering
       \caption{Empirical $\VaR$ and calibrated $\VaR$}\label{fig:VaR}
    \includegraphics[scale=0.45]{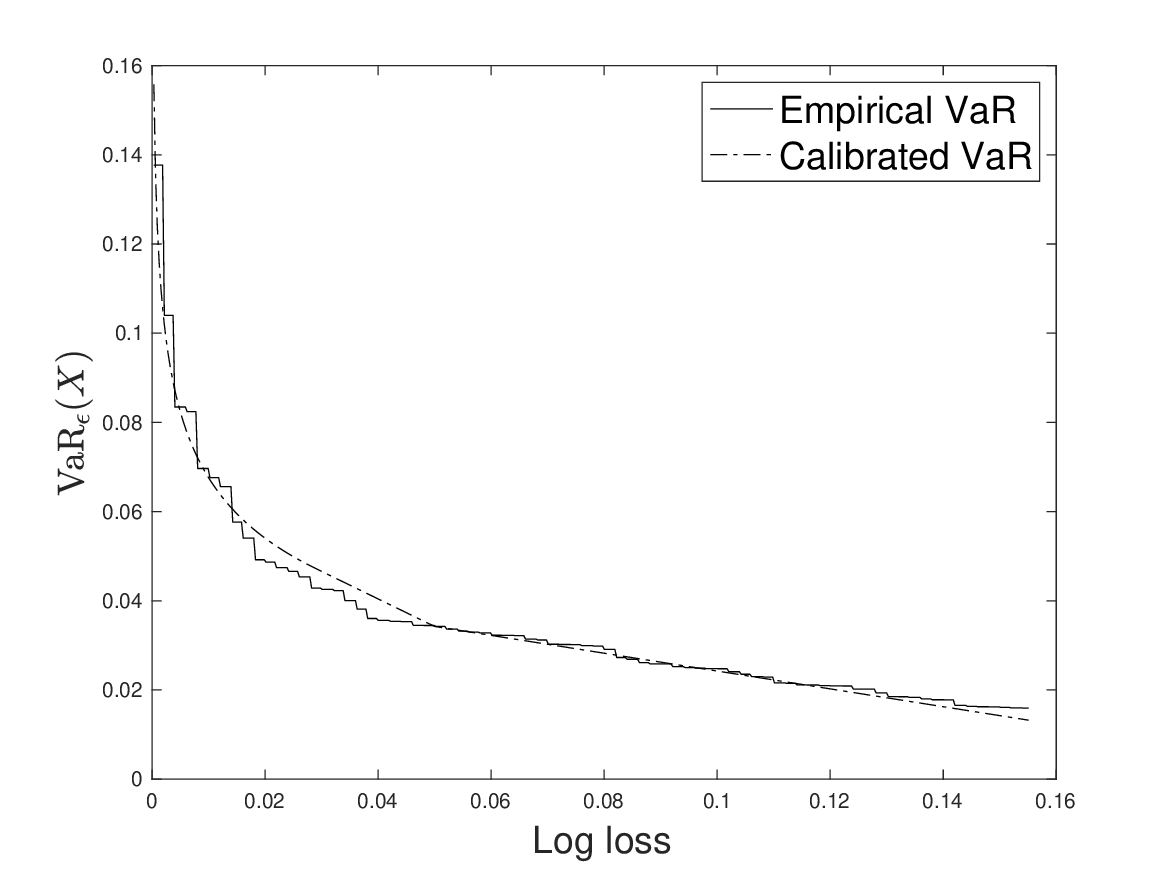}
    \end{figure}

 \begin{figure}[h]
    \centering
      \caption{Empirical $\ES_{0.05}$ and calibrated  $\ES_{0.05}$}\label{fig:ES}
    \includegraphics[scale=0.37]{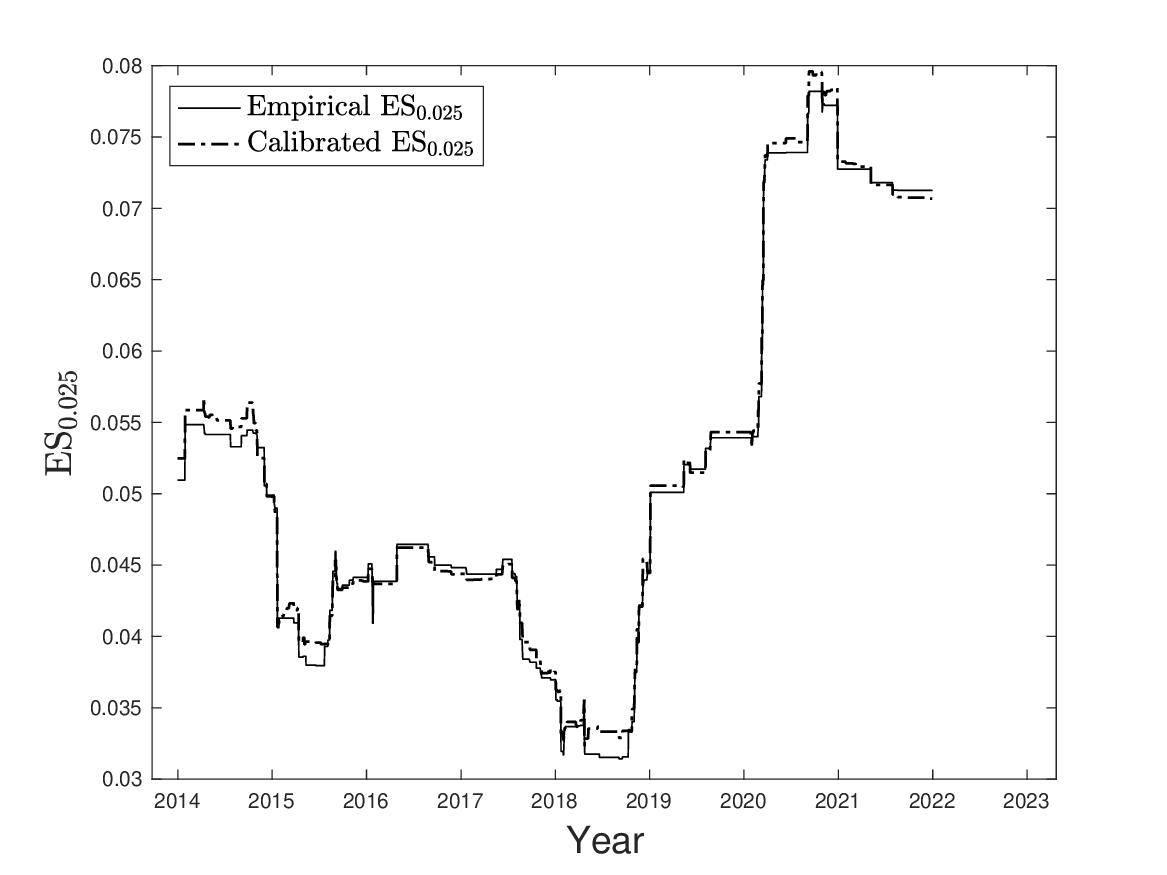}
        \includegraphics[scale=0.37]{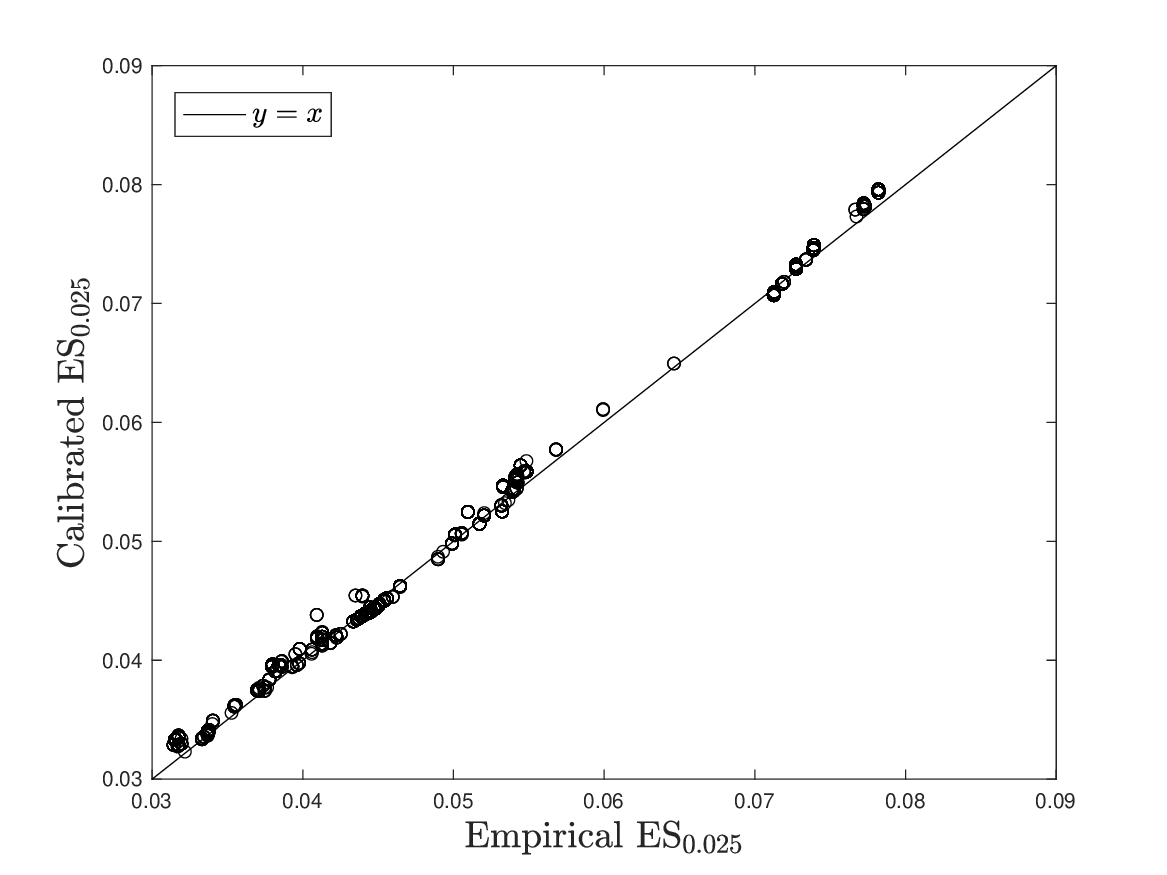}
    \end{figure}

To further enrich the theory of PELVE,  we study a few technical properties of PELVE, such as monotonicity and convergence as the probability level goes to 0.   A decreasing PELVE indicates a relatively larger impact of $\ES$ in risk assessment than $\VaR$ moving towards the tail. As we will see, while for the most known parametric distributions the PELVE is decreasing, there exist  some examples at some risk levels it is not decreasing. This means that for those examples $\VaR$ becomes a stricter risk measure when moving towards the tail. To obtain conditions for monotonicity, we define the dual PELVE by moving the multiplier $c$ from the $\ES$ side to  the $\VaR$ side.
PELVE can be seen as a functional measure   of tail heaviness in the sense that a heavier-tailed distribution has a higher PELVE curve (\citet[Theorem 1]{LW22}). The hazard rate, on the other hand, is another functional measure of tail heaviness.
We show that the   PELVE is decreasing (increasing) if the inverse of the hazard rate is convex (concave).
Monotonicity also leads to conditions for the PELVE to have a limit at the tail, which from the risk management perspective, identifies the ultimate relative positions of $\ES$ and $\VaR$ in the tail region. From a mathematical perspective,
  the limit of PELVE at $0$ allows us to extend the domain of PELVE to include $0$ as a measure of tail heaviness.

The  rest of this paper is organized as follows.
Section \ref{sec:definition} introduces the background and examples of the PELVE.
In Section \ref{sec:points} we calibrate a distribution from finitely many points in the PELVE curve. 
Section \ref{sec:function} calibrates the distribution from given PELVE curves, where we give a class of explicit solutions for constant PELVE functions  and  numerical solutions for  general PELVE functions.
In Section \ref{sec:property}, we study the monotonicity and convergence of the PELVE. Section  \ref{sec:6} presents two examples of the model calibration techniques applied to   datasets used in insurance.
A conclusion is given in Section \ref{sec:conclusion}. Some technical proofs of results in  Sections \ref{sec:points}, \ref{sec:function} and \ref{sec:property} are provided in the Appendices.

\section{Definitions and background}\label{sec:definition}

Let us consider an atomless probability space $\left(\Omega,\F,\p\right)$, where $\mathcal{F}$ is the set of the measurable sets and $\p$ is the probability measure. Let $L^{1}$ be the set of integrable random variables, i.e., $L^{1}= \{X: \E [\vert X\vert]<\infty \}$, where $\E$ is the expectation with respect to $\p$. 

 We first define $\VaR$ and $\ES$ in $L^{1}$, the two most popular risk measures. The $\VaR$   at probability level $p\in(0,1)$ is defined as
\begin{equation}\label{eq:left_VaR}
\VaR_{p}(X)  =\inf\{x\in\R:\p(X\le x)\ge 1-p\}=F^{-1}(1-p),~~~~X\in L^{1},
\end{equation}
where $F$ is the distribution of $X$. The $\ES$ at probability level $p\in[0,1)$ is defined as
$$
\ES_{p}(X) =\frac{1}{p}\int_{0}^{p}\mathrm{VaR}_{q}(X)\d q,~~~~X\in L^{1}.
$$
Note that we use the ``small $\alpha$" convention for  $\VaR$ and $\ES$, which is different from \cite{LW21}.
Let $\VaR_{0}(X)=\ES_{0}(X)=\esssup(X)$ and $\VaR_{1}(X)=\essinf(X)$. We have that $\ES_{1}(X)$ is the mean of $X$. We will also call $p\mapsto \VaR_p(X)$ the quantile function of $X$, keeping in mind that in our convention this function is decreasing.\footnote{Throughout the paper, all terms like ``increasing" and ``decreasing" are in the non-strict sense.} 

For $\epsilon\in(0,1)$, the PELVE at level $\epsilon$,  proposed by \citet{LW22}, is defined as
$$
\Pi_{X}(\epsilon)  =\inf\left\{ c\in[1,1/\epsilon]:\ES_{c\epsilon}(X)\le\VaR_{\epsilon}(X)\right\} ,\quad X\in L^{1},
$$
where $\inf (\varnothing)  =\infty$.
 \cite{LW22} used $\Pi_\epsilon (X)$ for our $\Pi_X(\epsilon)$, and our choice of notation is due to the fact that the curve $\epsilon \to \Pi_X(\epsilon)$ is the main quantity of interest in this paper.

The PELVE of $X$ is finite if and only if $\VaR_{\epsilon}(X)\ge \E[X]$.
The value of the PELVE is the multiplier $c$ such that $\ES_{c\epsilon}(X)=\VaR_{\epsilon}(X).$
If $\VaR_{p}(X)$ is not a constant for $p \in (0,\epsilon]$, then the PELVE is the unique solution for the multiplier. By Theorem 1 in \cite{LW22}, the PELVE is location-scale invariant. The distribution with a heavy tail will have a higher PELVE value.

If $X$ is a normal distributed random variable and $\epsilon=1\%$, we have $\Pi_{X}(\epsilon)\approx2.5$. It means that $\ES_{2.5\%}(X)\approx\VaR_{1\%}(X)$. That is, the replacement suggested by BCBS is fair for normally distributed risks. In other words, a higher PELVE will result in a higher capital requirement after the replacement.

In this paper, we are generally interested in the question of which distributions have a specified  or partially specified PELVE curve.
We first look at a few simple examples.

\begin{example}[Constant PELVE]
We first list some distributions that have constant PELVE  curves.
From the definition of the PELVE, we know that the PELVE should be larger than 1.
As we can see from Table \ref{constant-PELVE}, the PELVE for the generalized Pareto distribution takes  values on $(1, \infty)$. For $X \sim \text{GPD}(\xi)$, we have  $1<\Pi_{X}(\epsilon)<e$ when $\xi <0$,  $\Pi_{X}(\epsilon)=e$ when $\xi =0$ and  $\Pi_{X}(\epsilon)>e$ when $\xi >0$. Furthermore, if $X$ follows the point-mass distribution $\delta_{c}$ or the Bernoulli distribution, we have $\Pi_X(\epsilon)=1$.
\end{example}

\begin{table}[htbp]
\def\arraystretch{1.4}
    \centering{}
    \caption{Example of constant PELVE}\label{constant-PELVE}
    \begin{threeparttable}
    \begin{tabular}{m{2cm}<{\centering}| m{7cm}<{\centering}|m{4cm}<{\centering}}
    \hline\hline
    Distribution &  Distribution or probability function of $X$ & $\Pi_{X}(\epsilon)$\\
    \hline
    $\delta_c$ &$\p(X=c)=1$ &$\Pi_{X}(\epsilon)=1$ for $\epsilon \in (0,1)$\\
    \hline
    $\mathrm{B}(1,p)$& $\p(X=1)=p$ and $\p(X=0)=1-p$ & $\Pi_{X}(\epsilon)=1$ for $\epsilon \in (0,p)$\\
    \hline
    $\mathrm{U}(0,1)$ & $F(t)=t$ for $t \in (0,1)$ & $\Pi_{X}(\epsilon)=2$  for $0<\epsilon<1/2$\\
    \hline
    $\text{EXP}(\lambda)$ & $F(t)=1-\exp(-\lambda t),~~\lambda>0$  & $\Pi_{X}(\epsilon)=e$ for $0<\epsilon<1/e$\\
    \hline
    $\text{GPD}(\xi)$\tnote{1}&
    $F(x)=\left\{\begin{aligned}
            &1-\left(1+\xi x\right)^{-\frac{1}{\xi}}~~~& \xi \neq 0\\
            &1-\exp(-x)~~~& \xi=0
    \end{aligned}\right.$
    &
    $\Pi_{X}(\epsilon)=(1-\xi)^{-\frac{1}{\xi}}$ for $0<\epsilon<(1-\xi)^{\frac{1}{\xi}}$\\
    \hline\hline
    \end{tabular}
    \begin{tablenotes}
     \footnotesize
      \item[1] The distribution $\text{GPD}(\xi)$ is called the standard generalized Pareto distribution. As $\E[X]<\infty$ when $\xi<1$, the PELVE exists only when $\xi<1$.
          The support of $\text{GPD}(\xi)$ is $[0, \infty)$ when $\xi>0$ and $[0,-\frac{1}{\xi}]$ when $\xi<0$. When $\xi=0$, the $\text{GPD}(\xi)$ is exactly exponential distribution with $\lambda=1/\sigma$. There is a three-parameter $\text{GPD}(\mu,\sigma,\xi)$, which is a location-scale transform of standard GPD. Therefore, $\text{GPD}(\mu,\sigma,\xi)$ has the same PELVE as $\text{GPD}(\xi)$.
    \end{tablenotes}
    \end{threeparttable}
\end{table}

\begin{example}\label{ex:decreasing-PELVE}
Here we present some non-constant PELVE examples. We write $\mathrm{t}(v)$ for the t-distribution with parameter $(0,1,v)$, and $\mathrm{LN}(\sigma)$ for the log-normal distribution with parameter $(0,\sigma^2)$.
    \begin{figure}[h]
    \centering
        \caption{PELVE for normal distribution, t-distribution and lognormal distribution}
    \label{PELVE}
    \includegraphics[scale=0.32]{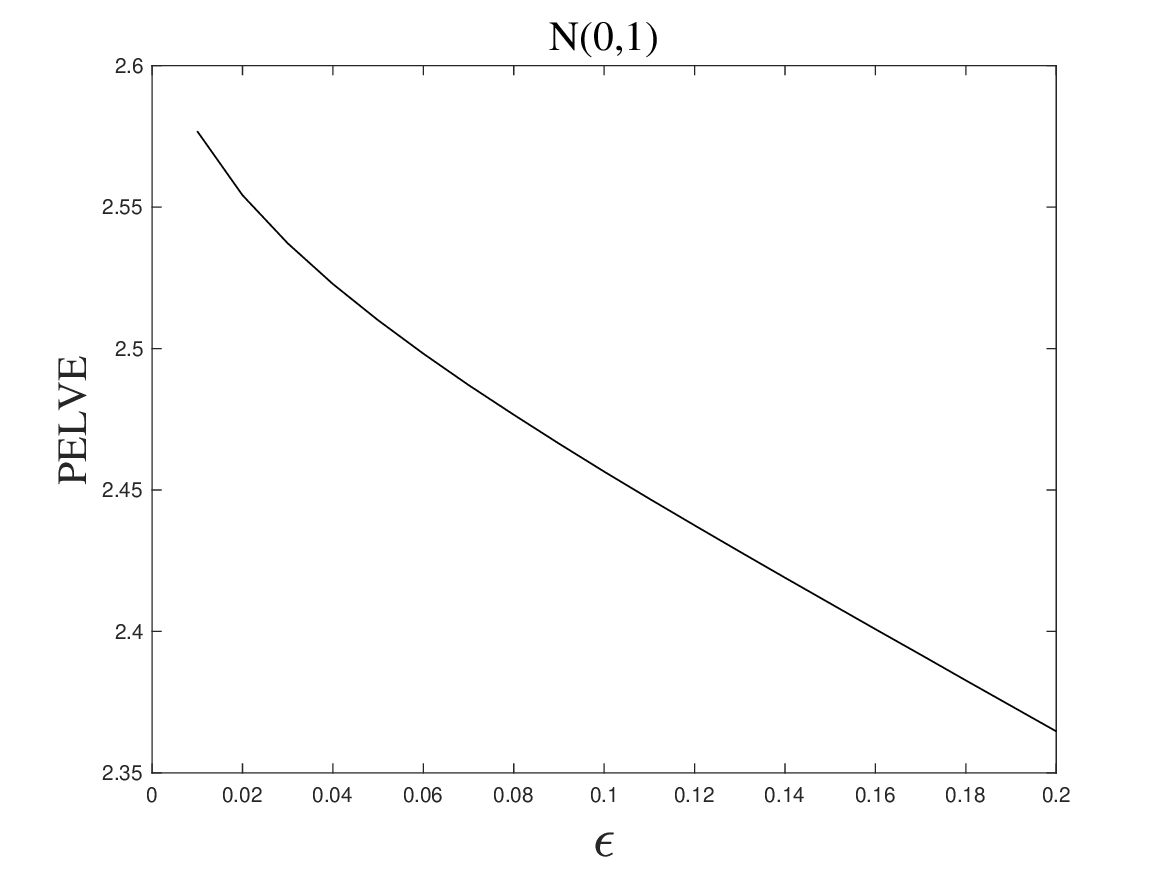}
    \includegraphics[scale=0.32]{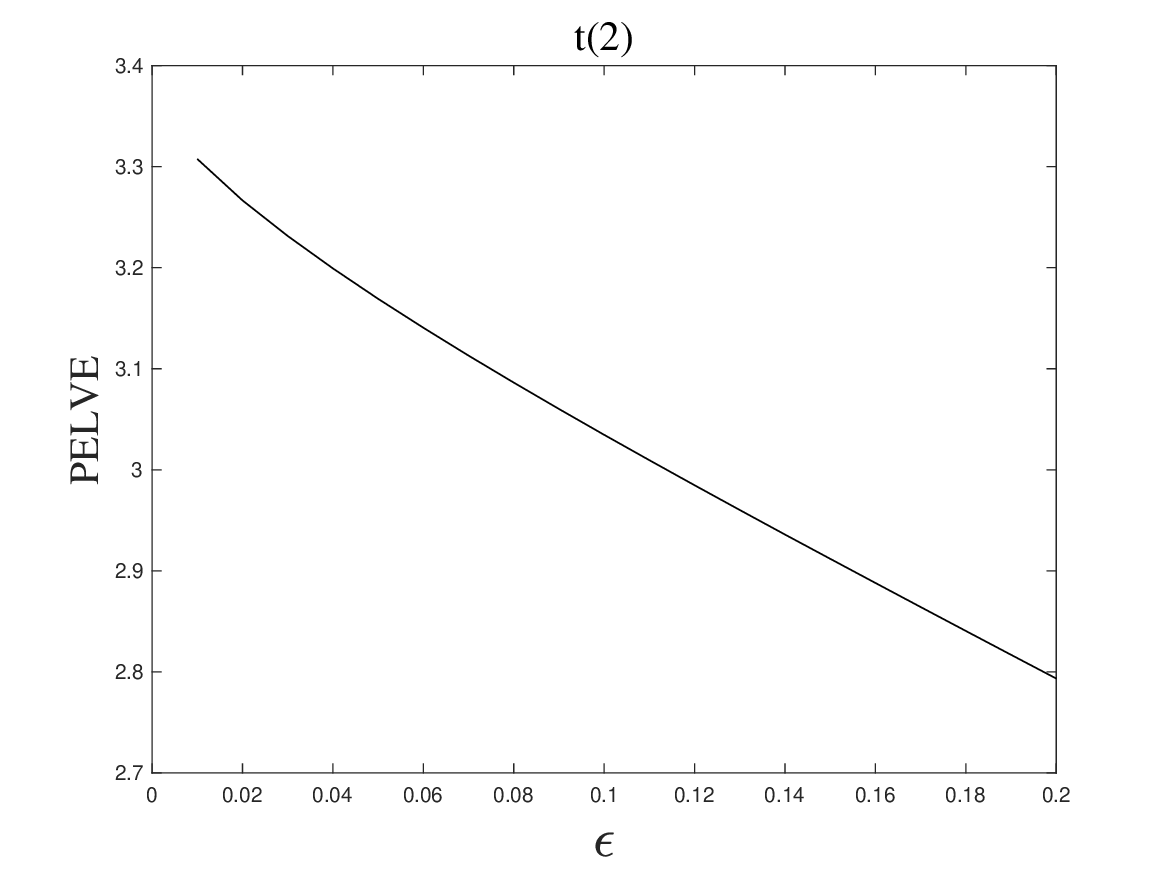}
    \includegraphics[scale=0.32]{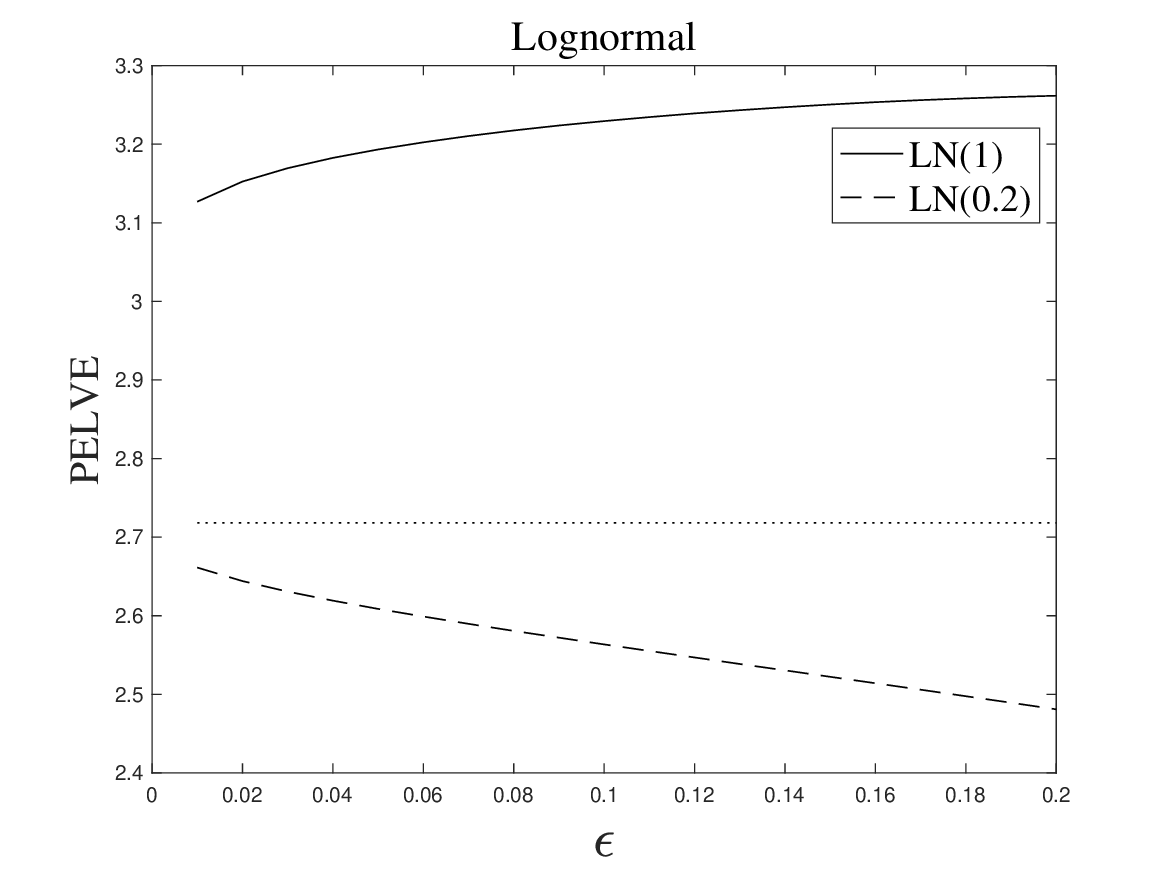}
    \end{figure}
As we can see, for normal distribution and t-distribution, the PELVE curve is decreasing as $\epsilon$ increasing. The monotonicity of the PELVE of the lognormal distribution depends on the value of $\sigma$. The monotonicity of the PELVE will be further discussed  in Section \ref{sec:property}.
For more PELVE examples, see \cite{LW22}.
\end{example}

\section{Calibration from finite-point constraints}\label{sec:points}
In this section, we discuss the calibration problem when some points of the PELVE are given.
We will focus on the case where one point or two points on the PELVE curve are specified, for which we can explicitly construct a corresponding quantile function.

We first note that the calibrated distribution is not unique.
For example, if we are given $\Pi_{X}(0.01)=2.5$, we can assume the distribution of $X$ is the Normal distribution or the generalized Pareto distribution with tail parameter $\xi$ satisfying $(1-\xi)^{-1/\xi}=2.5$ from Table \ref{constant-PELVE}.  Therefore, the distributions  obtained in our results are only some possible choices, which we choose to have a
generalized Pareto tail, as Pareto tails are standard in risk management applications.
\subsection{Calibration from a one-point constraint}
Based on Table \ref{constant-PELVE}, we can calibrate the distribution for $X$ from one given PELVE point $(\epsilon_1, c_1)$ such that $\Pi_{X}({\epsilon_1})=c_1$.
A simple idea is to take the generalized Pareto distribution when $c_1>1$ and $\delta_c$ when $c_1=1$. We summarize the idea in the following Proposition.

\begin{proposition}\label{pro:one-point}
  Let $\epsilon_1 \in (0,1)$ and $c_1 \in [1,\infty)$ such that $c_1\epsilon_1\le 1$.
  If $c_1>1$, let $\xi \in \R$  such that $(1-\xi)^{-\frac{1}{\xi}}=c_1$.
  Then, $X \sim \text{GPD}(\xi)$ has $\Pi_{X}({\epsilon_1})=c_1$.
  If $c_1=1$, then $X=k$ for some constant $k \in \R$ has $\Pi_{X}({\epsilon_1})=c_1$.

\end{proposition}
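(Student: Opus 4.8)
The plan is to reduce the statement to the known fact (Table~\ref{constant-PELVE}) that a generalized Pareto loss has the constant PELVE curve $\epsilon\mapsto(1-\xi)^{-1/\xi}$, and then to check that the prescribed point $(\epsilon_1,c_1)$ lies inside the admissible range. The case $c_1=1$ is immediate: if $X=k$ is constant then $\VaR_p(X)=k$ for all $p\in(0,1)$ and $\ES_p(X)=k$ for all $p\in[0,1)$, so $\ES_{1-\epsilon_1}(X)=k=\VaR_{1-\epsilon_1}(X)$; hence $c=1$ satisfies the defining inequality $\ES_{1-c\epsilon_1}(X)\le\VaR_{1-\epsilon_1}(X)$, and since $c\ge1$ always, $\Pi_X(\epsilon_1)=1=c_1$.

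For $c_1>1$, I would first note that the hypothesis $(1-\xi)^{-1/\xi}=c_1\in(1,\infty)$ forces $\xi<1$ (for $\xi\ge1$ the expression is not a finite positive real), so $X\sim\text{GPD}(\xi)$ has finite mean and $\Pi_X$ is well defined; one may also record that $\xi\mapsto(1-\xi)^{-1/\xi}$ is a strictly increasing continuous bijection from $(-\infty,1)$ onto $(1,\infty)$, so that the $\xi$ in the statement exists and is unique. The core is a direct computation: inverting $F_X$ gives $\VaR_p(X)=\xi^{-1}\big((1-p)^{-\xi}-1\big)$ when $\xi\ne0$ and $\VaR_p(X)=-\ln(1-p)$ when $\xi=0$; integrating as in the definition of $\ES$ and using $\xi<1$ gives $\ES_p(X)=\xi^{-1}\big((1-\xi)^{-1}(1-p)^{-\xi}-1\big)$, respectively $\ES_p(X)=1-\ln(1-p)$. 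Substituting $p=1-c\epsilon_1$ into $\ES_{1-c\epsilon_1}(X)=\VaR_{1-\epsilon_1}(X)$, the factor $\epsilon_1^{-\xi}$ cancels and the equation collapses to $c^{-\xi}=1-\xi$, that is $c=(1-\xi)^{-1/\xi}=c_1$, with no dependence on $\epsilon_1$ (the case $\xi=0$ gives $c=e$ in the same way). To see that $c_1$ is the PELVE and not merely a root, observe that $c\mapsto\ES_{1-c\epsilon_1}(X)$ is strictly decreasing (because $q\mapsto\ES_q(X)$ is increasing and $c\mapsto1-c\epsilon_1$ is decreasing); hence $\{c\in[1,1/\epsilon_1]:\ES_{1-c\epsilon_1}(X)\le\VaR_{1-\epsilon_1}(X)\}=[c_1,1/\epsilon_1]$, using $c_1\ge1$ and $c_1\epsilon_1\le1$, so its infimum is exactly $c_1$.

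I do not anticipate a real obstacle; the argument is a calculation. The points needing a little care are treating the exponential case $\xi=0$ alongside $\xi\ne0$; checking the support and integrability constraints (for $\xi<0$ the support is the bounded interval $[0,-1/\xi]$, and the hypothesis $c_1\epsilon_1\le1$ keeps every argument of $\ES$ and $\VaR$ in $[0,1]$ with $\VaR_1(X)=\esssup(X)=-1/\xi<\infty$); and the boundary case $c_1\epsilon_1=1$, where $\ES_{1-c_1\epsilon_1}(X)=\ES_0(X)=\E[X]=(1-\xi)^{-1}$, which the formulas show equals $\VaR_{1-\epsilon_1}(X)$, so the conclusion survives at the endpoint.
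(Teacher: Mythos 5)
Your proof is correct and takes essentially the same approach the paper intends: the paper omits the proof as a "direct" read-off from the constant-PELVE entry for $\mathrm{GPD}(\xi)$ in Table~\ref{constant-PELVE}, and you simply spell out the computation behind that table entry (inverting $F$, integrating to get $\ES$, cancelling $\epsilon_1^{-\xi}$, and confirming the infimum via monotonicity of $c\mapsto\ES_{1-c\epsilon_1}(X)$). One small bonus in your write-up is the explicit check of the boundary case $c_1\epsilon_1=1$, which Table~\ref{constant-PELVE} states only for the open interval $0<\epsilon<(1-\xi)^{1/\xi}$ while the proposition's hypothesis permits equality.
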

The proof can be directly derived from Table \ref{constant-PELVE} and it is omitted.
By Proposition \ref{pro:one-point}, if we have the value of PELVE at point $\epsilon_1$, we can find a distribution of $X$ which has the same PELVE value at $\epsilon_1$. If we also have the value of $\VaR$ at $\epsilon_1$, we can determine the scale parameter ($\sigma$) for the GPD distribution or the value of $k$ to match the value of $\VaR$. For Table \ref{constant-PELVE}, we can see that the calibrated generalized Pareto distribution can also serve as a solution for a more prudent condition $\Pi_X(\epsilon)\ge c_1$ when $\epsilon \in (0, \epsilon_1)$.

\subsection{Calibration from a  two-point constraint}
\label{sec:32} 
The calibration problem would be much more difficult when we are given two points of the PELVE curve. Given two points $(\epsilon_1, c_1)$ and $(\epsilon_2, c_2)$ such that $\epsilon_1<\epsilon_2$, we want to find a distribution for $X \in L^1$ such that $\Pi_{X}({\epsilon_1})=c_1$ and $\Pi_{X}(\epsilon_2)=c_2$. Nevertheless, the choices of $(\epsilon_1, c_1)$ and $(\epsilon_2, c_2)$ are not arbitrary. First, we need $1 \le c_1\le 1/\epsilon_1$ and $1 \le c_2\le 1/\epsilon_2$ by the definition of the PELVE. Then, we will show that the value of $c_2$ will be restricted if  $(\epsilon_1, c_1)$ and $\epsilon_2$ are given.

\begin{lemma}\label{lem:bound_1}
For any $X \in L^1$, let $\epsilon_1,\epsilon_2 \in (0,1)$ be such that $\E[X]\le \VaR_{\epsilon_2}(X)$ and  $\epsilon_1<\epsilon_2$. Then, we have $\epsilon_1 \Pi_{X}(\epsilon_1)\le \epsilon_2 \Pi_{X}(\epsilon_2)$.
\end{lemma}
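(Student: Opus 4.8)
The plan is to work directly from the definition of PELVE as an infimum, and to relate the two defining sets
$$S_i \;=\; \{c \in [1,1/\epsilon_i] : \ES_{1-c\epsilon_i}(X) \le \VaR_{1-\epsilon_i}(X)\}, \qquad i=1,2,$$
through the substitution $c \mapsto c' := (\epsilon_2/\epsilon_1)\,c$, which is chosen precisely so that the product $c\epsilon_i$ inside the ES is preserved. First I would dispose of the trivial case: if $\Pi_X(\epsilon_2)=\infty$ there is nothing to prove, so assume $\Pi_X(\epsilon_2)<\infty$; by the characterization recalled just after the definition of PELVE this is exactly the hypothesis $\E[X] \le \VaR_{1-\epsilon_2}(X)$, and combined with $\VaR_{1-\epsilon_2}(X) \le \VaR_{1-\epsilon_1}(X)$ (monotonicity of $\VaR_p(X)$ in $p$, using $\epsilon_1<\epsilon_2$) it also forces $\E[X]\le\VaR_{1-\epsilon_1}(X)$, hence $\Pi_X(\epsilon_1)<\infty$. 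Thus both $S_1$ and $S_2$ are nonempty.

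Next, fix any $c \in S_2$, so $\ES_{1-c\epsilon_2}(X) \le \VaR_{1-\epsilon_2}(X)$. Using $\VaR_{1-\epsilon_2}(X) \le \VaR_{1-\epsilon_1}(X)$ again gives $\ES_{1-c\epsilon_2}(X) \le \VaR_{1-\epsilon_1}(X)$. Put $c' := (\epsilon_2/\epsilon_1)\,c$; then $1-c'\epsilon_1 = 1-c\epsilon_2$, so $\ES_{1-c'\epsilon_1}(X) \le \VaR_{1-\epsilon_1}(X)$. I would then verify the admissibility range: $c\ge 1$ and $\epsilon_2>\epsilon_1$ give $c' \ge \epsilon_2/\epsilon_1 > 1$, while $c \le 1/\epsilon_2$ gives $c' \le 1/\epsilon_1$, so $c' \in [1,1/\epsilon_1]$ and therefore $c' \in S_1$. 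By definition of $\Pi_X(\epsilon_1)$ as $\inf S_1$ we obtain $c' \ge \Pi_X(\epsilon_1)$, that is, $c\epsilon_2 = c'\epsilon_1 \ge \epsilon_1 \Pi_X(\epsilon_1)$.

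Finally, since this bound holds for every $c \in S_2$, taking the infimum over $c\in S_2$ and using $\epsilon_2>0$ yields $\epsilon_2 \Pi_X(\epsilon_2) = \inf\{c\epsilon_2 : c\in S_2\} \ge \epsilon_1 \Pi_X(\epsilon_1)$, which is the claim. I do not expect a serious obstacle: the argument is essentially a one-line change of variables, and the only points needing attention are the boundary value $c\epsilon_2 = 1$ (where $\ES_{1-c\epsilon_2}(X)=\ES_0(X)=\E[X]$, still covered since the corresponding condition is the standing hypothesis), the check that $c'$ stays in $[1,1/\epsilon_1]$, and the disposal of the infinite-PELVE case at the start. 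Note that the only structural input used is the monotonicity of $\VaR_p(X)$ in $p$; monotonicity or continuity of $\ES$ is not required.
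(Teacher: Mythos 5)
Your argument is correct and is essentially the same as the paper's: the paper substitutes $k=c\epsilon$ to rewrite $\epsilon\Pi_X(\epsilon)=\inf\{k\in[\epsilon,1]:\ES_{1-k}(X)\le\VaR_{1-\epsilon}(X)\}$ and shows the set inclusion $A(\epsilon_2)\subseteq A(\epsilon_1)$, while you perform the equivalent substitution $c'=(\epsilon_2/\epsilon_1)c$ and verify directly that it sends $S_2$ into $S_1$; both rest on the same two ingredients, namely preservation of the product $c\epsilon$ and monotonicity of $\VaR_{1-\epsilon}(X)$ in $\epsilon$.
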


By Lemma \ref{lem:bound_1},  for
given $\epsilon_1, \epsilon_2$ and $c_1$, the value of $c_2$ is bounded below by both 1 and $ {c_1\epsilon_1}/{\epsilon_2}$.
We also note that if $c_2=1$, then $p\mapsto \VaR_{p}(X)$ is constant on $(0,\epsilon_2)$, which implies $c_1=1$. In Appendix \ref{app:proof_sec2},
Proposition \ref{lem:low_bound} shows that the above lower bound is achieved if and only if $\VaR_{\epsilon_1}(X)=\VaR_{\epsilon_2}(X)$.

From the definition of the PELVE and Lemma \ref{lem:bound_1}, for $\epsilon_1<\epsilon_2$, the possible choices of $(\epsilon_1,c_1)$ and $(\epsilon_2,c_2)$ should satisfy $1\le c_1\le1/\epsilon_1$, $1\le c_2\le 1/\epsilon_2$ and $c_1\epsilon_1\le c_2\epsilon_2$.
We denote by $\Delta$ the admissible set for $(\epsilon_1,c_1,\epsilon_2 ,c_2) $, that is,
$$\Delta=\{(\epsilon_1,c_1,\epsilon_2 ,c_2) \in ((0,1)\times[1, \infty))^2: \epsilon_1<\epsilon_2,~ c_1\epsilon_1\le 1,~ c_2\epsilon_2\le 1,~c_1\epsilon_1\le c_2\epsilon_2  \}.
$$
We illustrate the possible region of $(c_1,c_2)$ with given $\epsilon_1$ and $\epsilon_2$ in Figure \ref{fig:region}. We divide the region into 5 cases and calibrate the distribution for each case.

\begin{figure}[!htb]
\centering
\caption{Admissible region of $(c_1,c_2)$}
\label{fig:region}
\begin{tikzpicture}
[scale=.4,description/.style=auto]

\draw[-latex] (0,-1) -- (0,16);
\draw[-latex] (-1,0) -- (18,0);
\node[left] at (0,16){{\small $c_2$}};
\node[below] at (18,0){{\small $c_1$}};

\draw [top color=gray!50, bottom color=gray!50, white, opacity =0.5] (6,2)--(2,2)--(2,12)--(6,12)--(6,2);
\fill [white,opacity=0.3,postaction={pattern=north east lines}] (6,2)--(6,12)--(14,12)--(6,2);

\draw[thin] (2,0)--(2,14);
\draw[very thick] (2,2)--(2,12);
\draw[thin] (14,0)--(14,14);
\node[below] at (2,0){{\small 1}};
\node[below] at (14,0){{\small $1/\epsilon_1$}};

\draw[thin] (0,2)--(2,2);
\draw[dash pattern={on 0.84pt off 2.51pt}] (2,2)--(14,2);
\draw[thin] (0,12)--(16,12);
\node[left] at (0,2){{\small 1}};
\node[left] at (0,12){{\small $1/\epsilon_2$}};

\draw[very thick] (6,2)--(14,12);
\node[below] at (6,0){{\small $\epsilon_2/\epsilon_1$}};
\draw[dash pattern={on 0.84pt off 2.51pt}] (6,0)--(6,12);

\node at (2,2){$\bullet$};
\path[<-, draw,  thick, dashed] (2,2)
      to[out=270, in=180] (-1,1)
      node[left] {\small Case 1};
\path[<-, draw,  thick, dashed] (2,7)
      to[out=200, in=45] (-1,5)
      node[left] {\small Case 2};
\path[<-, draw,  thick, dashed] (10,7)
      to[out=300, in=90] (12,4)
      node[below] {\small Case 3};
\node at (4,7){{\small Case 4}};
\node at (8.7,9){{\small Case 5}};
\end{tikzpicture}

\end{figure}
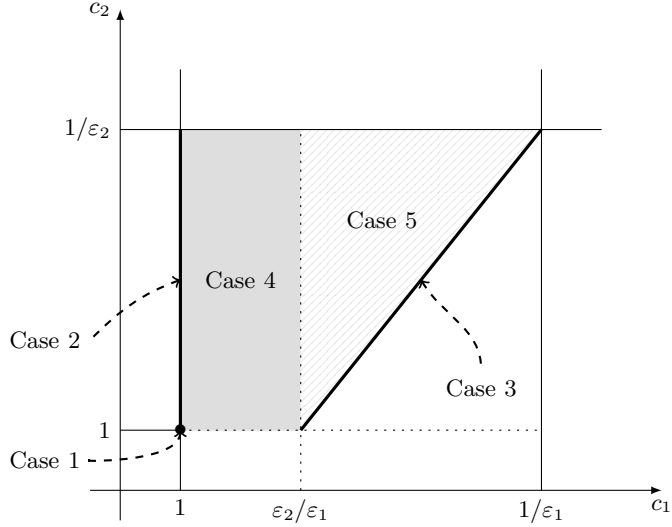

The calibration process is to construct a continuous and decreasing  quantile function that can satisfy two equivalent conditions between $\VaR$ and $\ES$, which are
\begin{equation}\label{eq:VaR-ES}\ES_{c_1\epsilon_1}(X)=\VaR_{\epsilon_1}(X) \text{~~~ and ~~~} \ES_{c_2\epsilon_2}(X)=\VaR_{\epsilon_2}(X).\end{equation}
As we can see,  only  the values of $\VaR_{\epsilon}(X)$ for $\epsilon \in (0,c_2\epsilon_2]$ matters for the equivalent condition \eqref{eq:VaR-ES}. Therefore, we focus on constructing $\VaR_{\epsilon}(X)$ for $\epsilon \in (0,c_2\epsilon_2]$. In addition, we want a continuous calibrated quantile function.

The case $c_1=1$ or $c_2=1$ is special, which means that $\VaR_{\epsilon}(X)$ is a constant on the tail part. If $c_1>1$, we can set the tail distribution as the generalized Pareto distribution from Table \ref{constant-PELVE} such that $\Pi_{X}(\epsilon_1)=c_1$.

For  $\mathbf z=(\epsilon_1,c_1,\epsilon_2,c_2) \in \Delta$,
we will construct a class of functions, denoted by $G_{\mathbf z}$, in five different cases according to Figure \ref{fig:region}.
The function $t\mapsto G_{\mathbf z}(t)$  will be our desired quantile function. If $c_1=1$, let $\hat k,\tilde k \in \R$ be any two constants satisfying $\tilde k<\hat k$.
If $c_1>1$, let $\xi \in (-\infty,1)$ be such that $(1-\xi)^{-{1}/{\xi}}=c_1$,
$$
k(\epsilon)=\left\{\begin{aligned}
&\frac{1}{\xi}(\epsilon^{-\xi}-1), &\xi\neq0,\\
&- \log(\epsilon), &\xi=0,
\end{aligned}\right.$$
and $k=\int_0^{\epsilon_1} k(\epsilon) \d\epsilon$. 
We first claim that the function $G_{\mathbf z}$   can be any arbitrary continuous and decreasing function on $[c_2\epsilon_2,1)$ since the values of $\VaR_{t}(X)$ for $t\in [c_2\epsilon_2,1)$ do not affect its PELVE at $\epsilon_1$ and $\epsilon_2$.  The value of $G_{\mathbf z}$ on $(0,c_2\epsilon_2]$ is given by
\begin{enumerate}[(i)]
 \item \underline{Case 1}, $c_2=1$ (which implies $c_1=1$):
     $G_{\mathbf z}(\epsilon)=\hat k;$
 \item \underline{Case 2}, $c_1=1$ and $1<c_2\le 1/\epsilon_2$:
    $$
G_{\mathbf z}(\epsilon)=\left\{\begin{aligned}
&\hat k, &\epsilon \in (0,\epsilon_1),\\
&a_1\epsilon+b_1, &\epsilon \in [\epsilon_1,\epsilon_2),\\
&a_2\epsilon+b_2, &\epsilon \in [\epsilon_2,c_2\epsilon_2],
\end{aligned}\right.
~~~\text{where} ~~~
\left\{
\begin{aligned}
&a_1=\frac{\tilde k-\hat k}{\epsilon_2-\epsilon_1},\\
&b_1=\hat k-a_1\epsilon_1,\\
&a_2=\frac{(\tilde k-\hat k)(\epsilon_1+\epsilon_2)}{(c_2\epsilon_2-\epsilon_2)^2},\\
&b_2=\tilde k-a_2\epsilon_2;
\end{aligned}\right.$$

 \item \underline{Case 3}, $\epsilon_2/\epsilon_1
 <c_1\le 1/\epsilon_1$ and $c_2={c_1\epsilon_1}/{\epsilon_2}$:
     $$
G_{\mathbf z}(\epsilon)=\left\{\begin{aligned}
&k(\epsilon), &\epsilon \in (0,\epsilon_1),\\
&k(\epsilon_1), &\epsilon \in [\epsilon_1,\epsilon_2),\\
&a\epsilon+b, &\epsilon \in  [\epsilon_2,c_2\epsilon_2],
\end{aligned}\right.
\text{~~~where~~~}
\left\{\begin{aligned}
&a=\frac{2(k(\epsilon_1)\epsilon_1-k)}{(c_2\epsilon_2-\epsilon_2)^2},\\
&b=k(\epsilon_1)-a\epsilon_2;
\end{aligned}\right.$$

  \item \underline{Case 4}, $1<c_1\le \epsilon_2/\epsilon_1$ and $1<c_2\le 1/\epsilon_2$:
     $$
G_{\mathbf z}(\epsilon)=\left\{\begin{aligned}
&k(\epsilon), &\epsilon \in (0,c_1\epsilon_1),\\
&a_1\epsilon+b_1, &\epsilon \in [c_1\epsilon_1, \epsilon_2),\\
&a_2 \epsilon+b_2, &\epsilon \in [\epsilon_2,c_2\epsilon_2],
\end{aligned}\right.
~~~\text{where}~~~
\left\{\begin{aligned}
&a_1=-(c_1\epsilon_1)^{-\xi-1},\\
&b_1=k(c_1\epsilon_1)-a_1c_1\epsilon_1,\\
&a_2=\frac{a_1(\epsilon_2^2-(c_1\epsilon_1)^2)+2(k(c_1\epsilon_1)-k(\epsilon_1))c_1\epsilon_1}{(c_2\epsilon_2-\epsilon_2)^2},\\
&b_2=a_1\epsilon_2+b_1-a_2\epsilon_2;\\
\end{aligned}\right.$$

 \item \underline{Case 5}, $\epsilon_2/\epsilon_1<c_1\le 1/\epsilon_1$ and ${c_1\epsilon_1}/{\epsilon_2}<c_2\le1/\epsilon_2$:
     $$
G_{\mathbf z}(\epsilon)=\left\{\begin{aligned}
&k(\epsilon), &\epsilon \in (0,\epsilon_1),\\
&a_1\epsilon+b_1, & \epsilon \in [\epsilon_1, \epsilon_2),\\
&a_1\epsilon_2+b_1, &\epsilon \in [\epsilon_2,c_1\epsilon_1),\\
&a_2\epsilon+b_2, &\epsilon \in [c_1\epsilon_1,c_2\epsilon_2],
\end{aligned}\right.~~~\text{where}~~~
\left\{
\begin{aligned}
&a_1=\frac{k(\epsilon_1)\epsilon_1-k}{(\epsilon_2-\epsilon_1)(c_1\epsilon_1-1/2(\epsilon_1+\epsilon_2))},\\
&b_1=k(\epsilon_1)-a_1\epsilon_1,\\
&a_2=\frac{2c_1\epsilon_1(a_1\epsilon_2+b_1-k(\epsilon_1))}{(c_1\epsilon_1-c_2\epsilon_2)^2},\\
&b_2=a_1\epsilon_2+b_1-a_2c_1\epsilon_1.
\end{aligned}\right.~~~$$
\end{enumerate}

An illustration of the functions $G_{\mathbf z}$ on $[0, c_2\epsilon_2]$ in Case 2 to Case 5 is presented  in Figure \ref{fig:calibration},
and we omit Case 1 in which $G_{\mathbf z}$ is a constant function on $[0, c_2\epsilon_2]$.
\begin{figure}[t]
\centering
\caption{An illustration of $G_{\mathbf z}$ in cases 2 to 5}
\label{fig:calibration}
\begin{subfigure}[b]{0.47\textwidth}
\centering
\begin{tikzpicture}[scale=.93,description/.style=auto]
\draw[<->] (0,5) -- (0,0) -- (5,0);
\node[below] at (0,0) {$0$};
\node[below] at (1.5,0) {$\epsilon_1$};
\node[below] at (3,0) {$\epsilon_2$};
\node[below] at (4.5,0) {$c_2\epsilon_2$};
\node[left] at (0,5) {$G_{\mathbf z}(\epsilon)$};
\node[left] at (0,4) {$G_{\mathbf z}(\epsilon_1)$};
\draw  (0,4)--(1.5,4);
\draw  (1.5,4)--(3,3);
\draw  (3,3)--(4.5,0.5);
\draw[dash pattern={on 0.84pt off 2.51pt}] (1.5,0)--(1.5,4);
\draw[dash pattern={on 0.84pt off 2.51pt}] (3,0)--(3,3);
\draw[dash pattern={on 0.84pt off 2.51pt}] (4.5,0)--(4.5,0.5);
\end{tikzpicture}

\caption{\footnotesize The function $G_{\mathbf z}$ in Case 2}
\end{subfigure}
~
\begin{subfigure}[b]{0.47\textwidth}
\centering
\begin{tikzpicture}[scale=.93,description/.style=auto]
\draw[<->] (0,5) -- (0,0) -- (5,0);
\node[below] at (0,0) {$0$};
\node[below] at (1.5,0) {$\epsilon_1$};
\node[below] at (3,0) {$\epsilon_2$};
\node[below] at (4.5,0) {$c_2\epsilon_2$};
\node[left] at (0,5) {$G_{\mathbf z}(\epsilon)$};
\node[left] at (0,3) {$\begin{aligned} &G_{\mathbf z}(\epsilon_1)\\ &G_{\mathbf z}(\epsilon_2)\end{aligned}$};
\draw (0.2,5) .. controls (0.5,3.7) and (1,3.5).. (1.5,3);
\draw  (1.5,3)--(3,3);
\draw  (3,3)--(4.5,0.5);
\draw[dash pattern={on 0.84pt off 2.51pt}] (1.5,0)--(1.5,3);
\draw[dash pattern={on 0.84pt off 2.51pt}] (3,0)--(3,3);
\draw[dash pattern={on 0.84pt off 2.51pt}] (4.5,0)--(4.5,0.5);
\draw[dash pattern={on 0.84pt off 2.51pt}] (0,3)--(1.5,3);
\end{tikzpicture}
\caption{\footnotesize The function $G_{\mathbf z}$ in Case 3}
\end{subfigure}
~
\begin{subfigure}[b]{0.47\textwidth}
\centering
\begin{tikzpicture}[scale=.93,description/.style=auto]
\draw[<->] (0,5) -- (0,0) -- (5,0);
\node[below] at (0,0) {$0$};
\node[below] at (1.5,0) {$\epsilon_1$};
\node[below] at (2.5,0) {$c_1\epsilon_1$};
\node[below] at (3.5,0) {$\epsilon_2$};
\node[below] at (4.5,0) {$c_2\epsilon_2$};
\node[left] at (0,5) {$G_{\mathbf z}(\epsilon)$};
\node[left] at (0,3) {$ G_{\mathbf z}(\epsilon_1)$};
\draw (0.2,5) .. controls (1,3.5) and (2,3.1).. (2.5,3);
\draw  (2.5,3)--(3.5,2.7);
\draw  (3.5,2.7)--(4.5,0.5);
\draw[dash pattern={on 0.84pt off 2.51pt}] (1.5,0)--(1.5,3.3);
\draw[dash pattern={on 0.84pt off 2.51pt}] (2.5,0)--(2.5,3);
\draw[dash pattern={on 0.84pt off 2.51pt}] (3.5,0)--(3.5,2.7);
\draw[dash pattern={on 0.84pt off 2.51pt}] (4.5,0)--(4.5,0.5);
\draw[dash pattern={on 0.84pt off 2.51pt}] (0,3)--(2.5,3);
\end{tikzpicture}
\caption{\footnotesize The function $G_{\mathbf z}$ in Case 4}
\end{subfigure}
~
\begin{subfigure}[b]{0.47\textwidth}
\centering
\begin{tikzpicture}[scale=.93,description/.style=auto]
\draw[<->] (0,5) -- (0,0) -- (5,0);
\node[below] at (0,0) {$0$};
\node[below] at (1.5,0) {$\epsilon_1$};
\node[below] at (2.5,0) {$\epsilon_2$};
\node[below] at (3.5,0) {$c_1\epsilon_1$};
\node[below] at (4.5,0) {$c_2\epsilon_2$};
\node[left] at (0,5) {$G_{\mathbf z}(\epsilon)$};
\node[left] at (0,2) {$\begin{aligned} &G_{\mathbf z}(c_1\epsilon_1)\\ &G_{\mathbf z}(\epsilon_2)\end{aligned}$};
\draw (0.2,5) .. controls (0.5,3.5) and (1,3.1).. (1.5,3);
\draw  (1.5,3)--(2.5,2);
\draw  (2.5,2)--(3.5,2);
\draw  (3.5,2)--(4.5,0.5);
\draw[dash pattern={on 0.84pt off 2.51pt}] (1.5,0)--(1.5,3);
\draw[dash pattern={on 0.84pt off 2.51pt}] (2.5,0)--(2.5,2);
\draw[dash pattern={on 0.84pt off 2.51pt}] (3.5,0)--(3.5,2);
\draw[dash pattern={on 0.84pt off 2.51pt}] (4.5,0)--(4.5,0.5);
\draw[dash pattern={on 0.84pt off 2.51pt}] (0,2)--(2.5,2);
\end{tikzpicture}
\caption{ \footnotesize The function $G_{\mathbf z}$ in Case 5}
\end{subfigure}
\end{figure}
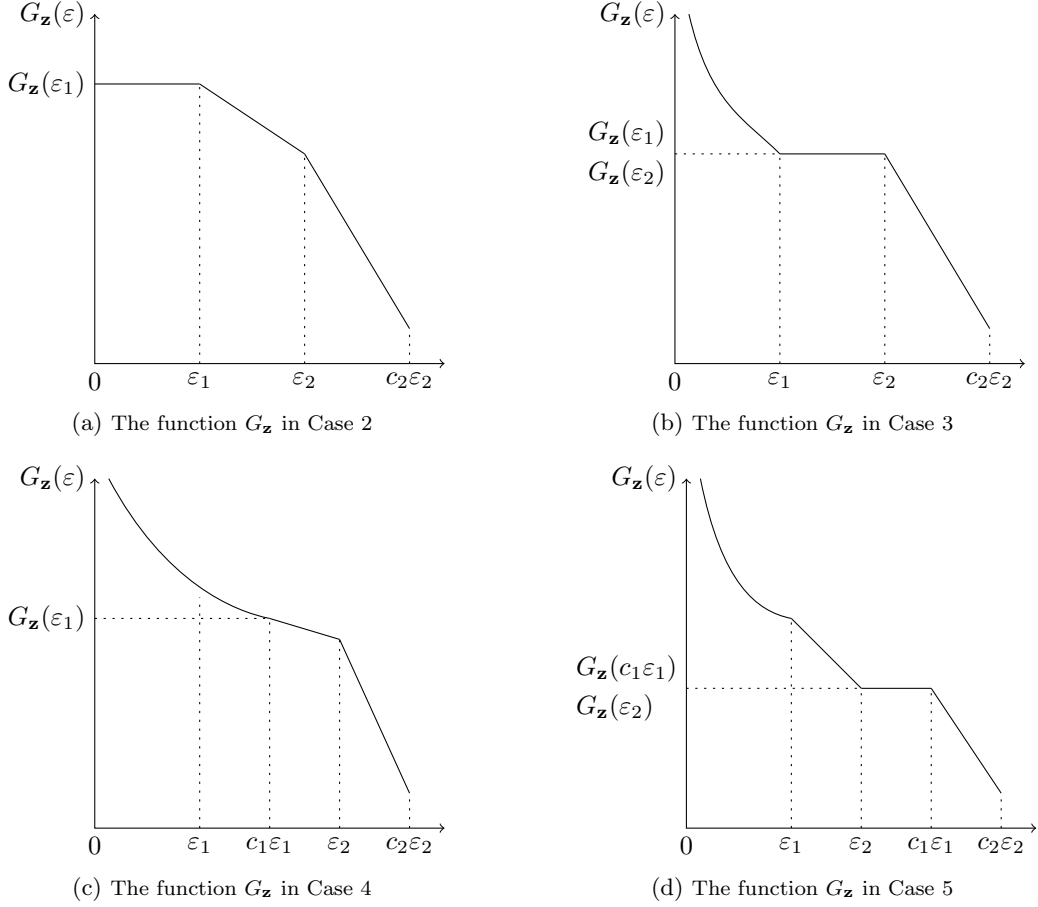

\begin{theorem}\label{thm:two_points}
For $\mathbf z=(\epsilon_1,c_1,\epsilon_2,c_2) \in \Delta$, the random variable $X$ with a continuous quantile function given by $t\mapsto \VaR_{t}(X) =G_{\mathbf z}(t)$  satisfies $\Pi_{X}({\epsilon_1})=c_1$ and $\Pi_{X}(\epsilon_2)=c_2$.
\end{theorem}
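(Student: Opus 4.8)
The plan is to reduce both assertions $\Pi_X(\epsilon_1)=c_1$ and $\Pi_X(\epsilon_2)=c_2$ to a single integral identity linking the quantile function to its running average, and then to verify that identity separately in each of the five cases by elementary integration.

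Write $Q:=G_{\mathbf z}$. Once $Q$ is checked to be continuous and non-increasing on $(0,1)$, the map $t\mapsto Q(1-t)$ is a genuine quantile function of some $X\in L^1$, and for $c\epsilon\le1$ the substitution $s=1-q$ yields
\[
\ES_{1-c\epsilon}(X)=\frac{1}{c\epsilon}\int_{1-c\epsilon}^{1}\VaR_q(X)\,\d q=\frac{1}{c\epsilon}\int_{0}^{c\epsilon}Q(s)\,\d s,\qquad\VaR_{1-\epsilon}(X)=Q(\epsilon),
\]
so the equality $\ES_{1-c\epsilon}(X)=\VaR_{1-\epsilon}(X)$ of \eqref{eq:VaR-ES} is equivalent to
\begin{equation}\label{eq:reduced}
\int_{0}^{c\epsilon}Q(s)\,\d s=c\epsilon\,Q(\epsilon).
\end{equation}
The running average $p\mapsto\frac1p\int_0^pQ(s)\,\d s$ is non-increasing and strictly decreasing at every $p$ for which $Q$ is not constant on $[0,p]$. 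Consequently, since $\ES_{1-c_i\epsilon_i}(X)\ge\ES_0(X)=\E[X]$ forces $\VaR_{1-\epsilon_i}(X)\ge\E[X]$ (so the PELVE at $\epsilon_i$ is finite), and since $Q$ not constant on $(0,\epsilon_i]$ makes \eqref{eq:reduced} (at $\epsilon=\epsilon_i$) have a unique root in $[1,1/\epsilon_i]$, that root is $\Pi_X(\epsilon_i)$ by the uniqueness statement of \cite{LW2019}. When instead $c_i=1$ (Cases~1 and~2 at $\epsilon_1$, Case~1 at $\epsilon_2$), $Q$ is constant on $(0,\epsilon_i]$ and then $\ES_{1-\epsilon_i}(X)=\VaR_{1-\epsilon_i}(X)$ gives $\Pi_X(\epsilon_i)=1$ at once.

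It then remains to verify, in each case, that (a)~$G_{\mathbf z}$ is continuous and non-increasing on $(0,1)$ and strictly decreasing on $(0,\epsilon_1]$ whenever $c_1>1$; and (b)~\eqref{eq:reduced} holds for $(c,\epsilon)=(c_1,\epsilon_1)$ and for $(c,\epsilon)=(c_2,\epsilon_2)$. For (a): continuity at the interior breakpoints is built into the intercepts $b,b_1,b_2$; the breakpoints are correctly ordered under the $\Delta$-constraints (e.g.\ $0<\epsilon_1<c_1\epsilon_1\le\epsilon_2<c_2\epsilon_2$ in Case~4); and the affine slopes $a,a_1,a_2$ are negative, using that $k$ is strictly decreasing and that $k=\int_0^{\epsilon_1}k(s)\,\d s>\epsilon_1k(\epsilon_1)$, together with case-specific inequalities (in Case~5 one also uses $c_1\epsilon_1>\tfrac12(\epsilon_1+\epsilon_2)$, which follows from $c_1>\epsilon_2/\epsilon_1$), so that each affine piece joins the adjacent Pareto or constant piece without increasing; on $[c_2\epsilon_2,1)$ one takes any continuous decreasing extension. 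For (b): every case rests on two facts --- the definition $\int_0^{\epsilon_1}k(s)\,\d s=k$, and the Pareto identity $\int_0^{c_1s}k(t)\,\d t=c_1s\,k(s)$ for admissible $s$, which is precisely the statement of Table~\ref{constant-PELVE} that $\text{GPD}(\xi)$ has constant PELVE $(1-\xi)^{-1/\xi}=c_1$. Substituting the explicit coefficients and using, for an affine segment on $[v,w]$, the elementary identity $\int_v^w(at+b)\,\d t=\tfrac12a(w-v)^2+(w-v)(av+b)$, each integral in \eqref{eq:reduced} telescopes to the claimed right-hand side.

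I expect the main obstacle to be purely organizational: executing step~(b) cleanly across all five cases, and in particular in Case~5, where $G_{\mathbf z}$ is a Pareto piece followed by an affine piece, a plateau, and a second affine piece, so that the first instance of \eqref{eq:reduced} must be checked on $[0,c_1\epsilon_1]$ (where the factor $c_1\epsilon_1-\tfrac12(\epsilon_1+\epsilon_2)$ in $a_1$ is exactly what makes the integral cancel) and the second on $[0,c_2\epsilon_2]$ (with the analogous role for $a_2$). It is also worth recording that the five cases, together with their common boundaries, cover every $\mathbf z\in\Delta$ except those with $1<c_1$, $c_2=1$ and $c_1\ne\epsilon_2/\epsilon_1$; for such $\mathbf z$ no random variable can satisfy the two constraints, since $\Pi_X(\epsilon_2)=1$ forces $\VaR_{1-\epsilon}(X)$ to be constant on $(0,\epsilon_2]$ and hence $\Pi_X(\epsilon_1)=1$.
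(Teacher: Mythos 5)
Your plan coincides with the paper's proof in substance: both verify the equivalence conditions $\ES_{1-c_i\epsilon_i}(X)=\VaR_{1-\epsilon_i}(X)$ case by case (your integral identity $\int_0^{c\epsilon}Q(s)\,\d s=c\epsilon\,Q(\epsilon)$ is exactly this condition multiplied by $c\epsilon$), preceded by the same uniqueness remark that the paper states before its case analysis, and the two facts you single out --- $\int_0^{\epsilon_1}k(\epsilon)\,\d\epsilon=k$ and the GPD identity $\int_0^{c_1 s}k(t)\,\d t=c_1 s\,k(s)$ --- are precisely what make the paper's five integral computations telescope. What you defer (substituting the coefficients $a,a_i,b,b_i$ and checking the five integrals) is the entirety of the paper's written argument, so this reads as a correct plan rather than a finished proof.

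One correction to your closing observation. The endpoint $c_1=\epsilon_2/\epsilon_1$, $c_2=1$ should also be counted among the unreachable points of $\Delta$: your own reasoning ($\Pi_X(\epsilon_2)=1$ forces $\VaR_{1-p}(X)$ to be constant on $(0,\epsilon_2]$, hence $\Pi_X(\epsilon_1)=1$) applies there as well, and indeed the Case 3 formula for the slope $a$ has $(c_2\epsilon_2-\epsilon_2)^2=0$ in the denominator when $c_2=1$. So Case 3 must implicitly carry $c_2>1$, and the slice of $\Delta$ the theorem silently excludes is all of $\{c_2=1,\ c_1>1\}$, which is the dashed horizontal segment in Figure~\ref{fig:region}.
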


\begin{remark}\label{rem:cannot}
As we can see from Figure \ref{fig:calibration}, some parts of the calibrated quantile function may be flat, corresponding to the existence of  atoms in the distribution.
This may be considered as undesirable from a modeling perspective, and indeed it  is forced by the boundary cases of $(\epsilon_1,c_1,\epsilon_2,c_2) \in \Delta$ in Figure \ref{fig:region}. The flat parts in Cases 1 to 3 are necessary due to Propositions \ref{lem:low_bound}.  
On the other hand, the flat part  in Case 5 can be replaced by a strictly decreasing function. For instance, we can replace the flat part with a strictly decreasing linear segment as long as $c_2$ satisfies the bounds shown in Propositions  \ref{pro:c_2_bound} in Appendix \ref{app:proof_sec2}. Another way is to set $\VaR_\epsilon(X)$ as $k(\epsilon)$ for $\epsilon \in (0, c_1\epsilon_1)$ if $c_2\le \left(c_1\epsilon_1(\epsilon_1^{-\xi}-(c_1\epsilon_1)^{-\xi})\right)/\left(\epsilon_2^{-\xi}-(c_1\epsilon_1)^{-\xi}\right)$, and this choice is applied  in the numerical examples in the Introduction and Section \ref{sec:6}. The interested reader can see Propositions \ref{lem:low_bound} and \ref{pro:c_2_bound} in Appendix \ref{app:proof_sec2}, where we show that a strictly decreasing quantile function cannot attain the boundary cases $(\epsilon_1,c_1,\epsilon_2,c_2) $, and hence the flat parts are necessary to include and unify these cases.
\end{remark}

We can easily get the distribution of $X$ from $\VaR_{\epsilon}(X)$. As the PELVE is scale-location invariant, we can scale or move the distribution we get to match more information. For example, if $\VaR_{\epsilon_1}(X)$ and $\VaR_{\epsilon_2}(X)$ are given, we can choose two constants $\lambda$ and $\mu$ such that $\lambda X+\mu$  matches the specified VaR values.
In a similar spirit, the calibration problem can be extended to calibrate the distributions from some given $\ES$ and $\VaR$ values. The two points calibration problem can be regarded as given two $\ES$ and $\VaR$ values.  Calibrating from only $\ES$ or $\VaR$ would be easy. However, the choices of $\ES$ values will also be limited by $\VaR$ values if we consider them at the same time, which is the same as the choice of $c_1, c_2$ as we discussed in this section.

\subsection{Calibration from an n-point constraint}

As we see above, the PELVE calibration problem is quite technical even when only two points on the PELVE curve are given. By extending the constraint to more than two points, the problem will in general become much more complicated. We briefly discuss this problem in this section.

 For the $n$-point constraint problem, we first need to figure out the admissible set for $(\epsilon_i,c_i)_{i=1,\dots, n}$. By Lemma \ref{lem:bound_1}, the admissible set for the  $n$-point calibration problem is a subset of $$\{(\epsilon_i, c_i)_{i=1, \dots, n}: 0<\epsilon_1<\dots< \epsilon_n< 1,~ c_1,\dots,c_n\ge 1, ~0<c_1\epsilon_1 \le \dots\le c_n\epsilon_n\le 1\}.$$
 However, it is not clear whether each point in the above set  is admissible. There are other constraints for the admissible points such as  Proposition \ref{pro:c_2_bound}. Once the admissible set is determined, we need to divide the admissible set according to the position of $\epsilon_i$ and  $c_i\epsilon_i$, $i=1,\dots, n$. Furthermore, the case $c_i=1$ and $c_i\epsilon_i=c_j\epsilon_j$ for $i,j =1, \dots, n$ need special attention as Cases 1, 2 and 3 in the two-point constraint problem. For instance, in the three-point constraint problem, we need to discuss over 10 separate cases.
 
Below, we only discuss some special cases of $(\epsilon_i,c_i)_{i=1,\dots, n}$. First, 
if  $c_n=1$,  then the problem becomes trivial, as the calibrated quantile functions satisfy $\VaR_t(X)=\hat k$ for some $\hat k \in \R$ in $[0,c_n \epsilon_n]$.

For the case $c_k\epsilon_k>\epsilon_k\ge c_{k-1}\epsilon_{k-1}$ for $k=3, \dots, n$, we can set the calibrated quantile function in $(0, c_{n}\epsilon_{n}]$ recursively. 
 This is because such a configuration of $(\epsilon_i,c_i)_{i=1,\dots, n}$ allows for separation of the constraints, in the sense that we can adjust the values of $\VaR_{t}$ for $ t\in [\epsilon_k,c_k\epsilon_k]$ to match PELVE at $\epsilon_{k}$ without disturbing  $\VaR_{t}$ for $t\le c_{k-1}\epsilon_{k-1}$.
  Let $\VaR^{k}_t(X)$ be the calibrated quantile function from the $k$-point constraint problem for $k=2,\dots,n$ where $\VaR^2_t(X)$  follows Theorem \ref{thm:two_points}. The calibrated quantile function for the $n$-point constraint problem is 
$$\VaR^{k}_t(X)=\left\{\begin{aligned}
&\VaR^{k-1}_t(X), ~ &t \in [0, c_{k-1}\epsilon_{k-1}],\\
&a_{k-1}t+b_{k-1}, ~ &t \in (c_{k-1}\epsilon_{k-1}, \epsilon_{k}],\\
&a_kt+b_k,~ &t \in (\epsilon_{k}, c_{k}\epsilon_{k}],
\end{aligned}\right.~~ \mbox{where}~~ \left\{
\begin{aligned}
&a_{k}=\frac{a_{k-1}(\epsilon_k^2+c_{k-1}^2\epsilon_{k-1}^2-2c_{k-1}\epsilon_{k-1}^2)}{(c_{k}\epsilon_{k}-\epsilon_{k})^2},\\
&b_k=a_{k-1}\epsilon_k+b_{k-1}-a_k\epsilon_{k}.
\end{aligned}\right.$$
In particular, for $n=3$, and assuming  $c_3\epsilon_3>\epsilon_3\ge c_2\epsilon_2$, the calibrated function is given by, with  $\mathbf z=(\epsilon_1, c_1\epsilon_1,\epsilon_2,c_2\epsilon_2)\in \Delta$,
$$\VaR_t(X)=\left\{\begin{aligned}
&G_{\mathbf z} (t), ~ &t \in [0, c_2\epsilon_2],\\
&a_2t+b_2, ~ &t \in (c_2\epsilon_2, \epsilon_3],\\
&a_3t +b_3,~ &t \in (\epsilon_3, c_3\epsilon_3],
\end{aligned}\right.~~ \mbox{where}~~\left\{
\begin{aligned}
&a_2=\frac{a_1(\epsilon_2^2-(c_1\epsilon_1)^2)+2(k(c_1\epsilon_1)-k(\epsilon_1))c_1\epsilon_1}{(c_2\epsilon_2-\epsilon_2)^2},\\
&b_2=-(c_1\epsilon_1)^{-\xi-1}(\epsilon_2-c_1\epsilon_1)-k(c_1\epsilon_1)\\
&a_{3}=\frac{a_{2}(\epsilon_3^2+c_{2}^2\epsilon_{2}^2-2c_{2}\epsilon_{2}^2)}{(c_{3}\epsilon_{3}-\epsilon_{3})^2},\\
&b_3=a_{2}\epsilon_3+b_{2}-a_3\epsilon_{3}.
\end{aligned}\right.$$  In Figure \ref{fig:three-point}, we show the calibrated quantile function for the case $(\epsilon_1, \epsilon_2, \epsilon_3)=(0.005, 0.025, 0.1)$ and $(c_1, c_2, c_3)=(4, 3, 2.5)$. Note that the condition $c_2\epsilon_2 \le \epsilon_3$ is needed here.

\begin{figure}[htbp]
   \caption{ Calibrated quantile function when $(\epsilon_1, \epsilon_2, \epsilon_3)=(0.005, 0.025, 0.1)$ and $(c_1, c_2, c_3)=(4, 3, 2.5 )$}\label{fig:three-point}
    \centering
    \includegraphics[scale=0.4]{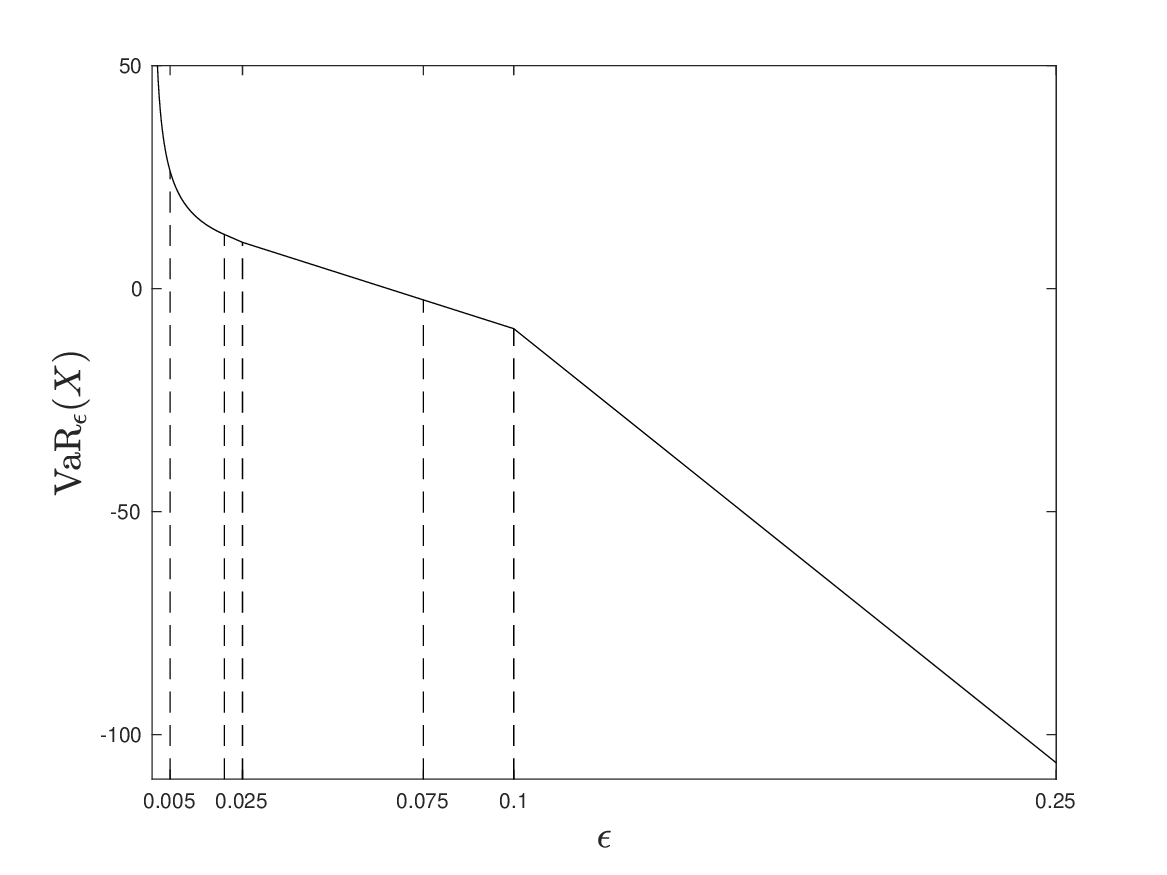}
\end{figure}

Although we cannot solve the $n$-point constraint problem in general, we can instead discuss   calibration  from a given PELVE curve, which is the problem addressed  in the next section.

\section{Calibration from a curve constraint}\label{sec:function}

By the location-scale invariance properties of the PELVE, we know that the solution cannot be unique. Conversely, it would be interesting to ask whether all solutions can be linearly transformed from a particular solution; that is, for a given function $\epsilon \mapsto \Pi(\epsilon)$, whether the set
    $\left\{ X\in\X:\Pi_{X} =\Pi \right\}$
is a location-scale class.
This question, as well as identifying $X$ satisfying $\Pi_{X}=\Pi$, is the main objective of this section.

\subsection{PELVE and dual PELVE}\label{var-es}

First, we note that calibrated distributions from an entire PELVE curve $\epsilon \mapsto \Pi(\epsilon)$ on $ (0,1)$ would be unnatural, because the existence of the PELVE requires $\E[X]\le \VaR_{\epsilon}(X)$ which may not hold  for $\epsilon$ not very small. Thus, the PELVE curve $\Pi_{X}$ does not behave well on  some parts of $(0,1)$.
To address this issue, we  introduce a new notion called the dual PELVE and an integral equation which can help us to calibrate the distribution by differential equations.
The dual PELVE is defined by moving the multiplier  in PELVE from the $\ES$ side to the $\VaR$ side.
\begin{definition}
For $X\in L^{1}$, the dual PELVE function of $X$ at level $\epsilon\in(0,1]$ is defined as
$$
\pi_{X}(\epsilon) =\inf\left\{ d\ge1:\ES_{\epsilon}(X)\le\VaR_{\epsilon/d}(X)\right\} ,~~\epsilon\in(0,1].
$$
\end{definition}
The existence and uniqueness of $\pi_{X}(\epsilon)$ can be shown in the same way as the existence and uniqueness of the PELVE. There are advantages and disadvantages of working with both notions; see \citet[Remark 2]{LW22}. In our context, the main advantage of using the dual PELVE is that $\pi_{X}(\epsilon)$ is finite for all $\epsilon\in(0,1]$, while $\Pi_{X}(\epsilon)$  is finite only when $\E[X]\le\VaR_{\epsilon}(X)$.
 
Note that for $X$ with a discontinuous quantile function,
there may not exist $d $ such that $\ES_{\epsilon}(X)= \VaR_{\epsilon/d}(X)$.
In order to guarantee the  above equivalence, we make the following assumption for the quantile function, represented by general function $f$.
\begin{assumption}\label{ass:1}
The function $f$ is strictly decreasing and continuous, and $\int_{0}^{1}\left|f(s)\right|\mathrm{d}s<\infty$.
\end{assumption}
Let $\X$ be the set of $X\in L^1$ with quantile function satisfying Assumption \ref{ass:1}. The  requirement that  the quantile function of $X$ is continuous and strictly decreasing is equivalent to that the distribution function is continuous and strictly increasing in $(\essinf(X), \esssup(X))$; see \cite{EH13}.
We limit our discussion to random variables $X \in \X$, which include the most common models in risk management. 
 \begin{proposition}\label{pro:dual}
 For $X$ with quantile function satisfying Assumption \ref{ass:1} and $\epsilon\in (0,1)$, we have $\Pi_{X}(\epsilon/\pi_{X}(\epsilon))=\pi_{X}(\epsilon)$   and  $\pi_{X}(\Pi_{X}(\epsilon)\epsilon)=\Pi_{X}(\epsilon)$  if $\E[X]\le\VaR_{\epsilon}(X)$. 
Furthermore,
$\pi_{X}(\epsilon)$ is the unique solution $d\ge1$ to the equation
$$
    \ES_{\epsilon}(X)=\VaR_{\epsilon/d}(X).
$$
\end{proposition}
It is straightforward to verify Proposition \ref{pro:dual}. By Proposition \ref{pro:dual}, we can calibrate the distribution functions from dual PELVE instead of PELVE,  and the calibrated distributions should satisfy the equation $\ES_{\epsilon}(X)=\VaR_{\epsilon/d}(X)$.

\subsection{An integral equation associated with dual PELVE}\label{abstract}
In order to calibrate distributions from the dual PELVE, we can equivalently focus on quantile functions. Let us consider $X\in \X$  and  $f(s)=\VaR_{s}(X)$.  Then, solving $\pi_{X}(\epsilon)$ is the same as solving $z$ in following equation:
\begin{equation}\label{eq:abstract}
    \int_{0}^{y}f\left(s\right)\mathrm{d}s=yf\left(zy\right)
\end{equation}
for   $y=\epsilon$. The solution is $z=1/\pi_{X}(y)$.
 As $f(s)=\VaR_{s}(X)$,  $f$ satisfies Assumption  \ref{ass:1}.
Denote by $\mathcal{C}$  the set of all $f$ satisfying Assumption \ref{ass:1}.
For any $f\in \mathcal{C}$, the existence of the solution $z$ is guaranteed by the mean-value theorem and its uniqueness is obvious.
For $y\in(0,1]$, let $z_{f}(y)$ be the solution to \eqref{eq:abstract} associated with $f$. Clearly, $z_{f}(y)\le1$ and $y\mapsto yz_{f}(y)$ is strictly increasing.
This is similar to Lemma \ref{lem:bound_1} for the two-point case.
Obviously, $z_{f}(y)$ is also location-scale invariant under linear transformation on $f \in \mathcal{C}$. That is, $z_{\lambda f+b}=z_{f}$ for $\lambda>0$ and $b\in\R$. Furthermore, $z_f$ is continuous as $f$ is continuous and strictly decreasing.
The next proposition is a simple connection between $z_f$ and $\pi_X$.

\begin{proposition}\label{prop:abstract}
     For any  $f$ satisfying Assumption \ref{ass:1}, $X=f(U)$ for some $U \sim \mathrm{U}(0,1)$ has the dual PELVE $\pi_X(y)=1/z_f(y)$ for all $y \in (0,1)$ where $z_f$ is solution to \eqref{eq:abstract}.
      For  $X$ with quantile function satisfying Assumption \ref{ass:1}, there exists  $f$ satisfying Assumption \ref{ass:1} such that $X=f(U)$ for some $U \sim \mathrm{U}(0,1)$ and the solution to \eqref{eq:abstract} is $z_f(y)=1/\pi_X(y)$ for all $y \in (0,1)$.
\end{proposition}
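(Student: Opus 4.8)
The plan is to unwind the definitions on both sides and verify that the two claims are essentially restatements of each other once we identify the right representation $X = f(U)$. The first claim is the easy direction: given $f \in \mathcal{C}$, set $X = f(U)$ with $U \sim \mathrm{U}(0,1)$. Since $f$ is strictly decreasing and continuous, $X$ has a continuous distribution function, and $\VaR_{1-s}(X) = f(s)$ for $s \in (0,1)$ (the quantile of $f(U)$ at level $1-s$ is $f(s)$ because $f$ reverses order). Then $\ES_{1-\epsilon}(X) = \frac{1}{\epsilon}\int_{1-\epsilon}^{1} \VaR_q(X)\,\d q = \frac{1}{\epsilon}\int_0^{\epsilon} f(s)\,\d s$ after the substitution $s = 1-q$. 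Comparing with equation \eqref{eq:abstract} evaluated at $y = \epsilon$: the equation says $\int_0^{\epsilon} f(s)\,\d s = \epsilon f(z_f(\epsilon)\epsilon)$, i.e. $\ES_{1-\epsilon}(X) = f(z_f(\epsilon)\epsilon) = \VaR_{1-z_f(\epsilon)\epsilon}(X)$. Matching this against the defining equation $\ES_{1-\epsilon}(X) = \VaR_{1-\epsilon/d}(X)$ for the dual PELVE, we read off $\epsilon/d = z_f(\epsilon)\epsilon$, hence $d = 1/z_f(\epsilon)$; by uniqueness of the dual PELVE solution (valid since $X \in \X$, which needs a brief check that $f(U)$ has a strictly increasing distribution function — this follows from strict monotonicity of $f$), we conclude $\pi_X(\epsilon) = 1/z_f(\epsilon)$. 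One should also note $d = 1/z_f(\epsilon) \ge 1$ since $z_f(\epsilon)\le 1$, so this is an admissible value in the infimum.

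For the second claim, given $X \in \X$, I would take $f(s) := \VaR_{1-s}(X) = F_X^{-1}(1-s)$. Because $F_X$ is continuous and strictly increasing, $F_X^{-1}$ is continuous and strictly increasing on $(0,1)$, so $f$ is continuous and strictly decreasing; and $\int_0^1 |f(s)|\,\d s = \int_0^1 |\VaR_q(X)|\,\d q < \infty$ since $X \in L^1$. Hence $f \in \mathcal{C}$. The standard quantile-coupling fact gives $X \laweq f(U)$ for $U \sim \mathrm{U}(0,1)$, so the distributional identities for $\VaR$ and $\ES$ established above apply, and the same computation shows that the solution to \eqref{eq:abstract} at $y = \epsilon$ is $z(\epsilon) = 1/\pi_X(\epsilon)$.

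The only genuinely substantive point — and the place I would be most careful — is the passage between the integral equation \eqref{eq:abstract} and the defining equation of $\pi_X$, which rests on three facts: (a) $\ES_{1-\epsilon}(X) = \frac{1}{\epsilon}\int_0^{\epsilon} f(s)\,\d s$, a change-of-variables identity; (b) $z_f(\epsilon) \le 1$, which guarantees $1/z_f(\epsilon) \ge 1$ lies in the feasible set $[1,\infty)$ over which the dual PELVE infimum is taken (this is asserted right after \eqref{eq:abstract} and follows from $f$ decreasing together with the mean-value theorem, since the average of $f$ over $(0,\epsilon)$ cannot be below $f(\epsilon)$); and (c) uniqueness of the dual PELVE solution for $X \in \X$, which turns the infimum into an exact equality and lets us identify $d$ from $\epsilon/d = z_f(\epsilon)\epsilon$. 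All three are either routine or already recorded in the text, so the proof is short; the write-up mainly needs to be explicit that $f(U)$ indeed lies in $\X$ in the first claim (strict monotonicity and continuity of $f$ force $F_X$ to be continuous and strictly increasing on the interior of the support), so that Proposition statements about dual PELVE of $X$ make sense.
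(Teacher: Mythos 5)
Your proof is correct and follows essentially the same route as the paper: both arguments hinge on the identity $\VaR_{1-s}(X)=f(s)$ for $X=f(U)$ with $f$ continuous and strictly decreasing, and then match the integral equation \eqref{eq:abstract} against the defining equation $\ES_{1-\epsilon}(X)=\VaR_{1-\epsilon/d}(X)$ of the dual PELVE, with uniqueness (valid on $\X$) closing the argument. Your write-up is a bit more explicit about the change of variables for $\ES$ and the three facts underpinning the passage to the integral equation, whereas the paper works through the distribution function $F(x)=1-f^{-1}(x)$ and a $U'=1-U$ relabeling, but the substance is the same.
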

\begin{proof}
  For any $f$ satisfying Assumption \ref{ass:1}, let $F(x)=1-f^{-1}(x)$.
    Hence, $F$ is a continuous and strictly increasing distribution function 
    and $F^{-1}(s)=f(1-s)$ for $s \in (0,1)$.
    Let $U\sim \mathrm{U}(0,1)$ and $X=F^{-1}(U)=f(1-U)$.
    Then $X \in \X$ and $X\sim F$.
    As $F^{-1}(1-s)=f(s)$, we have $\pi_X(y)=1/z_f(y)$.
    Take $U'=1-U$. We have $X=f(U')$ and $U'\sim \mathrm{U}(0,1)$.

    For $X\in \X$, let $f(s)=\VaR_{s}(X)$.
    Then, we have $z_f(y)=1/\pi_X(y)$ for $y \in (0,1]$.
    Furthermore, we have $F^{-1}(s)=f(1-s)$.
    Therefore, there exists $U \sim \mathrm{U}(0,1)$ such that $X=f(1-U)$.
    Let $U'=1-U$. Then, we have $X=f(U')$ and $U'\sim \mathrm{U}(0,1)$.
\end{proof}

Proposition \ref{prop:abstract} allows us to study  $z$ instead of $\pi$ for the calibration problem. The integral equation \eqref{eq:abstract} can be very helpful in characterizing the distribution from the dual PELVE.

Some examples of $\pi_X$ and $z_{f}$ are listed in Table \ref{ex:zf}, which is corresponding to the PELVE presented in Table \ref{constant-PELVE}.

\begin{table}[htbp]
\def\arraystretch{1.4}
    \centering{}
    \caption{Example of $\pi_X$ and $z_{f}$}\label{ex:zf} %
    \begin{tabular}{m{1.5cm}<{\centering} | m{3cm}<{\centering}|m{5cm}<{\centering}|m{2.8cm}<{\centering}}
    \hline \hline
    $X$&  $\pi_X(\epsilon)$ & $f$ & $z_f$ \\
    \hline
    $\mathrm{U}(0,1)$ & $\pi_X(\epsilon)=2$ & $f(x)=1-x$ & $z_{f}(y)=1/2$ \\
    \hline
    $\mathrm{Exp}(\lambda)$& $\pi_X(\epsilon)=e$ &   $f(x)=-\log(x)/\lambda$&$z_{f}(y)=1/e$ \\
    \hline
    $\mathrm{GPD}(\xi)$ & $\pi_X(\epsilon)=(1-\xi)^{-\frac{1}{\xi}}$ & $f(x)=\left\{\begin{aligned}
    &1/\xi\left(x^{-\xi}-1\right) &\xi\neq 0\\
    &-\log(x) &\xi=0 \\\end{aligned}\right.$ &$z_{f}=(1-\xi)^{\frac{1}{\xi}}$\\
    \hline \hline
    \end{tabular}
\end{table}

 For a given  dual PELVE curve $\pi$, we find the solution to the integral equation by the following steps.
\begin{enumerate}
    \item Let $z(y)=\frac{1}{\pi(y)}$ for all $y\in(0,1]$.
    \item Find $f\in \mathcal{C}$  that satisfies  $\int_{0}^{y}f(s)\d s=yf\left(z(y)y\right)$ for all $y\in(0,1]$.
    \item By Proposition \ref{prop:abstract}, $X=f(U)$ for some $U \sim \mathrm{U}(0,1)$ will have the given dual PELVE $\pi$.
\end{enumerate}

Therefore, we will focus on characterizing $f$ from a given $z:(0,1]\to(0,1]$ below. Generally, it is hard to characterize $f$ explicitly.
We first formulate the problem as an advanced differential equation, which helps us to find solutions.

\subsection{Advanced differential equations}

In this section, we show that the main objective \eqref{eq:abstract} can be represented by a differential equation. The use of differential equations in computing risk measures has not been actively developed. The only paper we know is \cite{B17} which addresses a different problem.

Let us recall the integral equation \eqref{eq:abstract} from Section \ref{abstract}.
For a function $f\in\mathcal{C}$, we solve the function $z_{f}:(0,1)\to\R$ from \eqref{eq:abstract}.  We represent \eqref{eq:abstract} by an advanced differential equation using the following steps.
\begin{enumerate}
    \item Let $\omega_{f}(y)=yz_{f}\left(y\right)$. It is easy to see that $z_{f}\left(y\right)\le1$. Hence, $\omega_{f}$  is strictly increasing and continuous 
    on $(0,1]$ and $\omega_{f}(y)\le y$.
    \item Let $u_{f}$ be the inverse function of $\omega_{f}$.
We have that $u_{f}:(0,z_{f}(1)]\mapsto (0,1]$ is a continuous and  strictly increasing function
and  $u_{f}\left(w\right)\ge w$.
\item Replacing $y$ with $u_f(w)$ in \eqref{eq:abstract}, we have 
$ f(w)=\int_{0}^{u_f(w)} f(w) \d s/u_f(w).$
\item Assume $u_f$ is continuously differentiable. It is clear that $f$ is continuously differentiable on $(0,z_{f}(1))$.  Hence, we can represent \eqref{eq:abstract} by the following advanced differential equation
$$
f'\left(w\right)+\frac{u_{f}'\left(w\right)}{u_{f}\left(w\right)}\left(f\left(w\right)-f\left(u_{f}\left(w\right)\right)\right)=0.
$$
\end{enumerate} 

For a given function $z:(0,1]\to\R$, let $u=\omega^{-1}$ such that $\omega(y)=yz(y)$ for $y \in (0,1]$.  Then, we solve the function $f$  by the following differential equation
\begin{equation}
    f'\left(w\right)+\frac{u'\left(w\right)}{u\left(w\right)}f\left(w\right)-\frac{u'\left(w\right)}{u\left(w\right)}f\left(u\left(w\right)\right)=0.\label{eq:DDE}
\end{equation}
If $z=1/\pi_X$ for some $X \in \X$, then  $u$ is a strictly increasing and continuous function such that 
$u\left(w\right)\ge w$. Furthermore, if $z$ is continuously differentiable, then we can characterize all $X \in \X$ with $\pi_X=1/z$ by \eqref{eq:DDE}.
As $u'\left(w\right)/u\left(w\right)\ge 0$ and $u(w)\ge w$, \eqref{eq:DDE} is a linear advanced differential equation which is well studied in the literature. In \citet{BB11},
it is shown that  there exists a non-oscillatory solution for \eqref{eq:DDE}. 

\subsection{The constant PELVE curve}\label{sec:constant}


We first solve the case that $z(y)=c$ for all $y\in(0,1]$ and some constant $c \in\left(0,1\right)$. As we can see from Table \ref{ex:zf}, the power function and logarithm function have constant $z_f$. If $f(x)=\lambda x^{\alpha}+b$ for $\alpha>-1$, we can see that  $ (\alpha+1)^{-1/\alpha}=c$.
In this section, we can characterize all the other  solutions which can not be expressed as a linear transformation of the power function. That is, we will see that the set
$$\left\{ f\in \mathcal{C}:z_{f}(y)=z(y), ~y \in (0,1]\right\}$$
is not a location-scale class. Hence, we can answer the question at the beginning of the section; that is, in the case the PELVE is a constant, the set  $\left\{ X\in\X:\Pi_{X} =c \right\}$ is not a location-scale class.

\begin{theorem}\label{thm:constant}
For $c \in (0,1)$,  any $X$ with quantile function satisfying Assumption \ref{ass:1} and  $\pi_{X}(\epsilon)=1/c$ for $\epsilon \in (0,1)$ can be written as $X=f(U)$ for some $U \sim \mathrm{U}(0,1)$ and  $f$ satisfying Assumption \ref{ass:1}. Furthermore, such $f$ has the form
$$f\left(y\right) =C_{1}+C_{2}y^{\alpha}+O\left(y^{\zeta}\right),$$
where $\alpha$ is the root of $(\alpha+1)^{-1/\alpha}=c$, $\zeta>\max\{0,\alpha\}$, $C_1, C_2\in \R$, $C_2\alpha<0$ and  $O (y^{\zeta} )$ is a function such that $\lim \sup_{y \to 0} O (y^{\zeta} )/y^{\zeta}$ is a constant.
\end{theorem}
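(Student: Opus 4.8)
The plan is to reduce the statement to a regularity analysis of the advanced differential equation \eqref{eq:DDE} with the constant choice $u(w)=w/c$, and then extract the asymptotic expansion of $f$ near $0$ by a fixed-point / iteration argument. First I would fix $X\in\X$ with $\pi_X(\epsilon)=1/c$ and invoke Proposition \ref{prop:abstract} to write $X=f(U)$ for $U\sim\mathrm U(0,1)$ and $f\in\C$ with $z_f\equiv c$; so $f$ satisfies $\int_0^y f(s)\,\d s = y f(cy)$ for all $y\in(0,1]$. Differentiating (and justifying differentiability of $f$ from this identity, since the right-hand side is a product of the absolutely continuous $\int_0^y f$ and $y\mapsto f(cy)$) gives $f(y)=f(cy)+cyf'(cy)$, equivalently $w f'(w)+f(w)-f(w/c)=0$ after the substitution $w=cy$; this is exactly \eqref{eq:DDE} with $u(w)=w/c$, $A(w)=B(w)=1/w$.

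Next I would look for the indicial behaviour: substituting the ansatz $f(y)=y^{\alpha}$ into $w f'(w)+f(w)-f(w/c)=0$ gives $\alpha+1-c^{-\alpha}=0$, i.e. $(\alpha+1)^{-1/\alpha}=c$, which identifies the admissible exponent $\alpha$ (and I would note $\alpha>-1$ is needed for integrability near $0$, and the sign constraint $C_2\alpha<0$ comes from $f$ being strictly decreasing). Constants solve the equation trivially, giving the $C_1$ term. The core of the proof is then to show that \emph{every} solution $f\in\C$ has the form $f(y)=C_1+C_2 y^{\alpha}+O(y^{\zeta})$ with $\zeta>\max\{0,\alpha\}$. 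For this I would set $g(y)=f(y)-C_1-C_2 y^{\alpha}$ where $C_1,C_2$ are chosen to absorb the leading behaviour, show $g$ again solves the homogeneous equation $w g'(w)+g(w)-g(w/c)=0$, and then run the scaling iteration from the numerical scheme (step $a_n=c^{\,n}$, so $[a_{n+1},a_n]=[c^{n+1},c^n]$): on each dyadic-type block $g$ is obtained from $g$ on the previous block by the integral representation \eqref{fn}, and I would prove a contraction estimate of the form $\sup_{[c^{n+1},c^n]}|g(y)|/y^{\zeta}\le \theta\,\sup_{[c^{n},c^{n-1}]}|g(y)|/y^{\zeta}$ for some $\theta<1$, for any $\zeta$ strictly between $\max\{0,\alpha\}$ and the next indicial exponent. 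Summing the geometric series yields $g(y)=O(y^{\zeta})$ and hence $\lim_{y\to0}O(y^{\zeta})/y^{\zeta}$ exists.

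The main obstacle I anticipate is the contraction estimate on the blocks: the integral operator in \eqref{fn} with kernel built from $u'(\lambda)/u(\lambda)=1/\lambda$ must be shown to strictly shrink the weighted sup-norm, and the shrink factor $\theta$ depends delicately on $\zeta$ versus the roots of the indicial equation $\alpha+1-c^{-\alpha}=0$ — one has to check this transcendental equation has $\alpha$ as its only root in $(-1,\infty)$ relevant to leading order (the constant solution corresponding to the "root at $0$" in a degenerate sense), so that choosing $\zeta$ just above $\max\{0,\alpha\}$ indeed forces $\theta<1$. A secondary technical point is justifying termwise differentiability and the passage from the integral equation to \eqref{eq:DDE} for a general $f\in\C$ without assuming smoothness a priori; I would handle this by first establishing that the integral identity forces $f$ to be locally Lipschitz, hence differentiable a.e., and then bootstrapping continuity of $f'$ from the relation $f'(w)=\tfrac1w(f(w/c)-f(w))$.
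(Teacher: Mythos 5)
Your proposal shares the paper's starting point (Proposition \ref{prop:abstract}, differentiate the integral identity, obtain the advanced differential equation with $u(w)=w/c$ and identify the indicial equation $(\alpha+1)^{-1/\alpha}=c$), but thereafter you take a genuinely different route. The paper substitutes $x(t)=e^{-t}f(e^{-t})$, $t=-\log y$, to turn the equation into the constant-coefficient delay equation $x'(t)=-e^{-a}x(t-a)$ with $a=-\log c$, extends it to the whole real line, and then invokes the classical Bellman--Cooke expansion theorem: every solution of such an equation is a convergent series $\sum_k C_k e^{m_k t}$ over the characteristic roots $m_k$ of $ame^{am}=(-a)e^{-a}$. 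The Lambert $W$ analysis then orders the roots by real part, and the monotonicity constraint $x'+x>0$ fixes the sign $C_2(1+m_2)>0$, i.e.\ $C_2\alpha<0$. Your plan instead subtracts off $C_1+C_2y^{\alpha}$ and runs a weighted-sup contraction on the dyadic blocks $[c^{n+1},c^n]$.

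The contraction step, as written, has a real gap. Constants and $y^{\alpha}$ themselves solve the homogeneous equation $wg'(w)+g(w)-g(w/c)=0$, and for $g\equiv 1$ one has $\sup_{[c^{n+1},c^n]}|g(y)|/y^{\zeta}=c^{-n\zeta}\to\infty$ for any $\zeta>0$, while for $g=y^{\alpha}$ the weighted norm grows like $c^{-n(\zeta-\alpha)}$ when $\zeta>\alpha$. So no uniform bound $\theta<1$ can hold for an \emph{arbitrary} homogeneous solution $g$; the block-to-block operator is not a contraction on the whole solution space, only on the invariant complement of the $\{1,y^{\alpha}\}$ modes. But showing that the particular $g=f-C_1-C_2y^{\alpha}$ (for some choice of $C_1,C_2$) lies in that complement \emph{is} the theorem --- writing ``$C_1,C_2$ chosen to absorb the leading behaviour'' presupposes the decomposition you are trying to prove. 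To break the circularity you need some form of spectral decomposition of the advance operator; this is precisely what the paper obtains from Bellman--Cooke (1963), together with the Lambert-$W$ analysis to locate the third-largest real part $\lambda_3$ (which determines $\zeta=-(1+\lambda_3)$, and which comes from a \emph{complex} root, not merely from checking real roots of the indicial equation). Your proposal flags the transcendental-root issue as an ``obstacle'' but does not supply the argument that closes it; without it, the geometric-series conclusion does not follow. You also skip the step of extending the equation from $(0,1]$ to all of $(0,\infty)$, which is needed before the whole-line DDE theory applies.
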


The proof of Theorem \ref{thm:constant} is provided in Appendix \ref{app:proof_sec5}. As we can see, Theorem \ref{thm:constant}  characterizes all $X \in \X$ such that $\pi_X(\epsilon)=1/c$.
If $c \in (0,1/e)$, $\alpha$ is negative. As $\zeta>0$, we can see that $X=f(U)$ is regularly varying of index $\alpha$.
Hence, one can then consider the Pareto distribution with survival function $S(x)=x^{\alpha}$ as a representative solution for the tail behavior. An open question is that, in the general case that the PELVE is not necessarily constant, whether all the solutions behave similarly regarding their tail behavior.

Another interesting implication of the theorem and its proof is that one can give a non-trivial solution for $z$ is a constant.
\begin{example}\label{f}
For $c \in (0,1)$, let $(\theta,\eta)$ be a solution of
    $$
    \begin{cases}
    c\log c=-\frac{\eta\exp\left(-\frac{\eta}{\tan\left(\eta\right)}\right)}{\sin\left(\eta\right)},\\
    \theta=-\frac{\eta}{\tan\left(\eta\right)}.
    \end{cases}
    $$
Then, the function $f$, given by
    \begin{equation}\label{eq:sol_f}
    f(y)=C_1+C_2y^{\alpha}+C_3y^{\zeta}\sin(-\sigma\log(y)),\quad 0<y<1,
    \end{equation}
satisfies $\int_{0}^{y}f(s)\d s=yf(cy)$ and  Assumption \ref{ass:1},
where $\alpha$ solves $(\alpha+1)^{-1/\alpha}=c$,  $\zeta=\theta/\log c-1$, $\sigma=-\eta/\log c$, $C_2$ is a constant such that $C_2\alpha<0$
and $0<C_3<-C_2\alpha/(\zeta+|\sigma|)$.
\end{example}

If we take $C_3=0$, we get the simplest power function for $z(x)=c$.
If $C_3\neq0$, the solution \eqref{eq:sol_f} is not a linear transformation of the power function solution.

Let us look at the example where $\pi(\epsilon)=2$ for all $\epsilon\in(0,1]$, which means $z(y)=1/2$ for $y\in(0,1]$.
As we have seen in Table \ref{ex:zf},  $f(y)=1-y$ can be a solution that leads to $X \sim \mathrm{U}(0,1)$.
Furthermore, according to Example \ref{f}, we can have another solution
$$f(y)  =1-y^{\alpha}+Cy^{\zeta}\sin(-\sigma\log(y)),$$
where $\alpha=1$, $C=0.05096$, $\zeta=4.0184$ and $\sigma=-15.4090$.
In the left of Figure \ref{fig:Non-unique-answer}, we have depicted the two solutions for $f$.
We can see they are quite different when $y$ goes to 1.
In the right of Figure \ref{fig:Non-unique-answer}, we numerically calculate $z_{f}$ for $f(y)=1-y^{\alpha}+Cy^{\zeta}\sin(-\sigma\log(y))$. We can see its numerical value is almost 1/2 and the discrepancy is due to limited computational accuracy.

\begin{figure}[htbp]
    \caption{\label{fig:Non-unique-answer}Non-unique calibrated functions for $z(y)=1/2$.}
    \centering{}\includegraphics[scale=0.38]{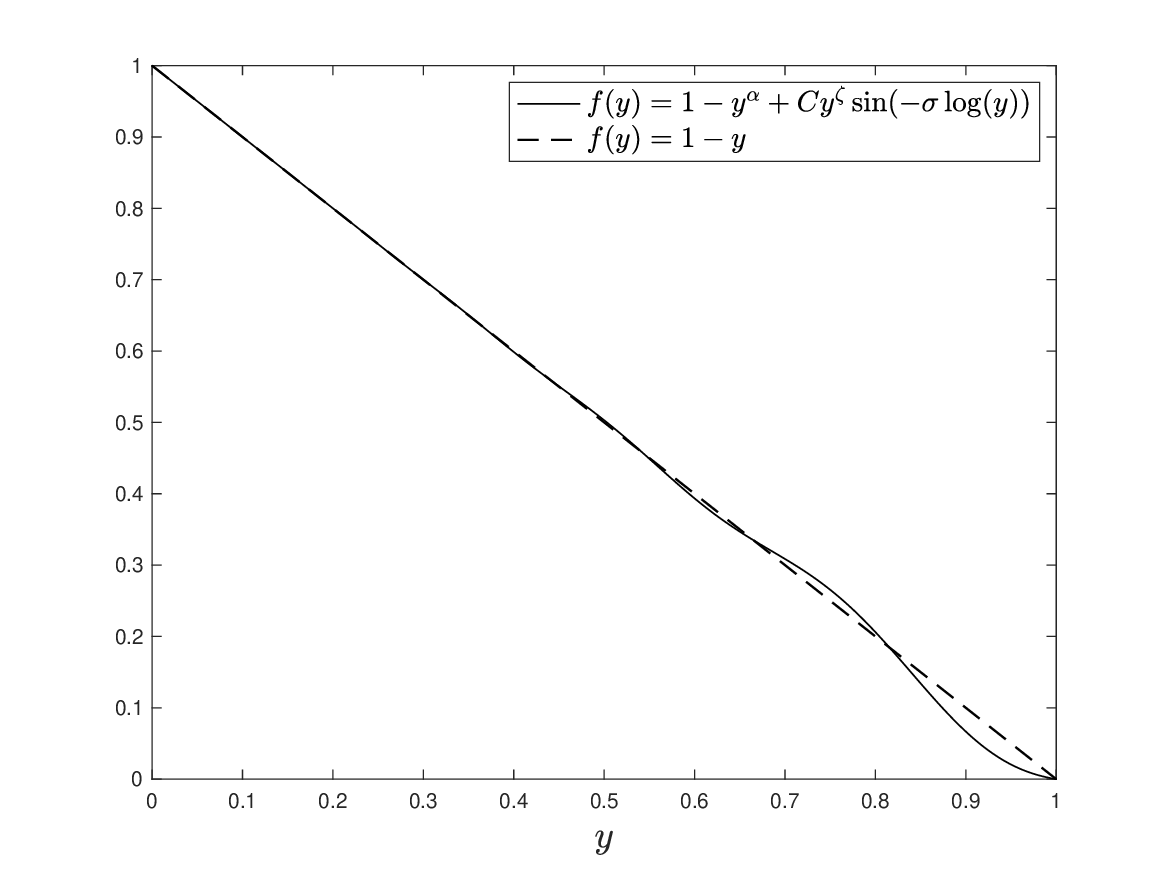} \includegraphics[scale=0.38]{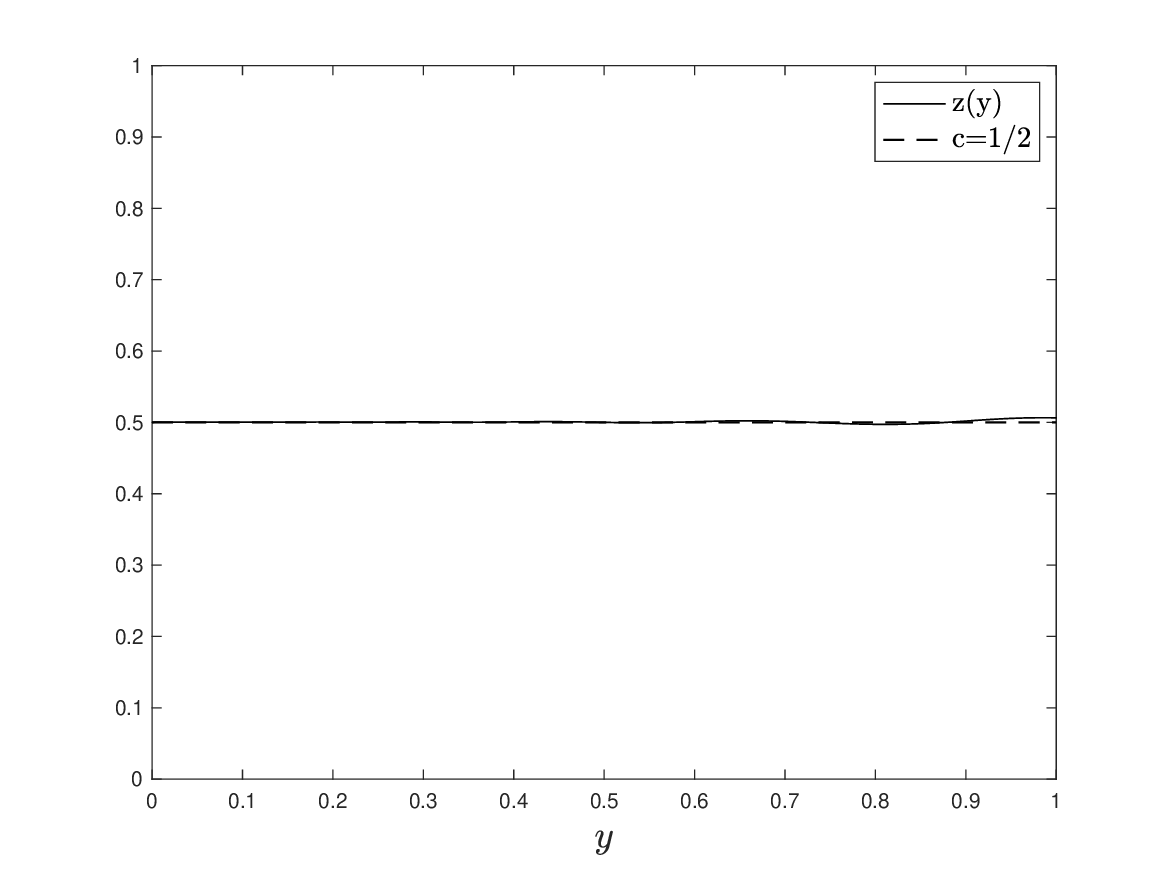}
\end{figure}

By letting $X=f(U)$, we get  $\pi_{X}(\epsilon)=2$ for all $\epsilon\in(0,1]$ and such $X$ does not follow the uniform distribution.

\subsection{A numerical method}
In general, it is hard to get an explicit solution to \eqref{eq:DDE}.
Here we present a numerical method to solve \eqref{eq:DDE}.
Let us introduce the following process.
    \begin{enumerate}
       \item Let $a_{0}=1$, $a_{1}=a$, ...,$a_{n}=u^{-1}\left(a_{n-1}\right)$.
       \item
       For $a \in (0,1)$, let $\xi$ be the solution to  $(1-\xi)^{\frac{1}{\xi}}=a$.
       Let
       \begin{equation}\label{eq:GPD}
       f_0(x)=\left\{\begin{aligned}
    &\frac{1}{\xi}\left(x^{-\xi}-1\right), &\xi\neq 0,\\
    &-\log(x), &\xi=0, \\\end{aligned}\right.
    \end{equation} on $[a,1]$.

       \item We can solve the following ODE on $\left[a_{2},a_{1}\right]$:
            $$ f_{1}'\left(w\right)+\frac{u'\left(w\right)}{u\left(w\right)}f_{1}\left(w\right) =\frac{u'\left(w\right)}{u\left(w\right)}f_{0}\left(u\left(w\right)\right),\quad w\in\left[a_{2},a_{1}\right].
            $$
       \item Now we can repeat step 3 by induction on $\left[a_{n+1},a_{n}\right]$ for $n>1$ by solving
             $$
             f_{n}'\left(w\right)+\frac{u'\left(w\right)}{u\left(w\right)}f_{n}\left(w\right)  =\frac{u'\left(w\right)}{u\left(w\right)}f_{n-1}\left(u\left(w\right)\right),\quad w\in\left[a_{n+1},a_{n}\right].
             $$
       \item In general, the solution for differential equation $\frac{dy}{dx}+P(x)y=Q(x)$ is
             $$
             y  =e^{-\int^{x}P(\lambda)\,\d\lambda}\left[\int^{x}e^{\int^{\lambda}P(\varepsilon)\d\varepsilon}Q(\lambda)\d\lambda+C\right].
             $$
            So, we get the following solution for $f_{n}$:
             $$
             f_{n}\left(w\right) =e^{\int_{w}^{a_{n}}\frac{u'(\lambda)}{u(\lambda)}\d\lambda}\left[f_{n-1}(a_{n})-\int_{w}^{a_{n}}e^{-\int_{\lambda}^{a_{n}}\frac{u'(\varepsilon)}{u(\varepsilon)}\d
             \varepsilon}\frac{u'(\lambda)}{u(\lambda)}f_{n-1}\left(u\left(\lambda\right)\right)\d\lambda\right],~w\in[a_{n+1},a_{n}].\label{fn}
             $$
       \item Finally, let $f=f_{n}$ on $[a_{n+1},a_{n}]$.
    \end{enumerate}
Note that since we start with a strictly decreasing function, then from equation (\ref{eq:DDE}) we have
$$
f'(w)  =\frac{u'\left(w\right)}{u\left(w\right)}\left(f\left(u(w)\right)-f\left(w\right)\right)<0,
$$
so $f$ remains strictly decreasing.

 The solution produced by the numerical method   heavily relies on $f_{0}$.
The equation \eqref{eq:DDE} does not have a unique solution, but the solution from the above process is unique. We set $f_{0}$ as \eqref{eq:GPD} by assuming $z$ can be extended from $(0,1]$ to $\R^{+}$ and set $z(y)=a$ for all $y>1$. We use this assumption for simplification as we can know that \eqref{eq:GPD} satisfies \eqref{eq:DDE} for a constant $z$ from Section \ref{sec:constant}. This choice of $f_0$ is the same as the choice of $k(\epsilon)$ in the two-point calibration problem, and this reflects our subjective view of the importance of the Pareto distribution in risk management.
Especially, when $z(y)=c$ for some constant $c$, we have $u(x)=x/c$.
Therefore, \eqref{fn} gives
$$
f_n(w)=\frac{a_n}{w}\left[f_{n-1}(a_n)-\int_{w}^{a_n}\frac{1}{a_n}f_{n-1}\left(\frac{\lambda}{c}\right)\d\lambda\right].
$$
If we set $f_{0}$ as \eqref{eq:GPD}, we can have $f_{1}$ also in the form of \eqref{eq:GPD}.
Then, it is obvious that $f_n$ is also in the form of \eqref{eq:GPD}.
Therefore, the numerical method gives the simplest power function or logarithm function when $z(y)$ is a constant on $(0,1]$ as Table \ref{ex:zf}, which  leads to the generalized Pareto distribution for $X$.

\label{sec:44}
\subsection{Numerical calibrated quantile function}

Now let us explore the method in Section \ref{sec:44} with simulation. Here we present the results for a few cases. In Figures \ref{linear} to \ref{exp}, we compare the solution from the numerical method with the standard formula in Table \ref{ex:zf} in the left panel, and compare $\int_{0}^{y}f(s)\d s$ with $yf(z(y)y)$ to validate  the equation \eqref{eq:abstract} in the right panel.

We first try some examples where $z$ is constant as shown in Table \ref{ex:zf}, i.e. $z(x)=1/2$ (Figure \ref{linear}), $z(x)=1/e$ (Figure \ref{log}) and $z(x)=0.9^{10}$ (Figure \ref{power}). For Figure \ref{linear} to \ref{power}, we can see that the numerical method provides exactly the same function $f$ as Table \ref{ex:zf}.

In Figure \ref{exp}, we check the case  $z(x)=\log\left(x/(1-e^{-x})\right)/x$. The function $f(x)=e^{-x}$ satisfies \eqref{eq:abstract}. We can see that the solution from the numerical method is close to  a function of the form $f(x)=\lambda e^{-x}+b$, which is known to satisfy the integral equation.

        \begin{figure}[h]
        \caption{Calibrated function and validation  for $z(x)=1/2$}\centering\includegraphics[scale=0.39]{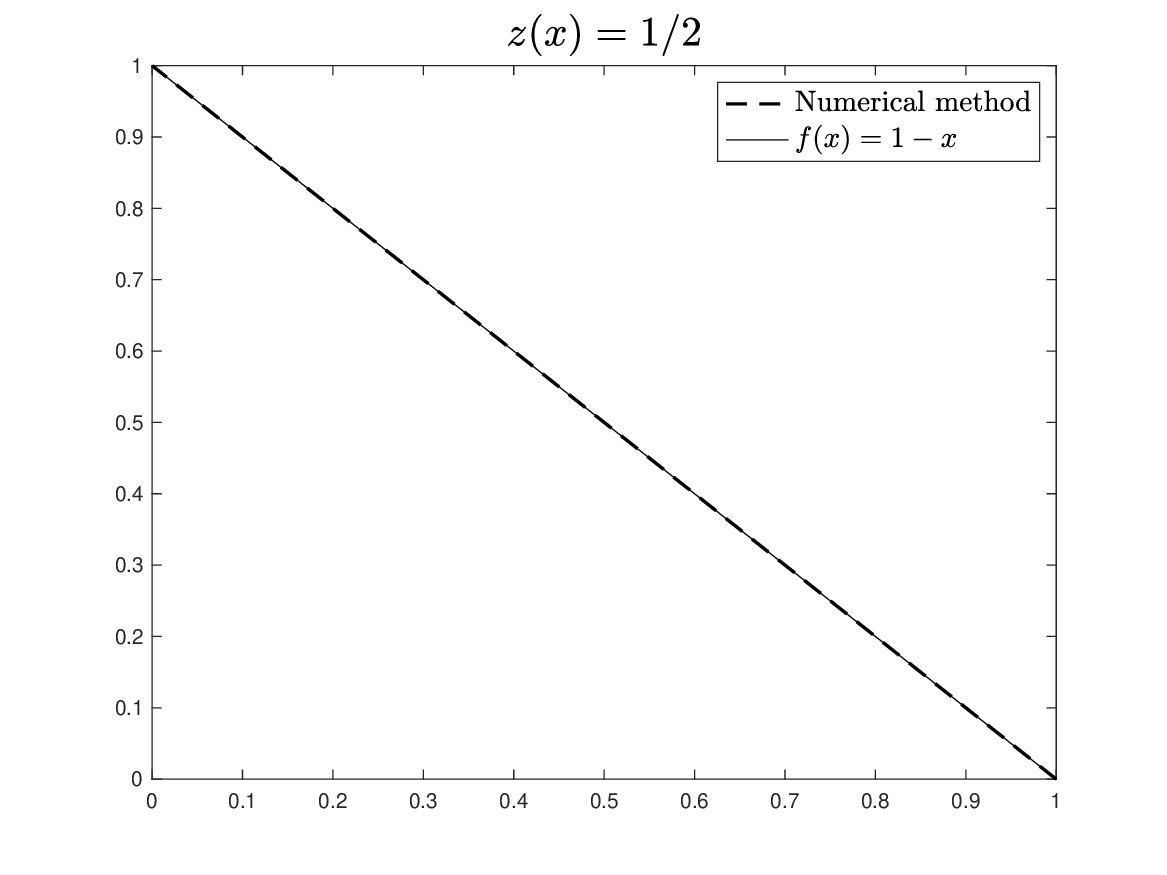}
        \includegraphics[scale=0.39]{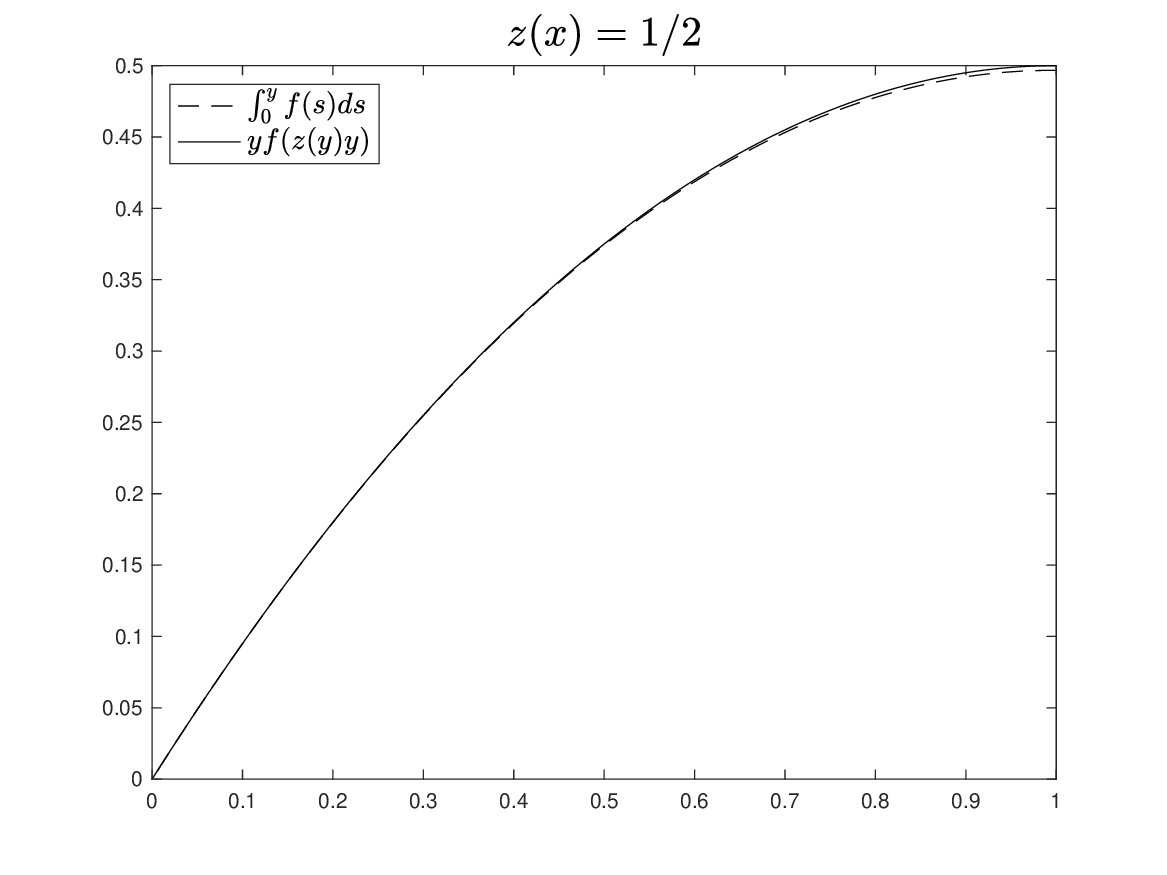}
        \label{linear}
        \end{figure}

     \begin{figure}[h]
     \caption{Calibrated function and validation for $z(x)=1/e$}
    \centering\includegraphics[scale=0.39]{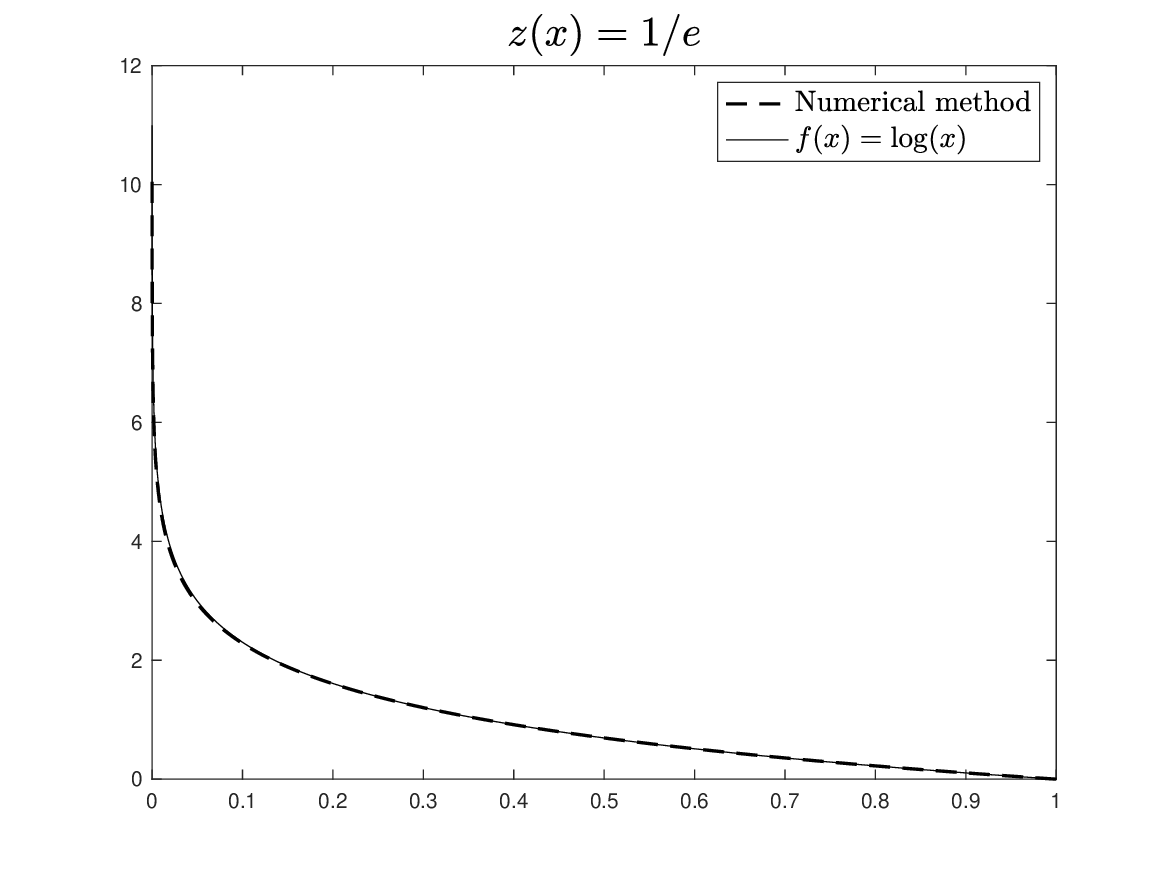}
    \includegraphics[scale=0.39]{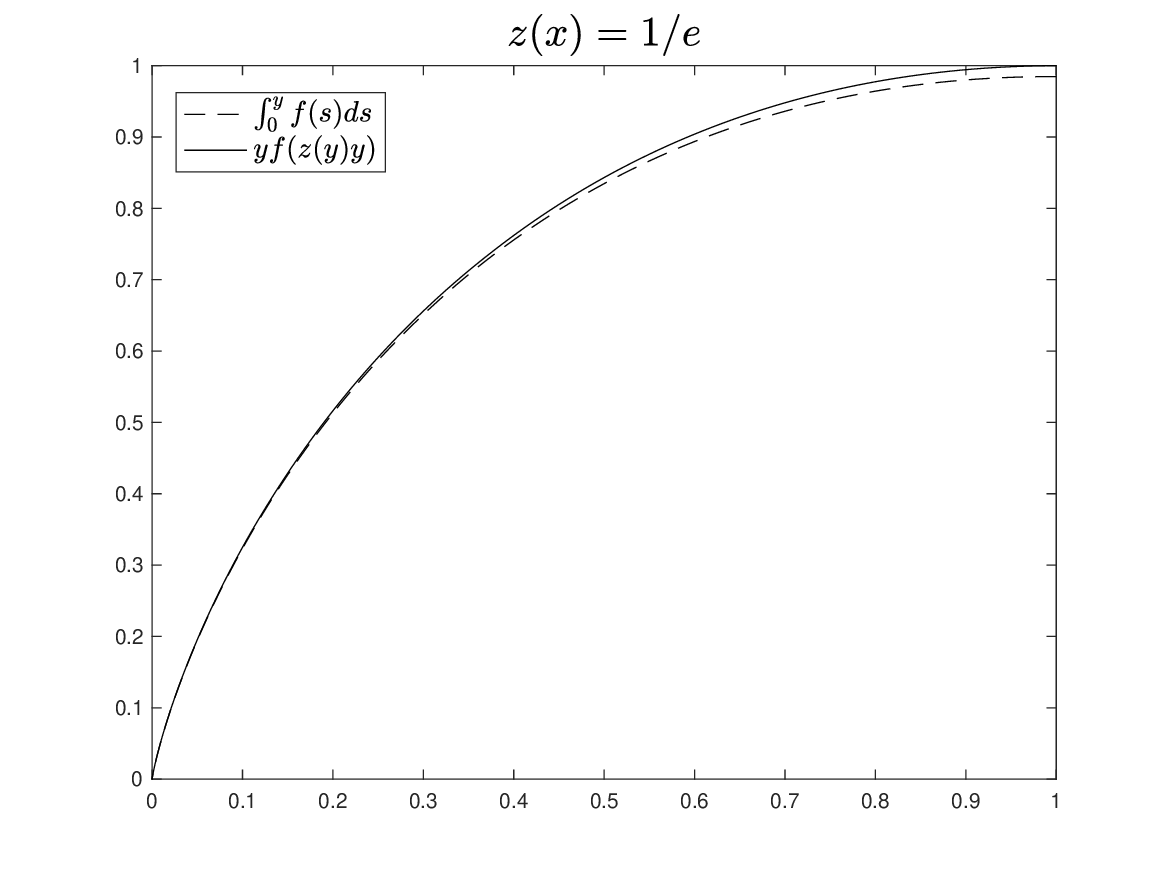}
    \label{log}
    \end{figure}

    \begin{figure}[h]
      \caption{Calibrated function and validation for $z(x)=0.9^{10}$}
    \label{power}
    \centering \includegraphics[scale=0.39]{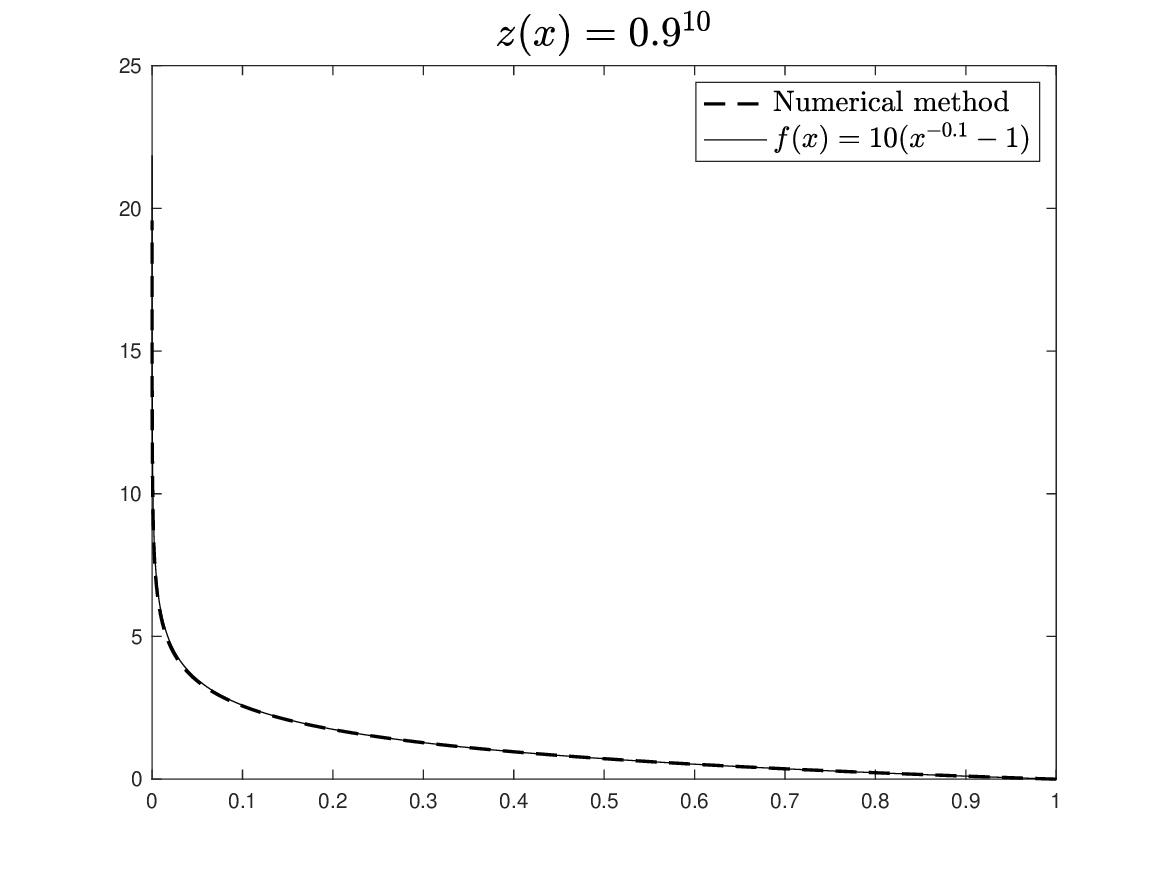}
    \includegraphics[scale=0.39]{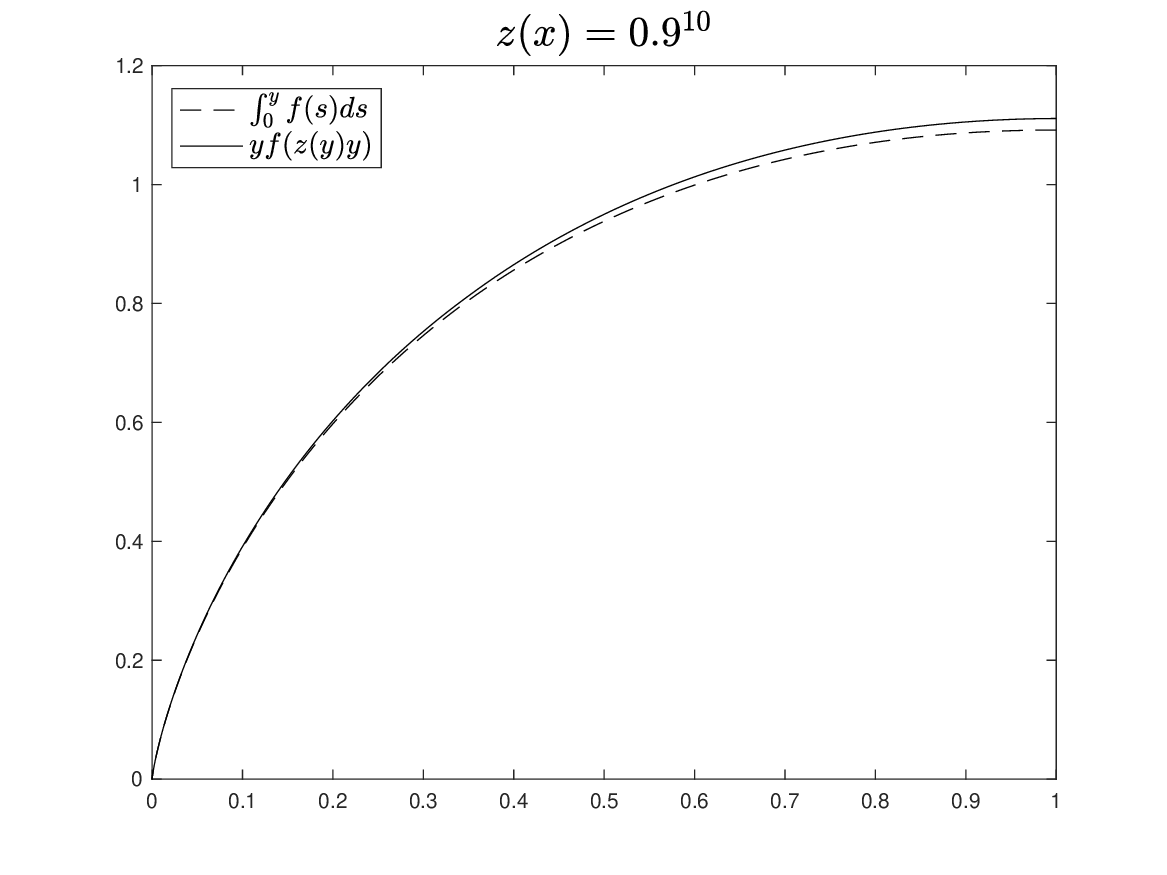}
    \end{figure}

    \begin{figure}[h]
        \caption{Calibrated function and validation for $z(x)=\log\left(x/(1-e^{-x})\right)/x$}
    \label{exp}
    \centering \includegraphics[scale=0.39]{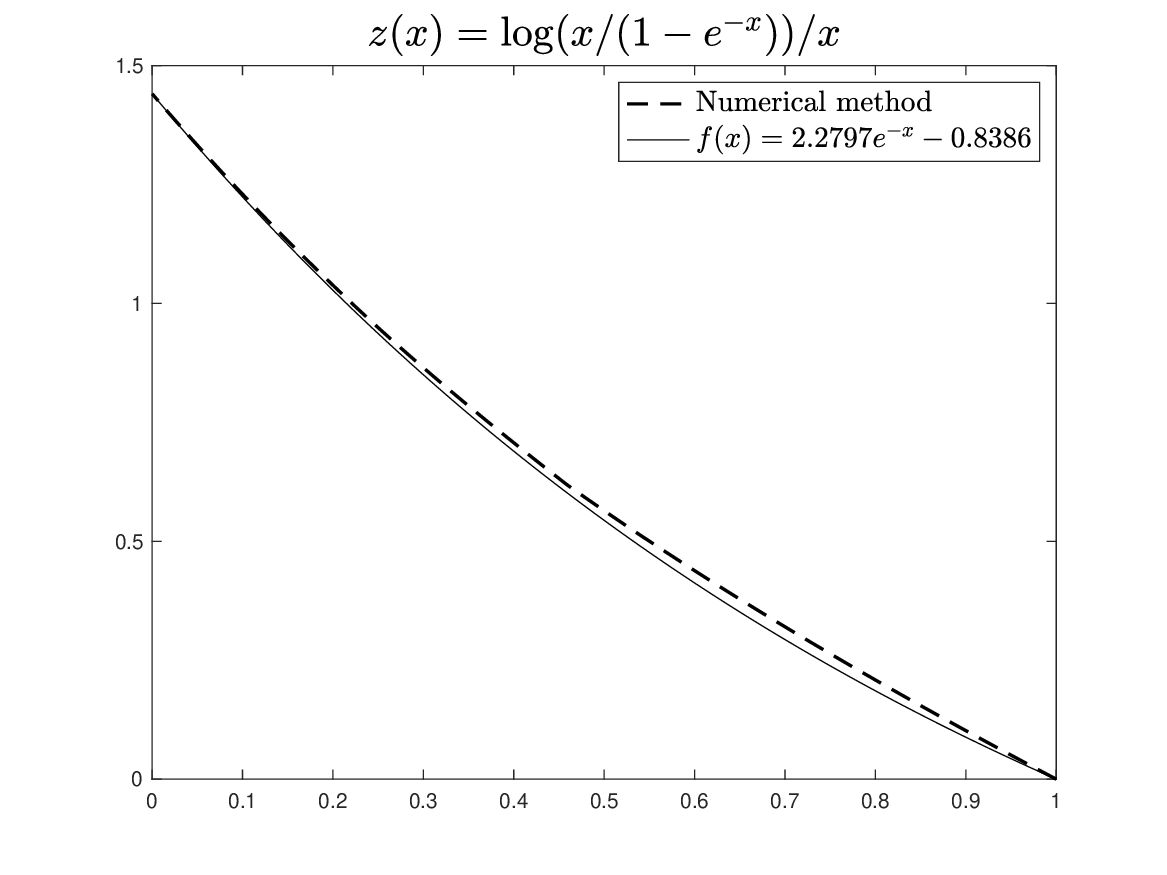} \includegraphics[scale=0.39]{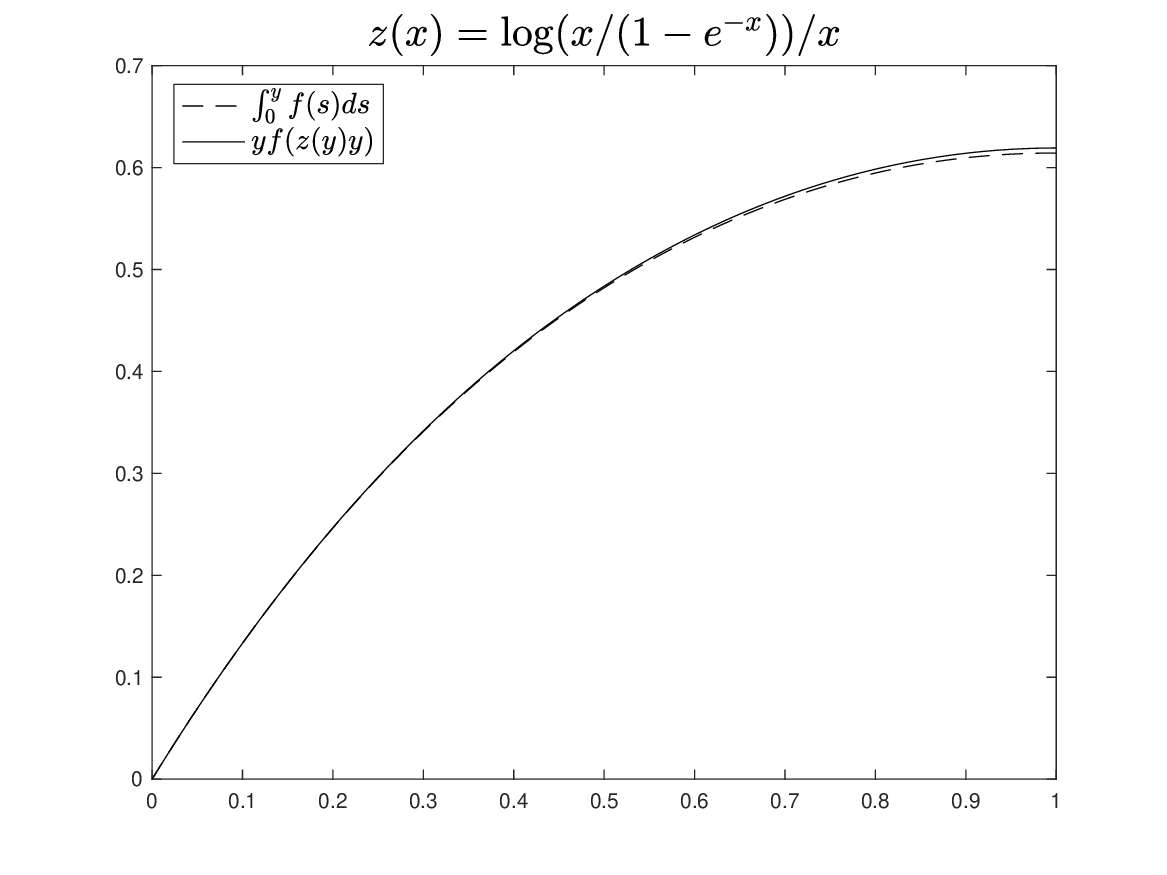}
    \end{figure}

\section{Technical properties of the PELVE}\label{sec:property}

We now take a turn to study several additional properties of PELVE. In particular, we will obtain results on  the monotonicity  and convergence of the dual PELVE as well as the PELVE.

\subsection{Basic properties of dual PELVE}

The following proposition that shows the PELVE and dual PELVE share some basic properties such as monotonicity (i), location-scale invariance (ii) and shape relevance (iii)-(iv) below.
\begin{proposition}\label{property}
    Suppose the quantile function  of $X$ satisfies Assumption \ref{ass:1} and $\epsilon\in(0,1]$.
    \begin{enumerate}[(i)]
    \item $\Pi_{X}(\epsilon)$ is increasing (decreasing) in $\epsilon$ if and only if so is $\pi_X(\epsilon)$.
    \item For all $\lambda>0$ and $a\in\R$, $\pi_{\lambda X+a}(\epsilon)=\pi_{X}(\epsilon)$.
    \item $\pi_{f(X)}(\epsilon)\le\pi_{X}(\epsilon)$ for all strictly increasing concave functions: $f:\R\to\R$ with $f(X)\in\X$.
    \item $\pi_{g(X)}(\epsilon)\ge\pi_{X}(\epsilon)$ for all strictly increasing convex functions: $g:\R\to\R$ with $g(X)\in\X$.
    \end{enumerate}
\end{proposition}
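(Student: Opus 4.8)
\textbf{Proof proposal for Proposition~\ref{property}.}
The plan is to prove each item by reducing statements about $\pi_X$ to elementary facts about $\VaR$ and $\ES$, using the defining equation $\ES_{1-\epsilon}(X) = \VaR_{1-\epsilon/\pi_X(\epsilon)}(X)$, which has a unique solution $\pi_X(\epsilon)\ge 1$ for $X\in\X$ because quantile functions of elements of $\X$ are continuous and strictly increasing. For part~(i), I would invoke the identities already recorded in Section~\ref{var-es}: $\Pi_X(\epsilon/\pi_X(\epsilon)) = \pi_X(\epsilon)$ and, when $\E[X]\le\VaR_{1-\epsilon}(X)$, $\pi_X(\Pi_X(\epsilon)\epsilon) = \Pi_X(\epsilon)$. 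These show that the graphs of $\Pi_X$ and $\pi_X$ are reflections of one another through the maps $\epsilon\mapsto\epsilon/\pi_X(\epsilon)$ and $\epsilon\mapsto\Pi_X(\epsilon)\epsilon$, both of which are strictly increasing (the latter by Lemma~\ref{lem:bound_1}, the former because $y\mapsto yz_f(y)$ is strictly increasing by Proposition~\ref{prop:abstract} and the discussion after it). A composition-with-increasing-reparametrization argument then transfers monotonicity in either direction; the only care needed is on the sub-domain where $\Pi_X$ is finite, but since $\X$ consists of continuous strictly increasing distributions this is an interval $(0,\epsilon^*)$ and the argument goes through there, while part~(i) as stated compares the two functions precisely on their common domain.

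For part~(ii), location-scale invariance is immediate: $\VaR$ is translation-equivariant and positively homogeneous, and $\ES$ inherits both properties from its integral representation, so $\ES_{1-\epsilon}(\lambda X+a) = \lambda\ES_{1-\epsilon}(X)+a$ and $\VaR_{1-\epsilon/d}(\lambda X+a) = \lambda\VaR_{1-\epsilon/d}(X)+a$; hence the equation defining $\pi_{\lambda X+a}(\epsilon)$ is satisfied by exactly the same $d$ as that defining $\pi_X(\epsilon)$, and uniqueness finishes it. For parts~(iii) and~(iv), which are dual, I would focus on~(iii). Write $\pi=\pi_X(\epsilon)$, so $\ES_{1-\epsilon}(X) = \VaR_{1-\epsilon/\pi}(X)$. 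For a strictly increasing function $f$ with $f(X)\in\X$, commutation of monotone transforms with quantiles gives $\VaR_q(f(X)) = f(\VaR_q(X))$ for all $q$, and in particular $\VaR_{1-\epsilon/\pi}(f(X)) = f(\VaR_{1-\epsilon/\pi}(X)) = f(\ES_{1-\epsilon}(X))$. The key inequality is then that concavity of $f$ forces $\ES_{1-\epsilon}(f(X)) \le f(\ES_{1-\epsilon}(X))$: indeed $\ES_{1-\epsilon}(f(X)) = \tfrac{1}{\epsilon}\int_{1-\epsilon}^1 f(\VaR_q(X))\,\d q = \E[f(\VaR_U(X))\mid U\ge 1-\epsilon]$ for $U\sim\U(0,1)$, and Jensen's inequality applied conditionally gives $\le f(\E[\VaR_U(X)\mid U\ge 1-\epsilon]) = f(\ES_{1-\epsilon}(X))$. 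Combining, $\ES_{1-\epsilon}(f(X)) \le f(\ES_{1-\epsilon}(X)) = \VaR_{1-\epsilon/\pi}(f(X))$, so the threshold $d=\pi$ already satisfies the defining inequality $\ES_{1-\epsilon}(f(X))\le\VaR_{1-\epsilon/d}(f(X))$, whence $\pi_{f(X)}(\epsilon)\le\pi$; here I use that $d\mapsto\VaR_{1-\epsilon/d}(f(X))$ is increasing so that the infimum defining $\pi_{f(X)}$ is at most $\pi$. Part~(iv) is the mirror image, using convexity and the reverse Jensen inequality to get $\ES_{1-\epsilon}(g(X))\ge g(\ES_{1-\epsilon}(X)) = \VaR_{1-\epsilon/\pi}(g(X))$, which pushes the solution $\pi_{g(X)}(\epsilon)$ up to at least $\pi$.

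The main obstacle I anticipate is purely bookkeeping rather than conceptual: making the monotonicity-transfer in~(i) rigorous on the exact overlap of the domains of $\Pi_X$ and $\pi_X$, and confirming the reparametrization maps are genuine bijections there (strict monotonicity plus the right endpoint behavior). Everything else reduces to translation/homogeneity of $\VaR,\ES$, the commutation rule $\VaR_q\circ f = f\circ\VaR_q$ for increasing $f$, conditional Jensen, and the elementary observation that $d\mapsto\VaR_{1-\epsilon/d}(X)$ is monotone so that a verified inequality at a particular $d$ bounds the infimum. I would present~(ii) first, then~(iii)--(iv) together by duality, and finish with~(i), citing the two reflection identities from Section~\ref{var-es} and Lemma~\ref{lem:bound_1}.
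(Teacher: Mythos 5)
Your proof is correct, and it hits the same mathematical core as the paper's argument, but it is packaged differently. The paper proceeds by first introducing the linking function $\Gamma_X(\epsilon)=1-F_X(\ES_{1-\epsilon}(X))$, establishes that $\Gamma_X(\epsilon)=\epsilon/\pi_X(\epsilon)$ and that $\Gamma_X$ is strictly increasing, location-scale invariant, and ``shape relevant'' (i.e.\ $\Gamma_{f(X)}\ge\Gamma_X$ for increasing concave $f$, $\Gamma_{g(X)}\le\Gamma_X$ for increasing convex $g$), and then reads off (i)--(iv) in one stroke via $\pi_X(\epsilon)=\epsilon/\Gamma_X(\epsilon)$. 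Your version works directly with the defining inequality $\ES_{1-\epsilon}(X)\le\VaR_{1-\epsilon/d}(X)$: (ii) from translation-equivariance and positive homogeneity of both $\VaR$ and $\ES$; (iii)--(iv) from the commutation $\VaR_q\circ f=f\circ\VaR_q$ for increasing $f$ together with Jensen's inequality $\ES_p(f(X))\le f(\ES_p(X))$ for concave $f$ (this is exactly the same Jensen step the paper cites from its Appendix~A.2 argument in \cite{LW2019}), then observing that verifying the defining inequality at the particular $d=\pi_X(\epsilon)$ bounds the infimum; and (i) via the reparametrization identities $\Pi_X(\epsilon/\pi_X(\epsilon))=\pi_X(\epsilon)$ and $\pi_X(\Pi_X(\epsilon)\epsilon)=\Pi_X(\epsilon)$ --- note that $\epsilon\mapsto\epsilon/\pi_X(\epsilon)$ is exactly the paper's $\Gamma_X$, so you are implicitly using the same object without naming it. What the paper's $\Gamma_X$ packaging buys is a single unified invariance/ordering statement from which all four parts follow mechanically; what your direct approach buys is that (ii) and (iii)--(iv) stand on their own without the detour through $\Gamma_X$. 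One small precision point: Lemma~\ref{lem:bound_1} gives weak monotonicity of $\epsilon\mapsto\epsilon\Pi_X(\epsilon)$, not strict as you write, though for $X\in\X$ strictness does hold (the paper secures this by noting $\Gamma_X$ is strictly increasing for continuous strictly increasing $F_X$), and for transferring monotonicity across an increasing reparametrization the weak form is already sufficient.
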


The statements (ii)-(iv) are parallel to the corresponding statements in Theorem 1 of \cite{LW22} on PELVE.
The proof of Proposition \ref{property} is put in Appendix \ref{app:proof_sec6}. Proposition \ref{property}  allows us to study the monotonicity and convergence of the PELVE by analyzing the corresponding properties of  the dual PELVE, which is more convenient in many cases. In the following sections, we focus on finding the conditions which make the dual PELVE monotone and convergent at 0. By Proposition \ref{property}, those conditions can also apply to the PELVE.
\subsection{Non-monotone and non-convergent examples}

In this section, we study   the monotonicity and convergence of dual PELVE. For monotonicity, we have shown some well-known distributions such as normal distribution, t-distribution and lognormal distribution have monotone PELVE curves in Example \ref{PELVE}.
However, the PELVE is not monotone for all $X \in \mathcal{X}$. Below we provide an example.
\begin{example}
[Non-monotone PELVE]
    Let us consider the following density function $g$ on $[-2,2]$,
       $$
    g\left(x\right)
    = \frac{1}{2}\left((x+2)\id_{\{x\in[-2,-1]\}} -x\id_{\{x\in(-1,0]\}}+ x\id_{\{x\in(0,1]\}} +  (2-x) \id_{\{x\in (1,2]\}} \right) .
    $$
    For $X$ with density function $g$,
    Figure \ref{nonmontone} presents the value of $\Pi_{X}(\epsilon)$ for $\epsilon\in(0,0.5)$.
As one can see, the PELVE is not necessarily decreasing, and so is the dual PELVE.
    \begin{figure}[htpb]
    \centering
        \caption{PELVE for $X$ with density $g$}
    \label{nonmontone}
    \includegraphics[scale=0.6]{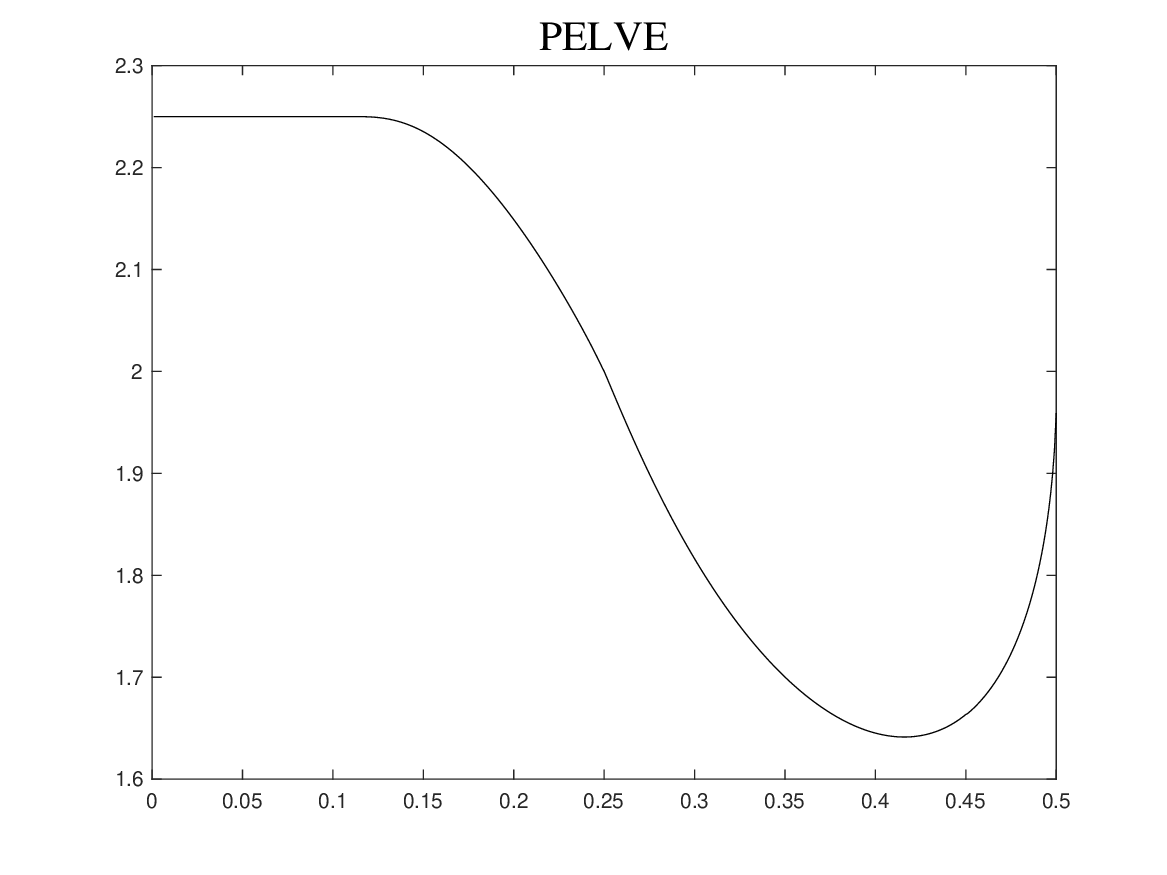} 
    \end{figure}
\end{example}
For the convergence, it is clear that $\pi_{X}(\epsilon)$ is continuous in $(0,1)$ for $X\in \mathcal{X}$.
Therefore, $\lim_{\epsilon\to p}\pi_{X}(\epsilon)$ exists for all $p\in(0,1)$. However, both $\Pi_{X}(\epsilon)$ and $\pi_{X}(\epsilon)$ are not well defined at $\epsilon=0$. If $\lim_{\epsilon\to0} \pi_{X}(\epsilon)$ exists, we can define $\pi_{X}(0)$ as the limit, and $\Pi_{X}(0)$ similarly.  However, the following example shows that the limit does not exist for some $X\in \mathcal{X}$.
\begin{example}[No limit at 0]
 We can construct a random variable $X\in\X$ such that $\lim_{\epsilon\to0}\pi_{X}(\epsilon)$ does not exist from the integral equation \eqref{eq:abstract} in Section \ref{abstract}. Equivalently, we will find a continuous and strictly decreasing function $f\in\mathcal{C}$ such that $\lim_{y\to0}z_{f}(y)$ does not exist.
    Let $c$ be the Cantor ternary function on $[0,1]$. Note that $x\mapsto c(x)$ is continuous and increasing on $(0,1)$ and $c(x/3)=c(x)/2$. Let $f(x)=-c(x)-x^{\log2/\log3}$.
    It is clear that $f\in\mathcal{C}$ and $f(x/3)=f(x)/2$. For each $y\in(0,1]$, we have
    $$
    \begin{aligned}
    yf\left(z_{f}(y)y\right)
     & =\int_{0}^{y}f(x)\d x
     \\ & =2\int_{0}^{y}f\left(\frac{1}{3}x\right)\d x
  =6\int_{0}^{\frac{1}{3}y}f(x)\d x=2yf\left(\frac{1}{3}yz_{f}\left(\frac{1}{3}y\right)\right)  =yf\left(yz_{f}\left(\frac{1}{3}y\right)\right).
    \end{aligned}
    $$
    Since $f$ is strictly decreasing, $z_{f}(y)=z_{f}(y/3)$ for $y\in(0,1]$.
    It means that $z_{f}(y)$ is a constant on $(0,1]$ if $\lim_{y\to0}z_{f}(y)$ exists.
    Now, let us look at two particular points of $z_{f}(y)$. We can show that $z_{f}(1)\neq z_{f}(4/9)$.
    Let $z=(\log2/\log3+1)^{-(\log3/\log2)}$. Then, we have $1/3<z\approx 0.46<1/2$.
    For $y=1$, we have $\int_{0}^{1}c(s)\d s=c(z)=1/2$ and $\int_{0}^{c}s^{\log2/\log3} \d s=z^{\log2/\log3}$. Therefore, we get $z_{f}(1)=z<1/2$.
    For $y=4/9$, we have
    $$
    \begin{aligned}
    f\left(\frac{4}{9}z_{f}\left(\frac{4}{9}\right)\right)
    & =\frac{9}{4}\int_{0}^{4/9}f(s)\d s\\
    & =-\frac{9}{4}\left(\frac{1}{\frac{\log2}{\log3}+1}\left(\frac{4}{9}\right)^{\frac{\log2}{\log3}+1}+\frac{1}{12}+\frac{1}{2}\left({\frac{4}{9}-\frac{1}{3}}\right)\right)
    <-0.68<f\left(\ensuremath{\frac{2}{9}}\right)\approx-0.64.
    \end{aligned}
    $$
    As $f$ is strictly decreasing, we have $(4/9)z_{f}(4/9)>2/9$ which implies $z_{f}(4/9)>1/2>z_{f}(1)$.
    As a result, $\lim_{y\to0}z_{f}(y)$ does not exist. Therefore, we have a continuous and strictly decreasing $f$ such that  $\lim_{y\to0}z_{f}(y)$ does not exist.
\end{example}

\subsection{Sufficient condition for monotonicity and convergence}\label{monotonicity}
 In risk management applications,
for a random variable  $X$ modeling a random loss, the behavior of its tail is the most important.
Let $F^{[p,1]}$ be the upper $p$-tail distribution of $F$ (see e.g., \cite{LW21}), namely
    $$
    F^{[p,1]}(x)  =\frac{(F(x)-p)_{+}}{1-p},~~~x\in \R.
    $$
We will see that the dual PELVE of $F^{[p,1]}$ is a part of the dual PELVE of $F$.

\begin{lemma}\label{lem:tail}
 Let $F$ be the distribution function of $X$ with  quantile function  satisfying Assumption \ref{ass:1}. For  $p\in(0,1)$ and $X'\sim F^{[p,1]}$, it holds
    $$\pi_{X'}(\epsilon) =\pi_{X}(\epsilon(1-p)).$$
\end{lemma}
\begin{proof}
    It is clear that $\VaR_{\epsilon}(X')=\VaR_{{\epsilon(1-p)}}(X)$ and $\ES_{\epsilon}(X')=\ES_{{\epsilon(1-p)}}(X)$. Therefore,
    $$
    \begin{aligned}
     \pi_{X'}(\epsilon)
     & =\inf\{d\ge1:\ES_{\epsilon}(X')\le\VaR_{\epsilon/d}(X')\}\\
     & =\inf\{d\ge1:\ES_{\epsilon(1-p)}(X')\le\VaR_{\epsilon(1-p)/d}(X')\}  =\pi_{X}(\epsilon(1-p)).
    \end{aligned}
    $$
    Thus, we have the desired result.
\end{proof}

The tail distribution can provide a condition to check whether the dual PELVE is decreasing.

\begin{proposition}\label{convex}
 Let $F$ be the distribution function of $X$ with  quantile function  satisfying Assumption \ref{ass:1}.
    If $x\mapsto F^{-1}\big((1-p)F(x)+p\big)$ is convex (concave) for all $p\in(0,1)$, then $\pi_{X}$ and $\Pi_X$ are decreasing (increasing).
\end{proposition}
\begin{proof}
    For any $p\in(0,1)$, let $X'\sim F^{[p,1]}$. By Lemma \ref{lem:tail},
    we have $\pi_{X'}(\epsilon)=\pi_{X}(\epsilon(1-p))$. Furthermore,
    we have
    \begin{align*}
 {\left(F^{[p,1]}\right)}^{-1}(t) & =F^{-1}\left(\ensuremath{(1-p)t+p}\right) =F^{-1}\Big((1-p)F\big(F^{-1}(t)\big)+p\Big),\quad t\in[0,1].
    \end{align*}
    Let $U\sim\mathrm{U}(0,1)$, $X=F^{-1}(U)$ and $X'=(F^{[p,1]})^{-1}(U)$.

    We assume that $x\mapsto F^{-1}\big((1-p)F(x)+p\big)$ is a convex
    function on $(\essinf(X),\esssup(X))$ first. Let $f:\R\to\R$ be
    a strictly increasing convex function such that $f(x)=F^{-1}((1-p)F(x)+p)$
    for $x\in(\essinf(X),\esssup(X)).$ Then, we have $X'=f(X)$. By Proposition
    \ref{property}, we get $\pi_{X'}(\epsilon)\ge\pi_{X}(\epsilon)$.
    As $\pi_{X'}(\epsilon)=\pi_{X}\ensuremath{(\epsilon(1-p))}$, we have
    $\pi_{X}\ensuremath{(\epsilon(1-p))}\ge\pi_{X}(\epsilon)$ for all $p\in(0,1)$.
    Thus, $\pi_{X}$ is decreasing. By Proposition \ref{property}, we have  $\Pi_{X}$ is also decreasing.

    On the other hand, if $x\mapsto F^{-1}\big((1-p)F(x)+p\big)$ is concave,
    we have $\pi_{X}\ensuremath{(\epsilon(1-p))}\le\pi_{X}(\epsilon)$ for
    all $p\in(0,1)$ and $\pi_{X}$ is increasing. So is $\Pi_X$.
\end{proof}

The condition $x\mapsto F^{-1}\big((1-p)F(x)+p\big)$ is convex (concave) for all $p\in(0,1)$ is generally hard to check. Intuitively, this condition means that  $F^{[p,1]}$ has a less heavy tail compared to $F$.
We can   further simplify this condition by using  the hazard rate function. For $X\in\X$ with distribution function $F$ and density function $f$, let $S=1-F$ be the survival function and $\eta=f/S$ be the hazard rate function. As $F$ is continuous and strictly increasing, $S$ is continuous and strictly decreasing.

\begin{theorem}\label{hazard_rate}
    For  $X$ with quantile function satisfying Assumption \ref{ass:1}, let $\eta$ be the hazard rate function of $X$. If $1/\eta$ is second-order differentiable and convex (concave), then $\pi_{X}$ and $\Pi_{X}$ are decreasing (increasing).
\end{theorem}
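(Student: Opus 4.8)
The plan is to deduce Theorem \ref{hazard_rate} from Proposition \ref{convex} by showing that the convexity of $1/\eta$ forces $x\mapsto F^{-1}\big((1-p)F(x)+p\big)$ to be convex for every $p\in(0,1)$ (and likewise for the concave case). Write $S=1-F$ and $g=1/\eta=S/f$, so that $g$ is the reciprocal hazard rate. The first step is to get a tractable description of $\psi_p(x):=F^{-1}\big((1-p)F(x)+p\big)$. Differentiating the defining relation $S(\psi_p(x))=(1-p)S(x)$ with respect to $x$ gives $f(\psi_p(x))\,\psi_p'(x)=(1-p)f(x)$, hence
$$
\psi_p'(x)=\frac{(1-p)f(x)}{f(\psi_p(x))}=\frac{S(\psi_p(x))}{f(\psi_p(x))}\cdot\frac{f(x)}{S(x)}=\frac{g(\psi_p(x))}{g(x)}.
$$
Thus $\psi_p$ is convex (on the interior of the support) if and only if $x\mapsto g(\psi_p(x))/g(x)$ is increasing.

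The second step is to analyze monotonicity of $x\mapsto g(\psi_p(x))/g(x)$ using the convexity of $g$. Taking logarithms, it suffices that $x\mapsto \log g(\psi_p(x))-\log g(x)$ is increasing, i.e. that its derivative $\frac{g'(\psi_p(x))}{g(\psi_p(x))}\psi_p'(x)-\frac{g'(x)}{g(x)}\ge 0$ for all $x$. Substituting $\psi_p'(x)=g(\psi_p(x))/g(x)$, this is equivalent to
$$
\frac{g'(\psi_p(x))}{g(x)}\ge\frac{g'(x)}{g(x)},\qquad\text{i.e.}\qquad g'(\psi_p(x))\ge g'(x).
$$
Since $\psi_p(x)\ge x$ always (because $(1-p)F(x)+p\ge F(x)$ and $F^{-1}$ is increasing) and $g'$ is nondecreasing when $g$ is convex (using the assumed second-order differentiability so that $g''\ge 0$), we get $g'(\psi_p(x))\ge g'(x)$, hence $\psi_p$ is convex. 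By Proposition \ref{convex}, $\pi_X$ is decreasing. The concave case is entirely symmetric: $g$ concave gives $g'$ nonincreasing, so $g'(\psi_p(x))\le g'(x)$, $\psi_p$ is concave, and Proposition \ref{convex} yields that $\pi_X$ is increasing.

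The main obstacle I anticipate is not the computation above but handling the endpoints and regularity carefully: one must ensure $\psi_p$ is well-defined and twice differentiable on the interior $(\essinf(X),\esssup(X))$, that the density $f$ is positive there (which follows from $X\in\X$ having a strictly increasing continuous distribution function, though one should be slightly careful since $\X$ was defined without explicitly assuming a density — one may need to restrict to $F$ absolutely continuous with density $f$, as is implicit once $\eta=f/S$ is invoked), and that convexity verified on the open support interval is enough to apply Proposition \ref{convex}. A minor point is that Proposition \ref{convex} requires convexity of $\psi_p$ as a function on $\R$; as in the proof of Proposition \ref{convex} one extends $\psi_p$ from the support interval to a strictly increasing convex function on all of $\R$, which is harmless since $\pi_X$ depends only on the distribution of $X$. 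Once these regularity matters are dispatched, the argument is the short chain $g''\ge 0 \Rightarrow g' \text{ nondecreasing} \Rightarrow g'(\psi_p(x))\ge g'(x) \Rightarrow \psi_p'' \ge 0 \Rightarrow \pi_X \text{ decreasing}$.
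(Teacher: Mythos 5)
Your proof is correct, and it is genuinely shorter and more transparent than the paper's. The paper first substitutes $s(x)=\log S(x)$ to convert the map $x\mapsto F^{-1}((1-p)F(x)+p)$ into a shift $G_\theta(x)=s^{-1}(s(x)+\theta)$ (Step 1), then shows convexity of $1/\eta$ is equivalent to log-convexity of $(-s^{-1})'$ (Step 2), and finally proves $G_\theta'$ is nondecreasing through a four-point mean-value-theorem argument (Step 3). You instead differentiate $S(\psi_p(x))=(1-p)S(x)$ once to obtain the clean first-order identity $\psi_p'(x)=g(\psi_p(x))/g(x)$ with $g=1/\eta$, take a logarithm, and observe that the resulting derivative is $\bigl(g'(\psi_p(x))-g'(x)\bigr)/g(x)$, whose sign follows immediately from $\psi_p(x)\ge x$ and the monotonicity of $g'$. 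This replaces the paper's Steps 2 and 3 by a one-line computation, at the (negligible) cost of assuming $\psi_p$ is once differentiable — which is already implicit in the paper's hypotheses, since $1/\eta=S/f$ being twice differentiable forces $f$ to exist and be smooth. Your concluding remarks on regularity and on extending $\psi_p$ to a strictly increasing convex function on all of $\R$ correctly identify the only bookkeeping needed to invoke Proposition \ref{convex}, and that bookkeeping matches what the paper itself does. In short: same reduction to Proposition \ref{convex}, but a different and cleaner route to the convexity of $\psi_p$.
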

The proof of Theorem \ref{hazard_rate} is provided in Appendix \ref{app:proof_sec6}.

\begin{example}
    For the normal distribution, we can give a   short proof of the convexity of $1/\eta$. Let $S$ be the survival function of the standard normal distribution and $f$ its density.
    Let $I\left(x\right)=1/\eta\left(x\right)= S (x )/f (x )=\exp\left(x^{2}/2\right)\int_{-\infty}^{-x}\exp\left(-s^{2}/2\right)\d s$.
    One can easily see that
    \begin{equation}
    I'\left(x\right)=xI\left(x\right)-1\label{eq:first-derivative}
    \end{equation}
    which gives $I''\left(x\right)=xI'\left(x\right)+I\left(x\right)$.
    This with \eqref{eq:first-derivative} implies that
    \begin{equation}
    I''\left(x\right)=\left(1+x^{2}\right)I\left(x\right)-x.\label{eq:second-derivative}
    \end{equation}
    First, consider the negative line i.e., $x<0$. In this case \eqref{eq:first-derivative} and \eqref{eq:second-derivative} imply
    $
    I' (x ) =xI (x )-1<0,
    $
    and $
    I'' (x ) = (1+x^{2} )I (x )+ (-x )>0.
    $
    The implication of the two relations is that $I$ is a convex and decreasing function on negative line. Now we consider the case $x>0$.
    In this case, let $i\left(x\right)=I'\left(-x\right)$. From what we have proved it is clear that $i$ is an increasing function on $x>0$.
    On the other hand, we have $I\left(x\right)+I\left(-x\right)=1/f \left(x\right)=\sqrt{2\pi}\exp\left(x^{2}/2\right)$.
    This combined with \eqref{eq:first-derivative} gives us
    $$
    I'\left(x\right) =x\left(I\left(x\right)+I\left(-x\right)\right)+i\left(x\right)=x\sqrt{2\pi}\exp\left(x^{2}/2\right)+i\left(x\right),x>0.
    $$
    This means $I'$ is an increasing function on $x>0$ as it is a summation of two other increasing functions, so $I$ is convex on the positive line as well.
\end{example}

Figure \ref{hazard} presents the curve   $1/\eta$ for the generalized Pareto distribution, the Normal distribution, the t-distribution and the Lognormal distribution.
For distributions $\mathrm{GPD}(1/2)$, $\mathrm{N}(0,1)$ and $\mathrm{t}(2)$, we can see that the curves $1/\eta$ are convex, and this coincides  with decreasing PELVE shown in Example \ref{ex:decreasing-PELVE}.
For the Lognormal distribution, the shape of $1/\eta$ depends on $\sigma$. As shown in Example \ref{ex:decreasing-PELVE}, the PELVE for $\mathrm{LN}(\sigma)$ is visibly decreasing for $\sigma^{2}=0.04$ and increasing for $\sigma=1$. Corresponding to the above observations, we see that $1/\eta$ is convex for $\sigma^{2}=0.04$ and concave for $\sigma^{2}=1$.

\begin{figure}[htpb]
    \centering
        \caption{$1/\eta$ for $\mathrm{GPD}(1/2)$, $\mathrm{N}(0,1)$, $\mathrm{t}(2)$, $\mathrm{LN}(0,2)$ and $\mathrm{LN}(1)$ in blue curves; in the right panel, the red curve is  linear}
    \label{hazard}
    \includegraphics[width=2.5in]{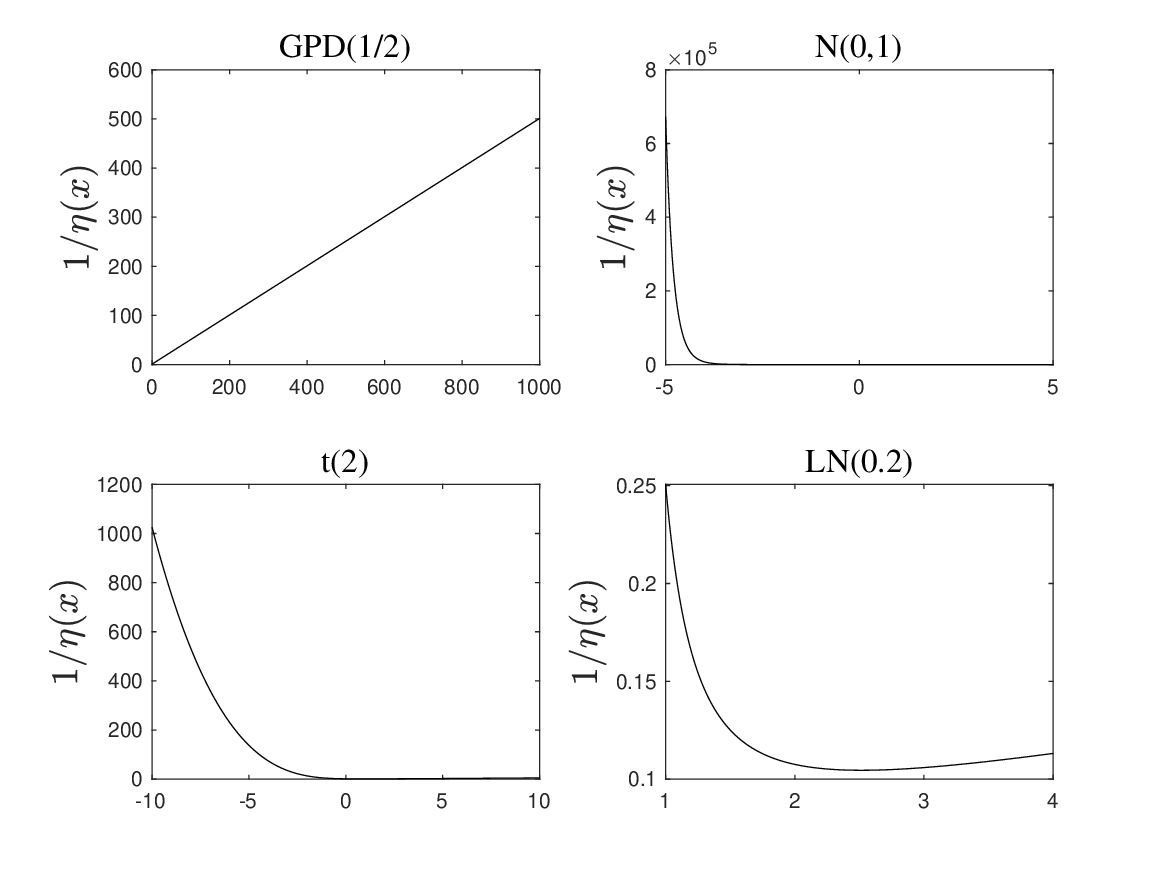} \includegraphics[width=2.5in]{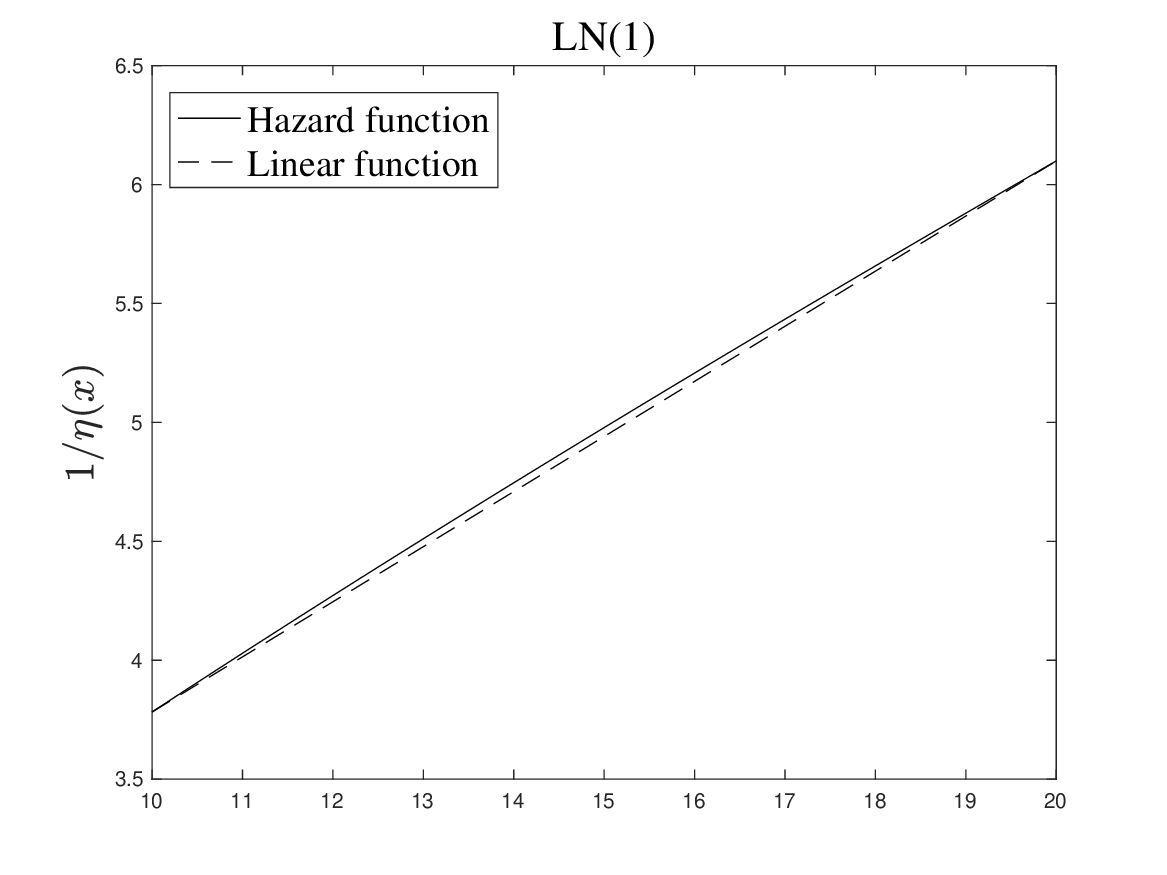}
\end{figure}

\begin{corollary}\label{cor:concave_hazard}
    If the hazard rate of a  random variable $X$ is second-order differentiable and concave, then $\pi_{X}$ and $\Pi_X$ are decreasing.
\end{corollary}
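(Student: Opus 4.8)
The goal is to deduce Corollary \ref{cor:concave_hazard} from Theorem \ref{hazard_rate}, which says that $\pi_X$ is decreasing whenever $1/\eta$ is second-order differentiable and convex. So the plan is simply to show that concavity of $\eta$ forces convexity of $1/\eta$ (with the requisite smoothness), and then invoke Theorem \ref{hazard_rate}. First I would note that $\eta = f/S > 0$ on the support of $X$, since $F$ is continuous and strictly increasing, so $1/\eta$ is well defined and strictly positive; moreover $\eta$ is second-order differentiable (being concave it is automatically twice differentiable a.e., but the corollary is stated under the twice-differentiability hypothesis, so I take that as given), hence so is $1/\eta$.

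The analytic heart is the computation $\left(1/\eta\right)'' = \left(2(\eta')^2 - \eta\,\eta''\right)/\eta^3$. Since $\eta > 0$, the denominator is positive, and since $\eta$ is concave we have $\eta'' \le 0$; combined with $2(\eta')^2 \ge 0$ this gives $\left(1/\eta\right)'' \ge 0$, i.e. $1/\eta$ is convex. (One should phrase this at points where $\eta$ is twice differentiable; concavity guarantees this holds except on at most a countable set, and convexity of $1/\eta$ on an interval follows from nonnegativity of the second derivative a.e. plus continuity, or one simply works under the stated twice-differentiable hypothesis throughout.) Then Theorem \ref{hazard_rate} applies verbatim and yields that $\pi_X$ is decreasing.

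I do not expect any genuine obstacle here — the corollary is a one-line consequence of the quotient-rule identity for $(1/\eta)''$ together with $\eta>0$ and $\eta''\le 0$. The only point requiring a moment of care is the smoothness bookkeeping: Theorem \ref{hazard_rate} demands that $1/\eta$ be \emph{second-order differentiable} and convex, so I would make sure to invoke the corollary's hypothesis that $\eta$ itself is second-order differentiable, from which second-order differentiability of $1/\eta = \eta^{-1}$ on the (open) support follows by the chain rule since $\eta$ is bounded away from $0$ on compact subsets. No analogue for the concave case is claimed (concave $\eta$ need not give concave $1/\eta$), consistent with the one-directional statement of the corollary.
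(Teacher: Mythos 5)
Your proposal is correct and follows exactly the paper's own argument: compute $(1/\eta)'' = (2(\eta')^2 - \eta\eta'')/\eta^3$, use $\eta>0$ and $\eta''\le 0$ to conclude convexity of $1/\eta$, then invoke Theorem \ref{hazard_rate}. The additional smoothness bookkeeping you supply is sound but not something the paper dwells on.
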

\begin{proof}
    Just note that  if $\eta$ is concave, then $\eta \eta''$ is non-positive. It follows that   $$
    \left(\frac{1}{\eta}\right)''=\left(-\frac{\eta'}{\eta^{2}}\right)'=\frac{2\left(\eta'\right)^{2}-{{\eta\eta''}}}{\eta^{3}}\ge0.
    $$
    Thus, ${1}/{\eta}$ is convex, and the desired statement follows from Theorem \ref{hazard_rate}.
\end{proof}
    The  corollary above is a result of the fact that the concavity of $\eta$ implies convexity of $1/\eta$. Therefore, concave $\eta$ always leads to decreasing PELVE. For example, the Gamma distribution $\mathrm{G}(\alpha,\lambda)$ with density $f(x)=\lambda^\alpha t^{\alpha-1}e^{-\lambda t}/\Gamma(\alpha)$ has concave hazard rate function  when $\alpha>1$.
    Furthermore, by Theorem \ref{hazard_rate}, we can easily find  more well-known distributions that have decreasing $\pi_{X}$.

As the tail distribution determines $\pi_{X}$ around $0$, we can focus on the tail distribution to discuss the convergence of $\pi_{X}$ at $0$.
Note that if the survival distribution function is regularly varying, then
its tail parameter one-to-one corresponds to the limit of $\Pi_X$ at $0$ as shown by Theorem 3 of \cite{LW22}. Hence, the limit of $\Pi_X$, if it exists, can be useful as a measure of tail heaviness, and it is well defined even for distributions that do not have a heavy tail.
 By the monotone convergence theorem, we have $\lim_{\epsilon\to0}\pi_{X}(\epsilon)$ exists if $\pi_X$ is monotone.
The limit may be finite or infinite.

\begin{corollary}\label{cor:limit}
    For $X$ with quantile function  satisfying Assumption \ref{ass:1}, let $\eta$ be the hazard rate of $X$. If $1/\eta(x)$ is second-order differentiable and convex (concave) in $\left(\ensuremath{F^{-1}(\delta),\esssup(X)}\right)$ for some $\delta\in(0,1)$, then $\lim_{\epsilon\to0}\pi_{X}(\epsilon)$ exists.
    In particular, this is true if $\eta$ is second-order differentiable and concave on $\left(\ensuremath{F^{-1}(\delta),\esssup(X)}\right)$.
\end{corollary}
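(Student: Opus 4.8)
The plan is to reduce the statement to Theorem \ref{hazard_rate}, which only gives monotonicity when the convexity/concavity hypothesis holds globally, by passing to an upper-tail distribution on which that hypothesis does hold on the whole support, and then to transport the resulting monotonicity back to $\pi_X$ via Lemma \ref{lem:tail}.

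First I would fix $\delta\in(0,1)$ as in the hypothesis and set $X'\sim F^{[\delta,1]}$. Since $F\in\M$ is continuous and strictly increasing, the quantile function of $X'$ is $t\mapsto F^{-1}\big((1-\delta)t+\delta\big)$, which is continuous and strictly increasing, so $X'\in\X$; moreover $\E[X']=\ES_{\delta}(X)<\infty$, so $X'\in L^{1}$. The crucial observation is that the hazard rate of $X'$ is the restriction of $\eta$ to $\left(F^{-1}(\delta),\esssup(X)\right)$: writing $S=1-F$ and $f$ for the density, on that interval $X'$ has survival function $(1-\delta)^{-1}S$ and density $(1-\delta)^{-1}f$, hence $\eta_{X'}=f/S=\eta$ there. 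Consequently $1/\eta_{X'}$ is second-order differentiable and convex (concave) on the whole support of $X'$.

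Then I would apply Theorem \ref{hazard_rate} to $X'$ to conclude that $\pi_{X'}$ is decreasing (respectively increasing), hence monotone, on $(0,1]$. By Lemma \ref{lem:tail}, $\pi_{X'}(\epsilon)=\pi_X(\epsilon(1-\delta))$ for $\epsilon\in(0,1]$, so $\epsilon\mapsto\pi_X(\epsilon)$ is monotone on $(0,1-\delta]$; being also bounded below by $1$, it has a limit in $[1,\infty]$ as $\epsilon\to0^{+}$, and this limit coincides with $\lim_{\epsilon\to0}\pi_X(\epsilon)$ (it may equal $+\infty$, consistent with the remark preceding the corollary). For the ``in particular'' clause, concavity of $\eta$ on $\left(F^{-1}(\delta),\esssup(X)\right)$ gives $\eta\eta''\le0$ there, so, exactly as in the proof of Corollary \ref{cor:concave_hazard}, $\left(1/\eta\right)''=(2(\eta')^{2}-\eta\eta'')/\eta^{3}\ge0$ on that interval, whence $1/\eta$ is convex there and the first part applies.

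I do not expect a genuine obstacle here: the single point requiring care is the identity $\eta_{X'}=\eta$ on $\left(F^{-1}(\delta),\esssup(X)\right)$, since it is precisely this that upgrades the \emph{local} hypothesis on $1/\eta$ to the \emph{global} hypothesis required by Theorem \ref{hazard_rate}. Everything else is a routine combination of Lemma \ref{lem:tail}, Theorem \ref{hazard_rate}, and the elementary fact that a bounded monotone function on $(0,1-\delta]$ has a one-sided limit at $0$.
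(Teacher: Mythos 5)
Your proof is correct and takes essentially the same route as the paper: pass to the upper-tail distribution $F^{[\delta,1]}$, observe that its hazard rate agrees with $\eta$ on $\left(F^{-1}(\delta),\esssup(X)\right)$, apply Theorem~\ref{hazard_rate} to get monotonicity of $\pi_{X'}$, and transport it back via Lemma~\ref{lem:tail}. Your bookkeeping is in fact slightly more careful than the paper's: Lemma~\ref{lem:tail} gives $\pi_{X'}(\epsilon)=\pi_X(\epsilon(1-\delta))$, so monotonicity holds on $(0,1-\delta]$ (the paper writes $(0,\delta)$, a harmless slip since either interval yields the limit at $0$).
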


\begin{proof}
    Let $X'\sim F^{[\delta,1]}$. Then, the survival function for $X'$ is $S_{X'}(x)=S(x)/(1-p)$ for $x\ge F^{-1}(\delta)$.
    The density function is $f_{X'}(x)=f(x)/(1-p)$ for $x\ge F^{-1}(\delta)$.
    Therefore, the hazard rate function is $\eta_{X'}(x)=f(x)/S(x)=\eta(x)$ for $x\ge F^{-1}(\delta)$.

    As $1/\eta(x)$ is convex (concave) when $x>F^{-1}(\delta)$, we have $1/\eta_{X'}(x)$ is convex (concave).
    By Theorem \ref{hazard_rate}, we have $\pi_{X'}(\epsilon)$ is decreasing (increasing) on $(0,1)$.
    As a result, we have $\pi_{X}(\epsilon)$ is decreasing (increasing) on $(0,\delta)$ and $\lim_{\epsilon\to0}\pi_{X}(\epsilon)$ exists.

    By Corollary \ref{cor:concave_hazard}, if $\eta$ is concave on $(F^{-1}(\delta),\esssup(X))$, $1/\eta$ is convex on  $(F^{-1}(\delta),\esssup(X))$ and $\lim_{\epsilon \to 0} \pi_X(\epsilon)$ also exists.
\end{proof}

\begin{example}
If $\lim_{\epsilon\to0} \pi_X(\epsilon)$ is a constant, we have $\lim_{\epsilon \to 0}\Pi_{\epsilon}(X)=\lim_{\epsilon\to0} \pi_X(\epsilon)$ as $\pi_{X}(\Pi_{X}(\epsilon)\epsilon)=\Pi_{X}(\epsilon)$.
We give the numerical values of $\Pi_X(\epsilon)$ at very small probability levels $\epsilon$ for normal, t, and log-normal distributions. These distributions do not have a constant PELVE curve, and using Corollary \ref{cor:limit} we can check that  their PELVE have limits.
As we can see from Table \ref{table:limit}, PELVE can still distinguish the heaviness of the tail even when $\epsilon$ is very small. The heavier tailed distributions report a higher PELVE value. For the normal distribution and the log-normal distribution with $\sigma=0.2$, the value of PELVE is close to $e\approx2.7183$ as $\epsilon \downarrow 0$.
From the numerical values, it is unclear whether $\Pi_X(\epsilon)\to e$ for all log-normal distributions, but there is no practical relevance to compute $\Pi_X(\epsilon)$ for $\epsilon<10^{-11}$ in applications.
\begin{table}[htbp]
\def\arraystretch{1.4}
    \centering{}
    \caption{ Values of $\Pi_{X}(\epsilon)$}\label{table:limit} %
    \begin{tabular}{m{1.6cm}<{\centering}  m{1.6cm}<{\centering} m{1.6cm}<{\centering} m{1.6cm}<{\centering} m{1.6cm}<{\centering} m{1.6cm}<{\centering} m{1.6cm}<{\centering}}
    \hline \hline
    Distribution& N  & LN($1$)  & LN($0.5$) &LN($0.2$)  & t($2$) & t($3$)\\
    \hline
    $\epsilon={10}^{-10}$ &2.6884  &2.9167&2.7944 & 2.7290 &4.0000&3.3750 \\
    \hline
    $\epsilon={10}^{-11}$ &2.6909 &2.9077& 2.7920 &2.7287 &4.0000&3.3750 \\
    \hline\hline
    \end{tabular}
\end{table}

\end{example}

\section{Applications to datasets used in insurance}\label{sec:6}
In this section, we apply the PEVLE calibration techniques to datasets used in insurance to show how to use the calibrated distribution in estimating risk measure values and simulation.

\subsection{Dental expenditure data}
In this example, we apply the calibration model to the 6494 complete household component's total dental expenditure data from  Medical Expenditure Panel Survey for 2020. An earlier version of the same dataset  is used by \cite{BCLPPY10}  to study the relationship between  worker absenteeism and  overweight or obesity. 
The main purpose of this experiment is to construct tractable models, with continuous and simple quantile functions, which have similar risk measure values   as the original dataset, and the same PELVE at certain levels.
We present in Figure \ref{fig:MPC} two quantile functions calibrated from $\Pi_X(\epsilon_1)$ and $\Pi_X(\epsilon_2)$, with $(\epsilon_1,\epsilon_2)=(0.01,0.05)$ and $(\epsilon_1,\epsilon_2)=(0.05,0.1)$, respectively. The two calibrated quantile functions are scaled up according to  the empirical  $\VaR_{\epsilon_1}(X)$ and $\VaR_{\epsilon_2}(X)$.
By Theorem \ref{thm:two_points}, we can calibrate the quantile functions from Case 4 when $(\epsilon_1,\epsilon_2)=(0.01,0.05)$,  and from Case 5 when $(\epsilon_1,\epsilon_2)=(0.05,0.1)$. As mentioned before, for $(\epsilon_1,\epsilon_2)=(0.05,0.1)$,  we set the  calibrated quantile function in $(0, c_1\epsilon_1)$  as the Pareto quantile function. Hence, there is no flat part in the two calibrated quantile functions shown in Figure \ref{fig:MPC}.
As we can see, both the two calibrated quantile functions fit the empirical quantile functions well.
The calibrated quantile function can be regarded as a special parameterized  model for tail distribution, which can fit the value of $\VaR$  and $\ES$ at specified levels.
    With the parameterized calibrated model, we can estimate the value of tail risk measures (see \cite{LW21}) such as ES, VaR, and Range-VaR (RVaR), amongst others. 
   In Tables \ref{tab:MPC_ES} and \ref{tab:MPC_RVaR}, we compute the values of ES and RVaR for the calibrated model and compare them with empirical ES and RVaR values, respectively, where  the risk measure RVaR is defined as $$\mathrm{RVaR}_{\alpha,\beta}(X)=\frac{1}{\beta-\alpha}\int_\beta^\alpha \VaR_\gamma(X) \d \gamma$$ for $0\le \alpha<\beta<1$; see \cite{CDS10} and \cite{ELW18}. As we scale the calibrated quantile function  to empirical $\VaR_{\epsilon_1}(X)$ and $\VaR_{\epsilon_2}(X)$, the calibrated ES and empirical ES are identical at levels $\epsilon_1\Pi_X(\epsilon_1)$ and $\epsilon_2\Pi_X(\epsilon_2)$ by the definition of PELVE. For other probability levels, the calibrated ES and RVaR  in Tables \ref{tab:MPC_ES} and \ref{tab:MPC_RVaR} are   close to their empirical counterparts. When $(\epsilon_1,\epsilon_2)=(0.01,0.05)$, it may only be useful to compute calibrated
    $\ES_p(X)$ for $p<0.05\Pi_X(0.05)=0.11591$ because the calibrated quantile function is arbitrary beyond the level $0.11591$. If we need to estimate ES or RVaR for a larger probability level, we can choose a higher $\epsilon_2$ as long as $\E[X]\le \VaR_{\epsilon_2}(X)$ is satisfied. For this dataset, the highest $\epsilon_2$ we can use is 0.1983. 

 Using the methods in Section \ref{sec:points}, for quantile levels between $(0, \epsilon_1)$, the distribution calibrated from one point $(\epsilon_1, c_1)$ is the same as the one calibrated from two points  $(\epsilon_1,c_1)$ and $(\epsilon_2, c_2)$.  Hence, the   results for $\ES_p$ of the one-point calibrated function are also shown in Tables \ref{tab:MPC_ES} and \ref{tab:MPC_RVaR} in the cells $p \le \epsilon_1$.

     \begin{figure}[h]
    \centering
       \caption{Empirical and calibrated  $\VaR_{\epsilon}$ for the dental expenditure data}\label{fig:MPC}
    \includegraphics[scale=0.5]{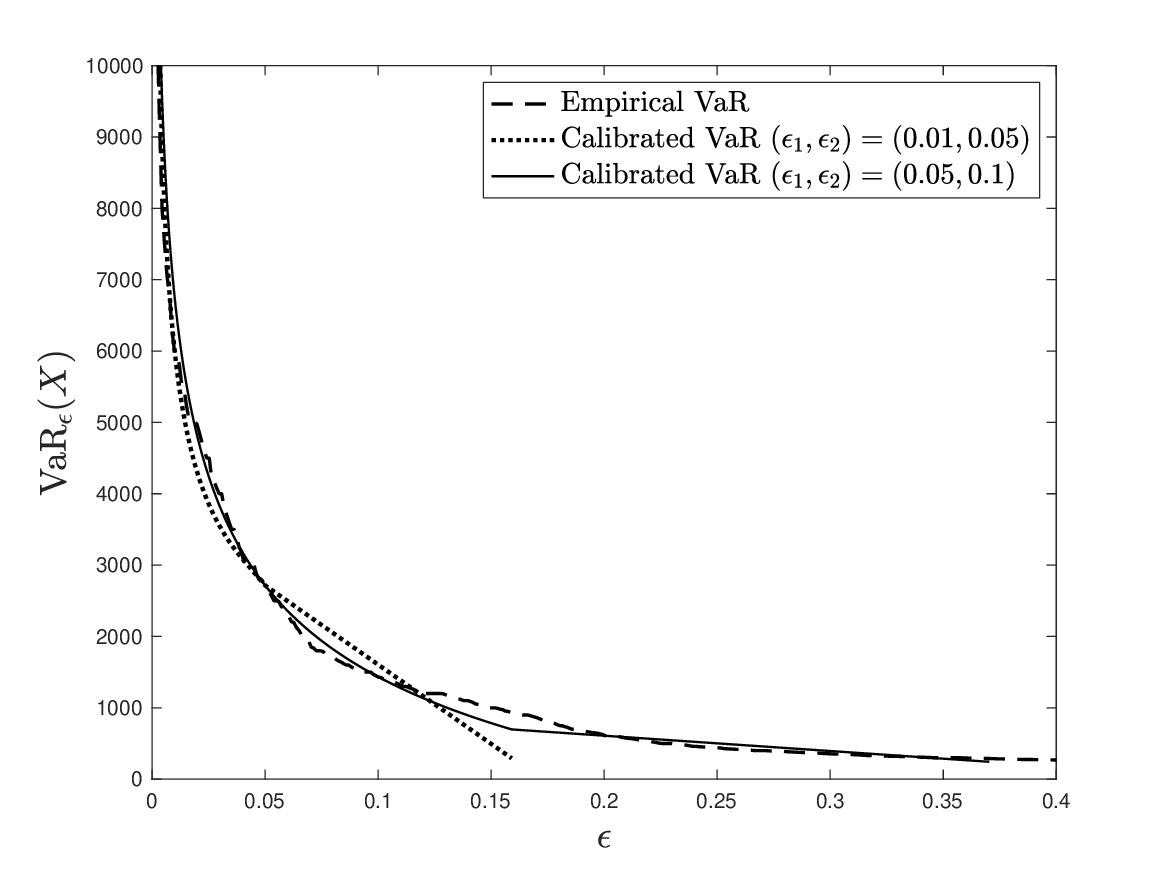}
    \end{figure}

\begin{table}[htbp]
\def\arraystretch{1.4}
    \centering{}
    \caption{ Empirical ES and calibrated ES  for  the dental expenditure data}\label{tab:MPC_ES} %
    \begin{tabular}{c c c c c c}
    \hline \hline
    $p$& 0.01  & 0.05  & 0.1 &0.2  & 0.3\\
    \hline
    Empirical $\ES_p$ & 10073.1 &5361.7&3624.8&2317.9&1696.7 \\
    Calibrated $\ES_p$ from $(\epsilon_1, \epsilon_2)=(0.01, 0.05)$  &11703.9&5357.7&3759.1& - & -\\
    Calibrated $\ES_p$ from $(\epsilon_1, \epsilon_2)=(0.05, 0.1)$ &10878.1&5439.6&3711.3&2293.7&1696.4\\
    \hline\hline
    \end{tabular}
\end{table}

\begin{table}[htbp]
\def\arraystretch{1.4}
    \centering{}
    \caption{ Empirical RVaR and calibrated RVaR  for the dental expenditure data}\label{tab:MPC_RVaR} %
    \begin{tabular}{c c c c c c}
    \hline \hline
    $(\alpha, \beta)$& $(0.01,0.02)$  & $(0.02,0.05) $ & $(0.05,0.1)$  \\
    \hline
    Empirical $\RVaR_{\alpha,\beta}$ & 5748.5 &3662.4&1887.9\\
    Calibrated $\RVaR_{\alpha,\beta}$ from $(\epsilon_1, \epsilon_2)=(0.01, 0.05)$ &5003.7&3360.2&2160.6&  \\
    Calibrated $\RVaR_{\alpha,\beta}$ from $(\epsilon_1, \epsilon_2)=(0.05,0.1)$ &5634.6&3561.8&1983.1\\
    \hline\hline
    \end{tabular}
\end{table}

\subsection{Hospital costs data}

 In this example, we apply the calibration process to the Hospital Costs data of \cite{frees2009regression} which were originally from the Nationwide Inpatient Sample of the Healthcare Cost and Utilization Project (NIS-HCUP). The data contains 500 hospital costs observations with 244 males and 256 females which can be regarded as the losses of  the health insurance policies. Using the calibration model of the two-point constraint problem, we calibrate quantile functions for females and males from PELVE at probability levels $\epsilon_1=0.05$ and $\epsilon_2=0.1$, which are shown in Figure \ref{fig:hos_VaR}. Except for estimating the risk measure, the calibrated distribution is useful in simulation. Assume the insurance company wants to know the top 10\% hospital costs; that is $X|X>\VaR_{0.1}(X)$ where $X$ is the hospital costs.
 There are only  24 available data for males and 25 available data for females, which would be not enough for making statistically solid decisions. To generate more pseudo-data points, we can simulate data from the calibrated distribution; that is, we simulate data from  $F^{[p,1]}$ where $F$ is the calibrated distribution in Figure \ref{fig:hos_VaR}. Taking $p=0.9$, we have $F^{[p,1]}(t)=\VaR_{(1-p)(1-t)}(X)$ with $\VaR_{t}(X)$ from Figure \ref{fig:hos_VaR}.  We simulate 1000 data from the calibrated distributions based on PELVE at $\epsilon_1=0.05$ and $\epsilon_2=0.1$. In Figure \ref{fig:hos_qq}, we present two QQ plots of simulated data against empirical data for females and males respectively.  As we can see, the simulated data has a similar distribution as the empirical data.  
 Those simulated pseudo-data points can be used for estimating risk measures or making other decisions.  For example, the simulated hospital cost can be used to design health insurance contrasts or set the premium in complex systems, where sometimes methods based on simulated data are more convenient to work with than methods relying on distribution functions.
 This may be seen as  an alternative, smoothed, version of bootstrap; recall that the classic bootstrap sample can only take the   values represented in the dataset. Furthermore, we compare the simulated data of hospital costs for females and males in Figure \ref{fig:FM}, which shows that the distribution of the hospital costs for females has a heavier tail than that for males.

     \begin{figure}[h]
    \centering
    \includegraphics[scale=0.5]{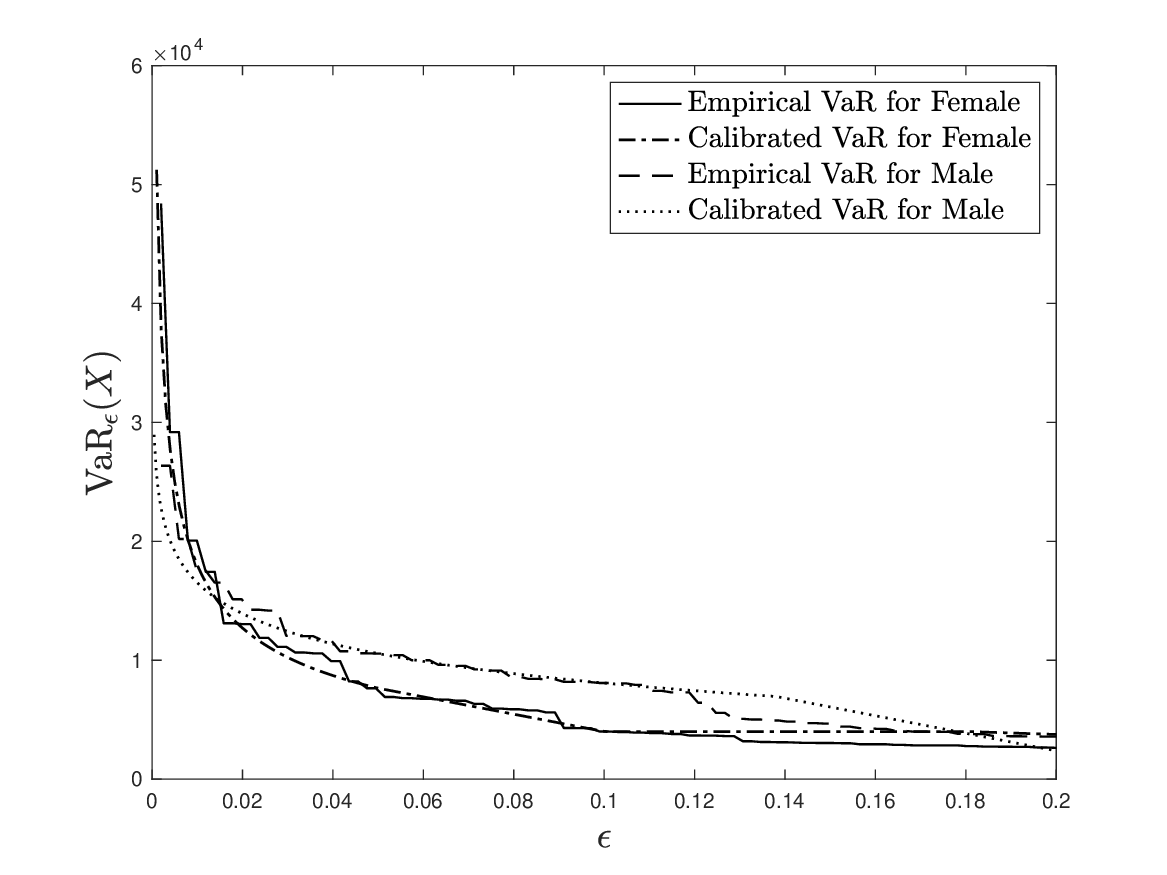}
   \caption{Empirical and calibrated  $\VaR_{\epsilon}$ for the hospital costs data}\label{fig:hos_VaR}
    \end{figure}

   \begin{figure}[h]
    \centering
    \begin{subfigure}[b]{0.5\textwidth}
    \centering
     \includegraphics[scale=0.4]{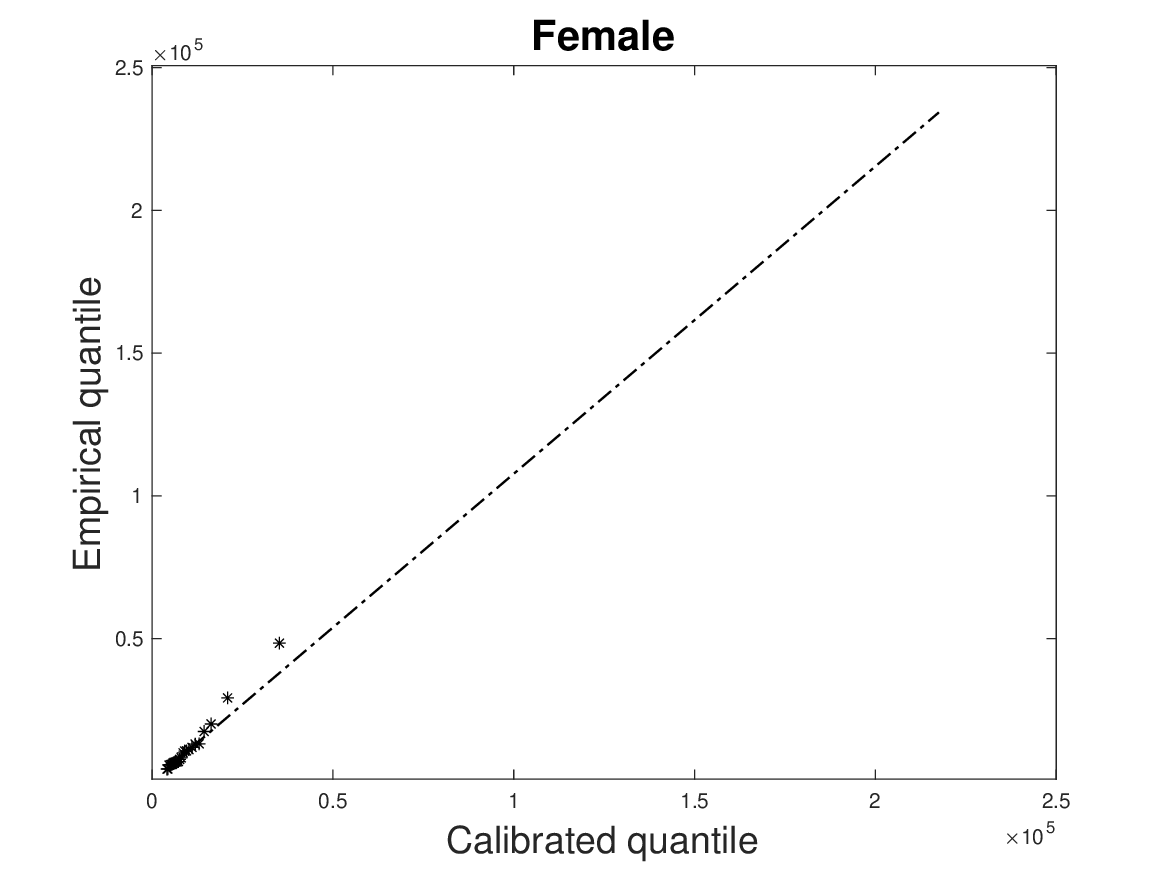}
     \subcaption{Hospital costs for female}\label{fig:F}
     \end{subfigure}~
      \begin{subfigure}[b]{0.5\textwidth}
      \centering
      \includegraphics[scale=0.4]{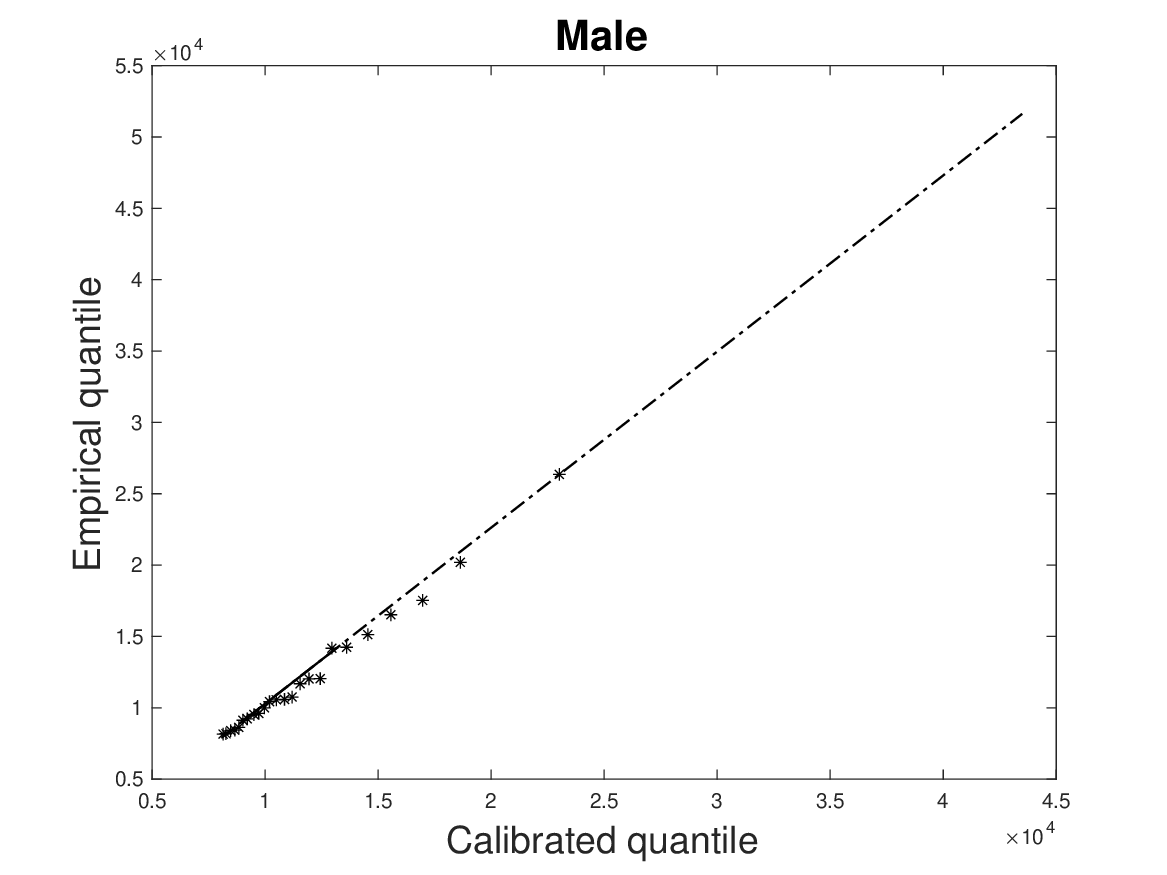}
      \subcaption{Hospital costs for male}\label{fig:M}
      \end{subfigure}
   \caption{QQ plot: simulated data VS  empirical data}\label{fig:hos_qq}
    \end{figure}
  \begin{figure}[h]
  \centering
       \includegraphics[scale=0.5]{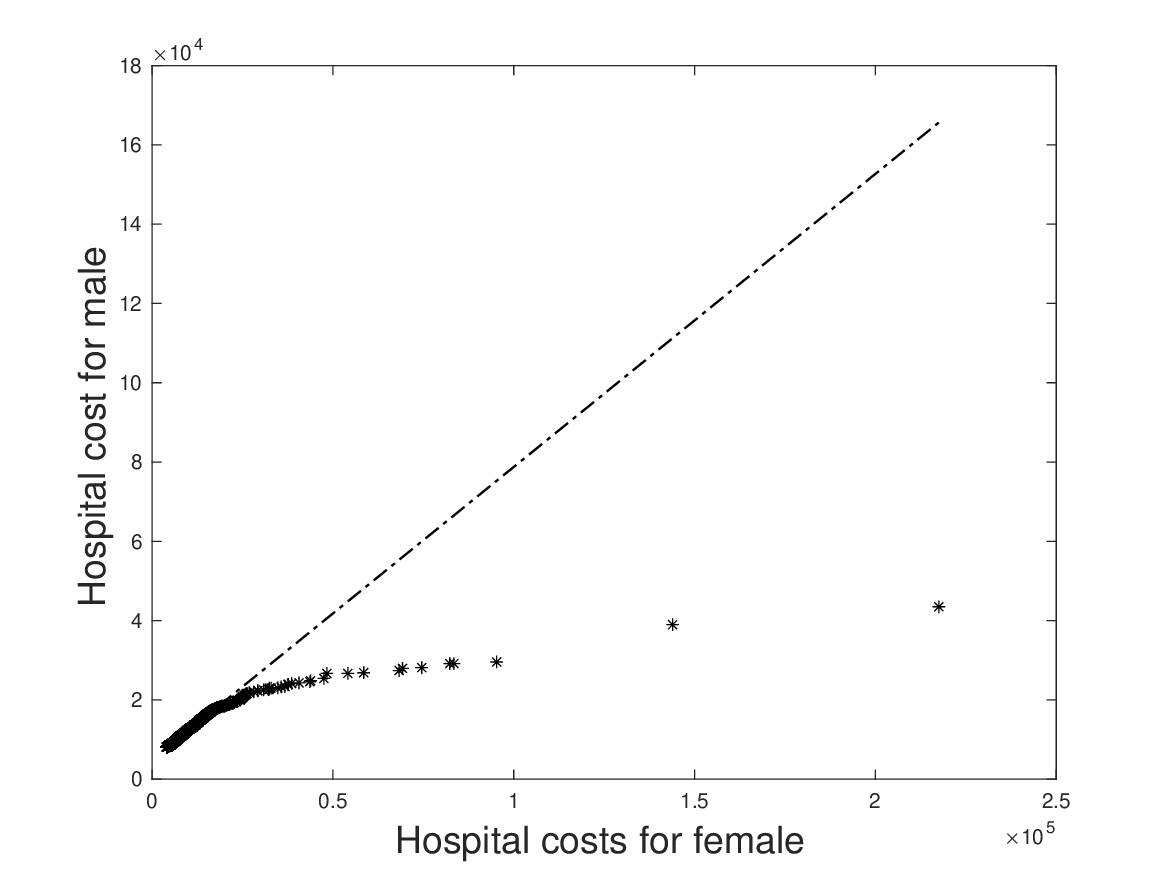}
       \caption{QQ plot of simulated data of hospital costs: female VS male}\label{fig:FM}
       \end{figure}

\section{Conclusion}\label{sec:conclusion}

In this paper, we offer several contributions to the calibration problem and properties of the PELVE.
The calibration problem concerns, with some given values from a PELVE curve, how one can build a distribution that has this PELVE.
We solve a few settings of calibration based on a one-point constraint, a two-point constraint, or the entire curve constraint.
In particular,  the calibration  for a given PELVE curve involves solving an integral equation $\int_{0}^{y}f\left(s\right)\d s=yf\left(z\left(y\right)y\right)$ for a given function  $z$, and this requires some advanced analysis and a numerical method in differential equations.
For the case that $z$ is a constant curve, we can identify all  solutions, which are surprisingly complicated. In addition, we see that if $\pi_X$ is a constant larger than $e$, which is observed from typical values in financial return data (\cite{LW22}), $X$ share the same tail behavior with the corresponding  Pareto solution.
We also applied our calibration techniques to two datasets used in insurance.

On the technical side, we study whether the PELVE is monotone and whether it converges at 0.
We show that the monotonicity of the PELVE is associated with the shape of the hazard rate.
If the inverse of the hazard rate is convex (concave), the PELVE is decreasing (increasing).
The monotonicity at the tail part of the PELVE leads to conditions to check the convergence of the PELVE at $0$.
If the inverse of the hazard rate is convex (concave) at the tail of the distribution, the limit of the PELVE at $0$ exists.

 There are several open questions related to PELVE that we still do not fully understand. One particular such question is whether the tail behavior, e.g., tail index, of a distribution is completely determined by its PELVE. We have seen that this holds true in the case of a constant PELVE (see Theorem \ref{thm:constant}), but we do not have a general conclusion. In the case of regularly varying survival functions, \citet[Theorem 3]{LW22}  showed that the limit of PELVE determines its tail parameter, but it is unclear whether this can be generalized to other distributions.
 Another challenging task is, for a specified curve $\pi$ on $[0,1]$, to determine whether there exists a model $X$ with $\pi_X=\pi$. The case of $n$-point constraints for large $n$ may require a new design of verification algorithms. This question concerns the compatibility of given information with statistical models, which has been studied, in other applications of risk management, by \cite{EMS02, EHW16} and \cite{KSSW18}.

\subsection*{Acknowledgements}
The authors thank Xiyue Han for many helpful comments. Ruodu Wang is supported by the Natural Sciences and Engineering Research Council of Canada (RGPIN-2018-03823, RGPAS-2018-522590).

\appendix
\section{Omitted proofs  in Section \ref{sec:points}}\label{app:proof_sec2}

\begin{proof}[Proof of Lemma \ref{lem:bound_1}]
As $\E[X]\le \VaR_{\epsilon_2}(X)$ and  $\epsilon_1<\epsilon_2$, $\E[X]\le \VaR_{\epsilon_2}(X)\le\VaR_{\epsilon_1}(X)$. By Proposition 1 in \cite{LW22}, $\Pi_{X}(\epsilon_1)<\infty$ and $\Pi_{X}(\epsilon_2)<\infty$.

For any $\epsilon\in (0,1)$ satisfying $\E[X]\le \VaR_{\epsilon}(X)$,
$$
\begin{aligned}
\epsilon \Pi_{X}(\epsilon)&=\epsilon\inf\{c\in [1,1/\epsilon]:\ES_{c\epsilon}(X)\le \VaR_{\epsilon}(X)\}\\
&=\inf\{\epsilon c\in [\epsilon,1]:\ES_{c\epsilon}(X)\le \VaR_{\epsilon}(X)\}\\
&=\inf\{k\in [\epsilon,1]:\ES_{k}(X)\le \VaR_{\epsilon}(X)\}.\\
\end{aligned}$$
Let $A(\epsilon)=\{k\in [\epsilon,1]:\ES_{k}(X)\le \VaR_{\epsilon}(X)\}$. For any $k \in A(\epsilon_2)$, we have $1\ge k\ge\epsilon_2>\epsilon_1$ and $\ES_{k}(X)\le \VaR_{\epsilon_2}(X)\le \VaR_{\epsilon_1}(X)$. Hence, $k \in A(\epsilon_1)$ and this gives $A(\epsilon_2) \subseteq A(\epsilon_1)$.
Therefore, $\epsilon_2\Pi_{X}(\epsilon_2)=\inf A(\epsilon_2)\ge \inf A(\epsilon_1)=\epsilon_1 \Pi_{X}(\epsilon_1)$.
\end{proof}

\begin{proof}[{Proof of Theorem \ref{thm:two_points}}]
We will check the equivalent condition \eqref{eq:VaR-ES} between $\VaR$ and $\ES$.
Note that if $t \mapsto \VaR_{t}(X)$ is a constant on $(0,\epsilon)$, then $\Pi_X(\epsilon)=1$. If  $t \mapsto \VaR_{t}(X)$ is not a constant on $(0,\epsilon)$, then $\Pi_X(\epsilon)$ is the unique solution that satisfies $\ES_{\epsilon\Pi_X(\epsilon)}(X)=\VaR_{\epsilon}(X)$.
\begin{enumerate}[(i)]
\item \underline{Case 1}, $c_2=1$.
It is clear that $\VaR_{t}(X)$ is a constant for $t \in (0,c_2\epsilon_2]$ and \eqref{eq:VaR-ES} is satisfied. Hence, $\Pi_{X}(\epsilon_2)=1.$
Moreover, $\VaR_{t}(X)$ is also a constant for $t \in (0,c_1\epsilon_1]$, which implies $\Pi_{X}(\epsilon_1)=1$.

\item \underline{Case 2}, $c_1=1$ and $1<c_2\le 1/\epsilon_2$.
For $t\in (0,\epsilon_1)$, $\VaR_{t}(X)=G_{\mathbf z}(t)$ is a constant for  $t \in (0,c_1\epsilon_1)$. Hence, $\Pi_{X}(\epsilon_1)=1$.
Next, we check whether $\ES_{c_2\epsilon_2}(X)=\VaR_{\epsilon_2}(X)$.
The value of $\ES_{c_2\epsilon_2}(X)$ is
\begin{align*}
&\ES_{c_2\epsilon_2}(X)\\
&=\frac{1}{c_2\epsilon_2}\left(\int_{0}^{\epsilon_1} \hat k \d\epsilon+\int_{\epsilon_1}^{\epsilon_2}( a_1\epsilon+b_1) \d\epsilon+\int_{\epsilon_2}^{c_2\epsilon_2}( a_2\epsilon+b_2) \d\epsilon \right)\\
&=\frac{1}{c_2\epsilon_2}\left(\epsilon_1\hat k+\frac{1}{2}a_1(\epsilon_2^2-\epsilon_1^2)+b_1(\epsilon_2-\epsilon_1)+\frac{1}{2}a_2(c_2^2\epsilon_2^2-\epsilon_2^2)+b_2(c_2\epsilon_2-\epsilon_2)\right)\\
&=\frac{1}{c_2\epsilon_2}\left(\frac{1}{2}a_1(\epsilon_2-\epsilon_1)^2+\hat k\epsilon_2+\frac{1}{2}a_2(c_2\epsilon_2-\epsilon_2)^2+\tilde k(c_2\epsilon_2-\epsilon_2)\right)\\
&=\frac{1}{c_2\epsilon_2}\left(\frac{1}{2}(\tilde k-\hat k)(\epsilon_2-\epsilon_1)+\hat k\epsilon_2+\frac{1}{2}(\tilde k-\hat k)(\epsilon_1+\epsilon_2)+\tilde k(c_2\epsilon_2-\epsilon_2)\right)=\tilde k
\end{align*}
The value of $\VaR_{\epsilon_2}(X)$ is $a_2\epsilon_2+b_2=\tilde k$.
Thus, \eqref{eq:VaR-ES} is satisfied. As $\VaR_{t}(X)$ is not a constant for $t \in (0,c_2\epsilon_2)$, we have $\Pi_{X}(\epsilon_2)=c_2$.

\item \underline{Case 3}, $1<c_1\le 1/\epsilon_1$ and $c_2=\frac{c_1\epsilon_1}{\epsilon_2}$.
In this case, we have
     $$\VaR_{\epsilon_1}(X)=G_{\mathbf z}(\epsilon_1)=k(\epsilon_1)=a\epsilon_2+b=G_{\mathbf{z}}(\epsilon_2)=\VaR_{\epsilon_2}(X)$$
     and  $\ES_{c_1\epsilon_1}(X)=\ES_{c_2\epsilon_2}(X)$ as $c_1\epsilon_1=c_2\epsilon_2$.
Thus, we only need to check whether $\ES_{c_2\epsilon_2}(X)=\VaR_{\epsilon_2}(X)$.
The value of $\ES_{c_2\epsilon_2}(X)$ is
\begin{align*}
\ES_{c_2\epsilon_2}(X)&=\frac{1}{c_2\epsilon_2}\left(\int_{0}^{\epsilon_1} k(\epsilon) \d\epsilon+\int_{\epsilon_1}^{\epsilon_2}k(\epsilon_1) \d\epsilon+\int_{\epsilon_2}^{c_2\epsilon_2} a\epsilon+b \d\epsilon\right)\\
&=\frac{1}{c_2\epsilon_2}\left(k+k(\epsilon_1)(\epsilon_2-\epsilon_1)+\frac{1}{2}a(c_2^2\epsilon_2^2-\epsilon_2^2)+(k(\epsilon_1)-a\epsilon_2)(c_2\epsilon_2-\epsilon_2)\right)\\
&=\frac{1}{c_2\epsilon_2}\left(k+k(\epsilon_1)(c_2\epsilon_2-\epsilon_1)+\frac{1}{2}a(c_2\epsilon_2-\epsilon_2)^2\right)\\
&=\frac{1}{c_2\epsilon_2}\left(k+k(\epsilon_1)(c_2\epsilon_2-\epsilon_1)+k(\epsilon_1)\epsilon_1-k\right) =k(\epsilon_1).
\end{align*}
The value of $\VaR_{\epsilon_2}(X)$ is also $k(\epsilon_1)$. Hence, \eqref{eq:VaR-ES} is satisfied and $\Pi_X(\epsilon_1)=c_1$, $\Pi_X(\epsilon_2)=c_2$ because $t\mapsto \VaR_{t}(X)$ is not a constant on $(0,\epsilon_1)$.

\item \underline{Case 4}, $1<c_1\le \epsilon_2/\epsilon_1$ and $1<c_2\le 1/\epsilon_2$.
     The first equivalent condition of \eqref{eq:VaR-ES} for  $\VaR_{\epsilon_1}(X)$ and $\ES_{c_1\epsilon_1}(X)$ is satisfied because $\VaR_{t}(X)=k(t)$ is the quantile function for $\mathrm{GPD}(\xi)$ with PELVE $c_1$ and $t \in (0, c_1\epsilon_1)$. Hence, we have $\Pi_X(\epsilon_1)=c_1$. Moreover, $\ES_{c_1\epsilon_1}(X)=\VaR_{\epsilon_1}(X)=k(\epsilon_1)$. We choose $a_1=k'(c_1\epsilon_1)$ and $b_1$ such that $a_1c_1\epsilon_1+b_1=k(c_1\epsilon_1)$.
For the equivalent condition between  $\ES_{c_2\epsilon_2}(X)$ and $\VaR_{\epsilon_2}(X)$, we can verify
\begin{align*}
& \ES_{c_2\epsilon_2}(X) =\frac{1}{c_2\epsilon_2}\left(\int_0^{c_1\epsilon_1}k(\epsilon) \d\epsilon+\int_{c_1\epsilon_1}^{\epsilon_2} a_1\epsilon+b_1\d\epsilon+\int_{\epsilon_2}^{c_2\epsilon_2} (a_2\epsilon+b_2) \d\epsilon
\right)\\
&=\frac{1}{c_2\epsilon_2}\left(c_1\epsilon_1k(\epsilon_1)+\frac{1}{2}a_1(\epsilon_2^2-c_1^2\epsilon_1^2)+b_1(\epsilon_2-c_1\epsilon_1)+\frac{1}{2}a_2\left(c_2^2\epsilon_2^2-\epsilon_2^2\right)+b_2\left(c_2\epsilon_2-\epsilon_2\right)
\right)\\
&=\frac{1}{c_2\epsilon_2}\left(c_1\epsilon_1k(\epsilon_1)+\frac{1}{2}a_1(2c_2\epsilon_2^2-\epsilon_2^2-c_1^2\epsilon_1^2)+b_1(c_2\epsilon_2-c_1\epsilon_1)+\frac{1}{2}a_2\left(c_2\epsilon_2-\epsilon_2\right)^2
\right)\\
&=\frac{1}{c_2\epsilon_2}\left(a_1c_2\epsilon_2^2+b_1c_2\epsilon_2\right)=a_1\epsilon_2+b_1=\VaR_{\epsilon_2}(X).
\end{align*}
Thus, \eqref{eq:VaR-ES} is satisfied and we have $\Pi_{X}(\epsilon_2)=c_2$.

\item \underline{Case 5}, $\epsilon_2/\epsilon_1<c_1\le 1/\epsilon_1$ and $\frac{c_1\epsilon_1}{\epsilon_2}<c_2\le1/\epsilon_2$.
The equality between $\VaR_{\epsilon_1}(X)$ and $\ES_{c_1\epsilon_1}(X)$ can be checked by
\begin{align*}
\ES_{c_1\epsilon_1}(X)&=\frac{1}{c_1\epsilon_1}\left(\int_{0}^{\epsilon_1}k(\epsilon) \d \epsilon+\int_{\epsilon_1}^{\epsilon_2} (a_1\epsilon+b_1)\d\epsilon+(a_1\epsilon_2+b_1)(c_1\epsilon_1-\epsilon_2)
\right)\\
&=\frac{1}{c_1\epsilon_1}\left(k+ \frac{1}{2}a_1(\epsilon_2^2-\epsilon_1^2)+(k(\epsilon_1)-a_1\epsilon_1)(\epsilon_2-\epsilon_1)+(a_1\epsilon_2+b_1)(c_1\epsilon_1-\epsilon_2)
\right)\\
&=\frac{1}{c_1\epsilon_1}\left(k+ a_1(\epsilon_2-\epsilon_1)(c_1\epsilon_1-1/2(\epsilon_2+\epsilon_1))+k(\epsilon_1)(c_1\epsilon_1-\epsilon_1)
\right)\\
&=\frac{1}{c_1\epsilon_1}\left(k+ k(\epsilon_1)\epsilon_1-k+k(\epsilon_1)(c_1\epsilon_1-\epsilon_1)
\right) =k(\epsilon_1)=G_{\mathbf z}(\epsilon_1)=\VaR_{\epsilon_1}(X).
\end{align*}
The equality between $\VaR_{\epsilon_2}(X)$ and $\ES_{c_2\epsilon_2}(X)$ can be checked by
\begin{align*}
\ES_{c_2\epsilon_2}(X)&=\frac{1}{c_2\epsilon_2}\left(\int_{0}^{c_1\epsilon_1}k(\epsilon) \d \epsilon+\int_{c_1\epsilon_1}^{c_2\epsilon_2} (a_2\epsilon+b_2)\d\epsilon
\right)\\
&=\frac{1}{c_2\epsilon_2}\left(c_1\epsilon_1 k(\epsilon_1)+\frac{1}{2}a_2(c_2^2\epsilon_2^2-c_1^2\epsilon_1^2)+b_2(c_2\epsilon_2-c_1\epsilon_1)\right)\\
&=\frac{1}{c_2\epsilon_2}\left(c_1\epsilon_1 k(\epsilon_1)+\frac{1}{2}a_2(c_2\epsilon_2-c_1\epsilon_1)^2+(a_1\epsilon_2+b_1)(c_2\epsilon_2-c_1\epsilon_1)\right)\\
&=\frac{1}{c_2\epsilon_2}\left(c_1\epsilon_1 k(\epsilon_1)+c_1\epsilon_1(a_1\epsilon_2+b_1-k(\epsilon_1))+(a_1\epsilon_2+b_1)(c_2\epsilon_2-c_1\epsilon_1)\right)\\
&=a_1\epsilon_2+b_1=G_{\mathbf z}(\epsilon_2)=\VaR_{\epsilon_2}(X)
\end{align*}
Hence, \eqref{eq:VaR-ES} is satisfied, and $\Pi_X(\epsilon_1)=c_1$ and $\Pi_X(\epsilon_2)=c_2$.
\end{enumerate}
Therefore,  it is checked that $X$ satisfies $\Pi_{X}(\epsilon_1)=c_1$ and $\Pi_{X}(\epsilon_2)=c_2$ for all five cases.
\end{proof}

The following propositions address  the issue discussed in Remark \ref{rem:cannot}  by showing that the boundary cases of $(\epsilon_1,c_1,\epsilon_2,c_2) $ cannot be achieved by strictly decreasing quantile functions, and hence our construction of quantiles with a flat region in Figure \ref{fig:calibration} are needed.

\begin{proposition}\label{lem:low_bound}
For any $X \in L^1$, let $\epsilon_1,\epsilon_2 \in (0,1)$ be such that $\E[X]\le \VaR_{\epsilon_2}(X)$ and  $\epsilon_1<\epsilon_2$. Then, $\Pi_{X}(\epsilon_2)=\max\left\{1,\Pi_{X}(\epsilon_1)\epsilon_1/\epsilon_2\right\}$ if and only if $\VaR_{\epsilon_1}(X)=\VaR_{\epsilon_2}(X)$.
\end{proposition}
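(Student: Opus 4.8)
The plan is to reduce everything to the reformulation of the PELVE used in the proof of Lemma~\ref{lem:bound_1}: for $\epsilon\in(0,1)$ with $\E[X]\le\VaR_{1-\epsilon}(X)$, put
$$A(\epsilon)=\{k\in[\epsilon,1]:\ES_{1-k}(X)\le\VaR_{1-\epsilon}(X)\},$$
so that $\epsilon\,\Pi_{X}(\epsilon)=\inf A(\epsilon)$. First I would record three facts about $k\mapsto\ES_{1-k}(X)$ on $(0,1]$: it is continuous (since $\ES_{1-k}(X)=\frac1k\int_0^k F_X^{-1}(1-u)\,\d u$ and $F_X^{-1}\in L^1(0,1)$ as $X\in L^1$); it is non-increasing (being a running average of the non-increasing map $u\mapsto F_X^{-1}(1-u)$); and it equals $\E[X]$ at $k=1$. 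Writing $v_i=\VaR_{1-\epsilon_i}(X)$ and noting $\E[X]\le v_2\le v_1$ (the last inequality because $\VaR$ is increasing in its level), these facts show that each $A(\epsilon_i)$ contains $1$, is upward closed in $[\epsilon_i,1]$, and is closed; hence $A(\epsilon_i)=[m_i,1]$ with $m_i:=\epsilon_i\,\Pi_{X}(\epsilon_i)$. Moreover $A(\epsilon_2)\subseteq A(\epsilon_1)$, so $m_1\le m_2$.

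For the ``if'' direction, assume $v_1=v_2$. Then $A(\epsilon_2)=\{k\in[\epsilon_2,1]:\ES_{1-k}(X)\le v_1\}=[m_1,1]\cap[\epsilon_2,1]=[\max\{m_1,\epsilon_2\},1]$, so $m_2=\max\{m_1,\epsilon_2\}$, which rearranges to $\Pi_{X}(\epsilon_2)=\max\{1,\Pi_{X}(\epsilon_1)\epsilon_1/\epsilon_2\}$.

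For the ``only if'' direction, assume $m_2=\max\{\epsilon_2,m_1\}$. If $m_2=\epsilon_2$, then $\Pi_{X}(\epsilon_2)=1$, i.e.\ $\ES_{1-\epsilon_2}(X)\le v_2$; since always $\ES_{1-\epsilon_2}(X)\ge\VaR_{1-\epsilon_2}(X)=v_2$, equality holds, and because $F_X^{-1}(q)\ge v_2$ for $q\ge1-\epsilon_2$ this forces $F_X^{-1}(q)=v_2$ for almost every $q\in(1-\epsilon_2,1)$; left-continuity and monotonicity of $F_X^{-1}$ then give $F_X^{-1}(q)=v_2$ for all $q\in(1-\epsilon_2,1)$, and since $1-\epsilon_1$ lies in this interval, $v_1=v_2$. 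If instead $m_2=m_1$ with $m_1>\epsilon_2\,(>\epsilon_1)$, then $m_i>\epsilon_i$ for $i=1,2$; using the interval form of $A(\epsilon_i)$ together with the continuity of $k\mapsto\ES_{1-k}(X)$, the boundary point $m_i$ satisfies $\ES_{1-m_i}(X)=v_i$, and $m_1=m_2$ yields $v_1=v_2$. In all cases $v_1=v_2$.

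I expect the only delicate point to be the case $\Pi_{X}(\epsilon_2)=1$ in the ``only if'' part, where one must argue that the equality $\ES_{1-\epsilon_2}(X)=\VaR_{1-\epsilon_2}(X)$ pins down the quantile function throughout the tail interval $(1-\epsilon_2,1)$, hence at level $1-\epsilon_1$; everything else is bookkeeping with $A(\epsilon)$ and the monotonicity and continuity of $\ES$. It is worth noting in passing that $\Pi_{X}(\epsilon_2)\ge\max\{1,\Pi_{X}(\epsilon_1)\epsilon_1/\epsilon_2\}$ always holds---this is Lemma~\ref{lem:bound_1} combined with $\Pi_{X}\ge1$---so the proposition asserts precisely that this universal lower bound is attained if and only if $\VaR_{1-\epsilon_1}(X)=\VaR_{1-\epsilon_2}(X)$.
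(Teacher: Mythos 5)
Your proof is correct, and it takes a cleaner route than the paper's. Both arguments ultimately rest on the reformulation $\epsilon\,\Pi_X(\epsilon)=\inf A(\epsilon)$ from the proof of Lemma~\ref{lem:bound_1} and on the continuity and monotonicity of $k\mapsto\ES_{1-k}(X)$, but you push the $A(\epsilon)$ formalism further and reap a payoff in the ``if'' direction. The paper there splits into two sub-cases: whether $\epsilon\mapsto\VaR_{1-\epsilon}(X)$ is already constant on all of $(0,\epsilon_1]$, or whether it is not (in which case the paper invokes strict monotonicity of $\epsilon\mapsto\ES_{1-\epsilon}(X)$, the identity $\ES_{1-\epsilon_i\Pi_X(\epsilon_i)}(X)=\VaR_{1-\epsilon_i}(X)$, and the strict inequality $\VaR_{1-\epsilon_1\Pi_X(\epsilon_1)}(X)<\ES_{1-\epsilon_1\Pi_X(\epsilon_1)}(X)$ to deduce $\epsilon_1\Pi_X(\epsilon_1)>\epsilon_2$). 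Your one-line computation $A(\epsilon_2)=A(\epsilon_1)\cap[\epsilon_2,1]=[\max\{m_1,\epsilon_2\},1]$ handles both sub-cases at once and needs no strictness. In the ``only if'' direction the two arguments are essentially parallel: the paper's two cases $\Pi_X(\epsilon_2)=1$ and $\Pi_X(\epsilon_2)=\Pi_X(\epsilon_1)\epsilon_1/\epsilon_2$ correspond exactly to your $m_2=\epsilon_2$ and $m_2=m_1>\epsilon_2$; the paper reads off $v_1=v_2$ from $\ES_{1-\epsilon_i\Pi_X(\epsilon_i)}(X)=\VaR_{1-\epsilon_i}(X)$, while you read it off from the boundary identification $\ES_{1-m_i}(X)=v_i$, which is the same fact dressed differently. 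One small remark: in your $\Pi_X(\epsilon_2)=1$ sub-case, invoking left-continuity of $F_X^{-1}$ is unnecessary --- once $\VaR_{1-u}(X)=v_2$ for a.e.\ $u\in(0,\epsilon_2)$, monotonicity alone forces it pointwise on $(0,\epsilon_2)$, since any strict exceedance at one point would propagate to a set of positive measure. Your closing observation that the proposition characterizes exactly when the universal lower bound from Lemma~\ref{lem:bound_1} is attained is accurate and is indeed the motivation behind Remark~\ref{rem:cannot}.
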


\begin{proof}
Using the same logic as in Lemma \ref{lem:bound_1}, we have that $ \Pi_{X}(\epsilon_1)$ and $\Pi_{X}(\epsilon_2)$ are finite.

We first show the ``if" statement.
Assume $\VaR_{\epsilon_1}(X)=\VaR_{\epsilon_2}(X)$. As $\VaR_{\epsilon}(X)$ is decreasing, we know that $\VaR_{\epsilon}(X)$ is a constant on $[\epsilon_1,\epsilon_2]$.

If $\VaR_{\epsilon}(X)=\VaR_{\epsilon_1}(X)$ for $\epsilon\in (0,\epsilon_1)$, then $\VaR_{\epsilon}(X)$ is a constant on $(0,\epsilon_2]$. Therefore, we can get $\Pi_{X}(\epsilon_1)=\Pi_{X}(\epsilon_2)=1$. Note that $\Pi_{X}(\epsilon_1)\epsilon_1/\epsilon_2=\epsilon_1/\epsilon_2<1$. Thus,  we obtain $\Pi_{X}(\epsilon_2)=\max\left\{1,\Pi_{X}(\epsilon_1)\epsilon_1/\epsilon_2)\right\}$.

If there exists $\epsilon \in (0,\epsilon_1)$ such that $\VaR_{\epsilon}(X)>\VaR_{\epsilon_1}(X)$, then $\ES_{\epsilon}(X)$ is strictly decreasing on $[\epsilon_1,1]$.
By the equivalent condition between VaR and ES, $\VaR_{\epsilon_1}(X)=\VaR_{\epsilon_2}(X)$ implies $\ES_{\epsilon_1\Pi_{X}(\epsilon_1)}(X)=\ES_{\epsilon_2\Pi_{X}(\epsilon_2)}(X)$.
 Thus, $\epsilon_1 \Pi_{X}(\epsilon_1)=\epsilon_2 \Pi_{X}(\epsilon_2)$.
 Furthermore, we have $$\VaR_{\epsilon_1\Pi_{X}(\epsilon_1)}(X)<\ES_{\epsilon_1\Pi_{X}(\epsilon_1)}(X)=\VaR_{\epsilon_1}(X)=\VaR_{\epsilon_2}(X).$$
 Thus, $\epsilon_1\Pi_{X}(\epsilon_1)>\epsilon_2$ and we get $\Pi_{X}(\epsilon_2)=\max\left\{1,\Pi_{X}(\epsilon_1)\epsilon_1/\epsilon_2\right\}$.

 Next, we show the ``only if" statement.  Assume $\Pi_{X}(\epsilon_2)=\max\left\{1,\Pi_{X}(\epsilon_1)\epsilon_1/\epsilon_2\right\}$.

 If $\Pi_{X}(\epsilon_2)=1$, then $\VaR_{\epsilon_2}(X)=\ES_{\epsilon_2}(X)$. This implies that $\VaR_{\epsilon}(X)$ is a constant on $(0,\epsilon_2]$, which gives $\VaR_{\epsilon_1}(X)=\VaR_{\epsilon_2}(X)$.

 If $\Pi_{X}(\epsilon_2)=\Pi_{X}(\epsilon_1)\epsilon_1/\epsilon_2$, then $\epsilon_2 \Pi_{X}(\epsilon_2)=\epsilon_1\Pi_{X}(\epsilon_1)$.
Hence, we have $$\VaR_{\epsilon_1}(X) =\ES_{\epsilon_1\Pi_{X}(\epsilon_1)}(X)=\ES_{\epsilon_2\Pi_{X}(\epsilon_2)}(X)=\VaR_{\epsilon_2}(X).$$
Thus, we complete the proof.
\end{proof}

\begin{proposition}\label{pro:c_2_bound}
For any $X \in L^1$, let $\epsilon_1,\epsilon_2 \in (0,1)$ be such that $\E[X]\le \VaR_{\epsilon_2}(X)$ and  $\epsilon_1<\epsilon_2$. Let $c_1=\Pi_{X}(\epsilon_1)$ and $c_2=\Pi_{X}(\epsilon_2)$.
If $\VaR_{\epsilon_1}(X)>\VaR_{\epsilon_2}(X)$, then
$$
 \hat c \!\le\! c_2\!\le\! \left\{\begin{aligned} &\!\min\left\{\!\frac{1}{\epsilon_2},\frac{c_1\epsilon_1}{\epsilon_2}\left(\frac{\VaR_{\epsilon_1}(X)-\VaR_{c_1\epsilon_1}(X)}{\VaR_{\epsilon_2}(X)-\VaR_{c_1\epsilon_1}(X)}\right) \!\right\}, & \!\!\VaR_{c_1\epsilon_1}(X) \!<\! \VaR_{\epsilon_2}(X),\\
 &\!\frac{1}{\epsilon_2}, &\!\!\VaR_{c_1\epsilon_1}(X) \!\ge\! \VaR_{\epsilon_2}(X),
 \end{aligned}\right.
$$
where
$$
\hat c=\inf\left \{  t \in (
1,1/\epsilon_2] : \frac{ (t \epsilon_2-c_1\epsilon_1)\left(\VaR_{\epsilon_2}(X)-\VaR_{t \epsilon_2}(X)\right)}{c_1\epsilon_1\left(\VaR_{\epsilon_1}(X)-\VaR_{\epsilon_2}(X)\right)}\ge 1\right\}.
 $$
 Moreover, $\hat c\ge\max\{1,c_1\epsilon_1/\epsilon_2\}$.
\end{proposition}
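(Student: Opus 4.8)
The plan is to turn the two PELVE constraints into one integral identity for the decreasing quantile function and then sandwich the relevant integral by monotonicity. Write $V(\epsilon)=\VaR_{1-\epsilon}(X)$, which is decreasing, right-continuous and finite on $(0,1)$, with $\ES_{1-c\epsilon}(X)=\frac{1}{c\epsilon}\int_{0}^{c\epsilon}V(s)\,\d s$. Since $\E[X]\le V(\epsilon_2)\le V(\epsilon_1)$, both $c_1=\Pi_X(\epsilon_1)$ and $c_2=\Pi_X(\epsilon_2)$ are finite, and they satisfy $\int_{0}^{c_i\epsilon_i}V(s)\,\d s=c_i\epsilon_iV(\epsilon_i)$: this is the defining equation of the PELVE whenever the quantile is non-constant below $\epsilon_i$ (automatic for $i=2$ since $V(\epsilon_1)>V(\epsilon_2)$), and it holds trivially when $c_i=1$. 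Subtracting the $i=1$ identity from the $i=2$ one gives the workhorse identity
$$\int_{c_1\epsilon_1}^{c_2\epsilon_2}V(s)\,\d s=c_2\epsilon_2V(\epsilon_2)-c_1\epsilon_1V(\epsilon_1),$$
and a sign comparison (using $V(\epsilon_1)>V(\epsilon_2)$) together with Lemma \ref{lem:bound_1} forces $c_1\epsilon_1<c_2\epsilon_2\le1$ and $c_2>1$; in particular $V(c_1\epsilon_1)$ is finite. I will write $g(t)$ for the fraction inside the infimum defining $\hat c$, so that $\hat c=\inf\{t\in(1,1/\epsilon_2]:g(t)\ge1\}$.

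For the lower bound I would bound $V(s)\ge V(c_2\epsilon_2)$ on $[c_1\epsilon_1,c_2\epsilon_2]$ in the workhorse identity, getting $c_2\epsilon_2V(\epsilon_2)-c_1\epsilon_1V(\epsilon_1)\ge(c_2\epsilon_2-c_1\epsilon_1)V(c_2\epsilon_2)$, then regroup, inserting $\pm V(\epsilon_2)$ into the term $V(\epsilon_1)-V(c_2\epsilon_2)$, to reach $(c_2\epsilon_2-c_1\epsilon_1)(V(\epsilon_2)-V(c_2\epsilon_2))\ge c_1\epsilon_1(V(\epsilon_1)-V(\epsilon_2))$; dividing by the positive quantity $c_1\epsilon_1(V(\epsilon_1)-V(\epsilon_2))$ is exactly $g(c_2)\ge1$. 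As $c_2\in(1,1/\epsilon_2]$, this means $c_2$ lies in the set defining $\hat c$, so $c_2\ge\hat c$ (and that set is nonempty). For the upper bound I would use the mirror-image estimate $V(s)\le V(c_1\epsilon_1)$ on the same interval; the analogous rearrangement gives $c_2\epsilon_2(V(\epsilon_2)-V(c_1\epsilon_1))\le c_1\epsilon_1(V(\epsilon_1)-V(c_1\epsilon_1))$. If $V(c_1\epsilon_1)<V(\epsilon_2)$, the coefficient of $c_2$ on the left is positive and dividing yields $c_2\le\frac{c_1\epsilon_1}{\epsilon_2}\cdot\frac{V(\epsilon_1)-V(c_1\epsilon_1)}{V(\epsilon_2)-V(c_1\epsilon_1)}$; combined with the always-valid $c_2\le1/\epsilon_2$ this gives the stated minimum. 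If $V(c_1\epsilon_1)\ge V(\epsilon_2)$, the displayed inequality is vacuous, leaving only $c_2\le1/\epsilon_2$.

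For $\hat c>\max\{1,c_1\epsilon_1/\epsilon_2\}$, set $m=\max\{1,c_1\epsilon_1/\epsilon_2\}$ and analyse $g$ near $m$. On $(1,m]$, nonempty only when $m=c_1\epsilon_1/\epsilon_2>1$, the factors $t\epsilon_2-c_1\epsilon_1\le0$ and $V(\epsilon_2)-V(t\epsilon_2)\ge0$ have opposite signs, so $g\le0<1$ there, and one factor vanishes at $t=m$, so $g(m)=0$. Since $V$ is right-continuous and finite near $m\epsilon_2=\max\{\epsilon_2,c_1\epsilon_1\}<1$, as $t\downarrow m$ one factor of $g$ tends to $0$ while the other stays bounded, so $g(t)\to0$; hence $g<1$ on some interval $[m,m+\delta)$. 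Together with $g\le0$ on $(1,m]$, this shows the set $\{t\in(1,1/\epsilon_2]:g(t)\ge1\}$, which is nonempty since it contains $c_2$, lies in $[m+\delta,1/\epsilon_2]$, so $\hat c\ge m+\delta>m$.

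The only conceptual step is the workhorse identity; the main effort is in matching the two algebraic rearrangements exactly to the shape of $g$ and in the case split for the upper bound. Two boundary points need care: the right-continuity of the quantile function, which is what makes the limit argument for the strictness of $\hat c$ work, and the degenerate case $c_2=1/\epsilon_2$ with $\essinf(X)=-\infty$, where $V(c_2\epsilon_2)=-\infty$ makes the lower-bound inequality vacuous but $g(c_2)=\infty\ge1$ still puts $c_2$ in the defining set, so $c_2\ge\hat c$ is unaffected.
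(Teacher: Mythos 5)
Your proof is correct and mirrors the paper's own argument: both turn the two PELVE equalities into the workhorse identity $\int_{c_1\epsilon_1}^{c_2\epsilon_2}V=c_2\epsilon_2V(\epsilon_2)-c_1\epsilon_1V(\epsilon_1)$, sandwich the integral by $V(c_2\epsilon_2)$ and $V(c_1\epsilon_1)$, perform the same algebraic regrouping, and split on the sign of $V(c_1\epsilon_1)-V(\epsilon_2)$ for the upper bound; for $\hat c>\max\{1,c_1\epsilon_1/\epsilon_2\}$ both analyse the same auxiliary function on $(1,1/\epsilon_2]$. The only minor difference is that you supply the right-continuity step showing $g(t)\to 0$ as $t\downarrow\max\{1,c_1\epsilon_1/\epsilon_2\}$ (the paper leaves this implicit after noting the function is increasing and vanishes at that point) and you explicitly dispose of the degenerate $\essinf(X)=-\infty$ boundary case; both are small refinements rather than a different route.
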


\begin{proof}
As $\E[X]\le \VaR_{\epsilon_2}(X)$, we have $c_1<\infty$ and $c_2<\infty$.
 By definition,  $c_2\le 1/\epsilon_2$.
From Lemma \ref{lem:low_bound}, we get $c_2>\max\left\{1,c_1\epsilon_1/\epsilon_2\right\}$.
Thus, the value of $c_2$ should be in  $\left(\max\left\{1,c_1\epsilon_1/\epsilon_2\right\}, {1/\epsilon_2}\right]$.

Note that $c_1, \epsilon_1, c_2, \epsilon_2$ satisfy the equivalent condition \eqref{eq:VaR-ES}.
We can rewrite \eqref{eq:VaR-ES} as
$$
\int_{0}^{c_1\epsilon_1} \VaR_{\epsilon}(X) \d\epsilon=c_1\epsilon_1\VaR_{\epsilon_1}(X)~~~\text{and} ~~~\int_{0}^{c_2\epsilon_2} \VaR_{\epsilon}(X) \d\epsilon=c_2\epsilon_2\VaR_{\epsilon_2}(X).
$$
Therefore, we have
$$\int_{c_1\epsilon_1}^{c_2\epsilon_2}\VaR_{\epsilon}(X) \d\epsilon=c_2\epsilon_2\VaR_{\epsilon_2}(X)-c_1\epsilon_1\VaR_{\epsilon_1}(X).$$
Furthermore, by the monotonicity of $\VaR$, we have
$$(c_2\epsilon_2-c_1\epsilon_1)\VaR_{c_2\epsilon_2}(X)\le \int_{c_1\epsilon_1}^{c_2\epsilon_2}\VaR_{\epsilon}(X) \d\epsilon \le (c_2\epsilon_2-c_1\epsilon_1)\VaR_{c_1\epsilon_1}(X).$$
The two inequality will provide an upper bound and a lower bound for $c_2$.

\underline{An upper bound on $c_2$}.
Using $c_2\epsilon_2\VaR_{\epsilon_2}(X)-c_1\epsilon_1\VaR_{\epsilon_1}(X)\le(c_2\epsilon_2-c_1\epsilon_1)\VaR_{c_1\epsilon_1}(X),$
we have
\begin{equation}\label{eq:neq_u}
c_2\epsilon_2 \left(\VaR_{\epsilon_2}(X)-\VaR_{c_1\epsilon_1}(X)\right)\le c_1\epsilon_1\left(\VaR_{\epsilon_1}(X)-\VaR_{c_1\epsilon_1}(X)\right).
\end{equation}
If $\VaR_{c_1\epsilon_1}(X)\ge \VaR_{\epsilon_2}(X)$, the left side of \eqref{eq:neq_u} is less or equal to 0 and the right side of \eqref{eq:neq_u} is larger or equal to 0 because $\VaR_{\epsilon_1}(X)\ge \VaR_{c_1\epsilon_1}(X)$. Therefore, \eqref{eq:neq_u} is satisfies for any $c_2 \in \left(\max\left\{1,c_1\epsilon_1/\epsilon_2\right\}, {1/\epsilon_2}\right]$. The upper bound for $c_2$ is unchanged.

On the other hand, if $\VaR_{c_1\epsilon_1}(X)<\VaR_{\epsilon_2}(X)$,
we have
$$c_2\le \frac{c_1\epsilon_1}{\epsilon_2 }\left(\frac{\VaR_{\epsilon_1}(X)-\VaR_{c_1\epsilon_1}(X)}{\VaR_{\epsilon_2}(X)-\VaR_{c_1\epsilon_1}(X)}\right).$$
Thus, an upper bound for $c_2$ is $\min\left\{\frac{1}{\epsilon_2},\frac{c_1\epsilon_1}{\epsilon_2 }\left(\frac{\VaR_{\epsilon_1}(X)-\VaR_{c_1\epsilon_1}(X)}{\VaR_{\epsilon_2}(X)-\VaR_{c_1\epsilon_1}(X)}\right)\right\}$.


\underline{A lower bound on $c_2$}.
It holds that
$$
(c_2\epsilon_2-c_1\epsilon_1)\VaR_{c_2\epsilon_2}(X)\le c_2\epsilon_2\VaR_{\epsilon_2}(X)-c_1\epsilon_1\VaR_{\epsilon_1}(X).
$$
Subtracting $(c_2\epsilon_2-c_1\epsilon_1)\VaR_{\epsilon_2}(X)$ from both sides, we get
\begin{equation}\label{eq:lower_bound}
(c_2\epsilon_2-c_1\epsilon_1)\left(\VaR_{\epsilon_2}(X)-\VaR_{c_2\epsilon_2}(X)\right)\ge c_1\epsilon_1\left(\VaR_{\epsilon_1}(X)-\VaR_{\epsilon_2}(X)\right).
\end{equation}
For $t \in (0,1/\epsilon_2)$, let
$$f(t)= (t\epsilon_2-c_1\epsilon_1)\left(\VaR_{\epsilon_2}(X)-\VaR_{t\epsilon_2}(X)\right).$$
As we can see, $f(1)=0$, $f(c_1\epsilon_1/\epsilon_2)=0$ and $f(t)\le 0$
if
$t \in[\min\{1,c_1\epsilon_1/\epsilon_2\},\max\{1,c_1\epsilon_1/\epsilon_2\}]$.  The $f$ is increasing in the interval $(\max\{1,c_1\epsilon_1/\epsilon_2\}, 1/\epsilon_2)$, decreasing in $(0,\min\{1,c_1\epsilon_1/\epsilon_2\})$.
Hence, by \eqref{eq:lower_bound}, the lower bound for  $c_2$ is
$$\begin{aligned}
\hat c=\inf \left\{ t\in (1, 1/\epsilon_2]: \frac{(t\epsilon_2-c_1\epsilon_1)\left(\VaR_{\epsilon_2}(X)-\VaR_{t\epsilon_2}(X)\right)}{c_1\epsilon_1\left(\VaR_{\epsilon_1}(X)-\VaR_{\epsilon_2}(X)\right)}\ge 1 \right\}. 
\end{aligned}$$ 
As $c_1\epsilon_1\left(\VaR_{\epsilon_1}(X)-\VaR_{\epsilon_2}(X)\right)>0$, we have  $\hat c\ge\max\{1,c_1\epsilon_1/\epsilon_2\}$.
\end{proof}

\section{Omitted proofs  in Section \ref{sec:constant}}\label{app:proof_sec5}
\begin{proof}[Proof of Theorem \ref{thm:constant}]

By Proposition \ref{prop:abstract}, for any $X \in \X$, we can find $f \in \mathcal{C}$ satisfying \eqref{eq:abstract}  such that $z_f(y)=1/\pi_{X}(y)=c$ and $X=f(U)$. As $z(y)=c$ is a continuously differentiable function, we know that all such $f$ is characterized by the advanced differential equation \eqref{eq:DDE}.
First, we show for any strictly decreasing solution $f$ to \eqref{eq:DDE} can be represented as
$$ f\left(y\right)  =C_{0}+C_{1}y^{\alpha}+O\left(y^{\zeta}\right).$$

Let us start with \eqref{eq:DDE}. If $z(y)=c$, we need to solve $f$ from
$$f(y)  =f(cy)+cyf'(cy),\quad y\in(0,1].$$
Even though $f$ in the first place is considered on $(0,1]$,   given that $c<1$, and this final equation, one can expand it to the whole positive line:
$$f(y) =f(cy)+cyf'(cy),\quad y>0.$$
Next, let $x\left(t\right)=e^{-t}f\left(e^{-t}\right)$ for $t\in\R$ and $a=-\log\left(c\right)>0$. This is equivalent to say that $f\left(y\right)=x\left(-\log\left(y\right)\right)/y$.
This changing variable simply gives the following delayed differential equation:
$$x'\left(t\right) =-e^{-a}x\left(t-a\right),\,\,\,t\in\mathbb{R}.\label{eq:DDE-1}$$
Since we have assumed that $f$ is strictly decreasing, i.e., $f'<0$, we have an extra restriction on $x$. Note that
$$x'\left(t\right) =-e^{-t}f\left(e^{-t}\right)-e^{-2t}f'\left(e^{t}\right)=-x\left(t\right)-e^{-2t}f'\left(e^{t}\right).$$
Thus, we have $f'<0\Leftrightarrow x'+x>0$. Therefore, we are looking for a solution to the following delay differential equation (DDE):
\begin{align}
\begin{cases}
x'\left(t\right)=-e^{-a}x\left(t-a\right),\\
x'\left(t\right)+x\left(t\right)>0,
\end{cases}t\in\R & .\label{eq:x}
\end{align}
A standard approach of finding the solutions is to assume that they are in the form of a characteristic function $t\mapsto e^{mt}$. Putting this solution inside the equation, we get
\begin{align}
    me^{mt} & =-e^{-a}e^{m\left(t-a\right)}
    ~~\Longrightarrow~~   ame^{am}=\left(-a\right)e^{\left(-a\right)}.\nonumber
\end{align}
This means any solution is given by $x\left(t\right)=e^{mt}$ where $m$ solves the characteristic equation
\begin{equation}
    l\left(ma\right)=l\left(-a\right),\label{eq:roots-1}
\end{equation}
where $l\left(x\right)=xe^{x}$. Let $b=l(-a)$. As $a>0$, we have $b \in [-1/e,0)$.
This equation has one obvious real solution at $m_{1}=-1$.
To find $m$, we need to know about the inverse of $l$.
The inverse of the function $l$ is known as the Lambert $W$ function and plays an essential role in solving delayed and advanced differential equations.

From the Lambert $W$ function, we know that $l\left(z\right)=ze^{z}=b$ has two real solutions when $b \in (-1/e, 0)$ and one real solution when $b=-1/e$.
As illustrated by Figure \ref{Lambert}, if $0<c<1/e$, the two real solutions are  $z_1=-a<-1$ and  $z_2=m_2 a>-1$; thus, $-1<m_2<0$.  If $0<c<1/e$, the two real solutions are  $-1<z_1=-a<0$ and  $z_2=m_2 a<-1$; thus, $m_2<-1$.  If $c=1/e$, there is only one real solution $z_1=z_2=-1$; thus $m_2=m_1=-1$.
\begin{figure}[htbp]
    \centering{}\includegraphics[scale=0.6]{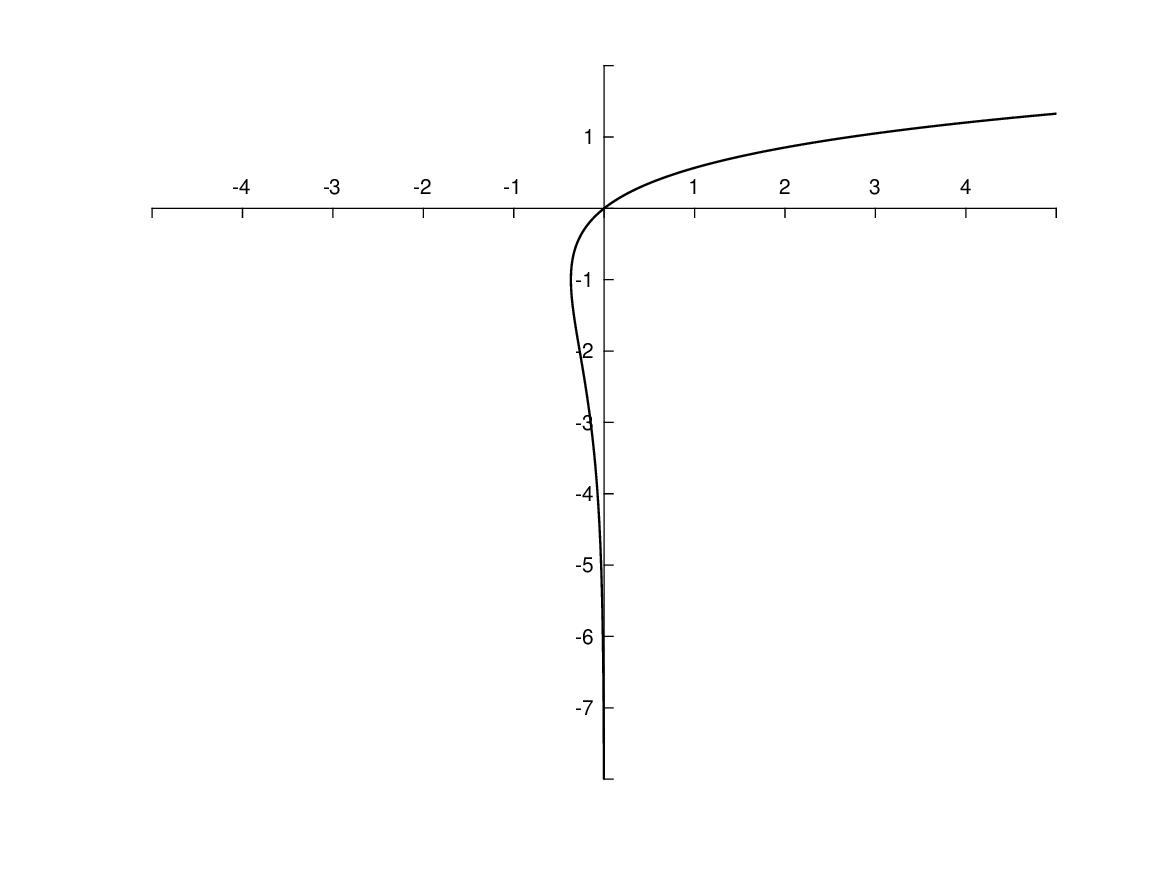} \caption{Lambert $W$ function.}
    \label{Lambert}
\end{figure}

It is important to note that in general an equation like $l\left(z\right)=ze^{z}=b$ has infinite  complex roots. Let $z=\theta+i\eta$, and $b\in\left[-1/e,0\right)$.
In that regard, we have
\begin{align*}
    b & =ze^{z}=\left(\theta+i\eta\right)e^{\theta+i\eta}\\
     & =\left(\theta+i\eta\right)\left(\cos\left(\eta\right)+i\sin\left(\eta\right)\right)\\
     & =\left(\theta\cos\left(\eta\right)-\eta\sin\left(\eta\right)\right)+i\left(\theta\sin\left(\eta\right)+\eta\cos\left(\eta\right)\right).
\end{align*}
This implies that $\theta\sin\left(\eta\right)+\eta\cos\left(\eta\right)=0$, and $b=e^{\theta}\left(\theta\cos\left(\eta\right)-\eta\sin\left(\eta\right)\right)$, leading to
\begin{align}
    \eta=0, b=\theta e^{\theta}\nonumber ~~~\text{ or }~~~
    \theta=-\frac{\eta}{\tan\left(\eta\right)}, b=-\frac{\eta\exp\left(-\frac{\eta}{\tan\left(\eta\right)}\right)}{\sin\left(\eta\right)}.\label{eq:roots}
\end{align}
 We plot the curves $b=\theta e^{\theta}$ and $\left(-\frac{\eta\exp\left(-\frac{\eta}{\tan\left(\eta\right)}\right)}{\sin\left(\eta\right)},-\frac{\eta}{\tan\left(\eta\right)}\right)$ to find out the relation between $b$ and the real part of the solution in Figure \ref{complex}. The $x$-axis is $b$ and the $y$-axis is $\theta$. The blue curve is associated with $b=\theta e^{\theta}$, which is essentially the principle branch of the Lambert $W$ function.
For any $b$, one can find the real values of the roots by fixing $b$.
For instance, the green dashed line is associated with $b=-0.12$.
As one can see, the curves intersect this line in infinite negative values.
  For $b\in [-1/e,0)$, we can see that the real roots are greater than the real part of the complex roots. For more explanation of this, see \cite{SB73}.
\begin{figure}[htpb]
    \centering{}\includegraphics[scale=0.7]{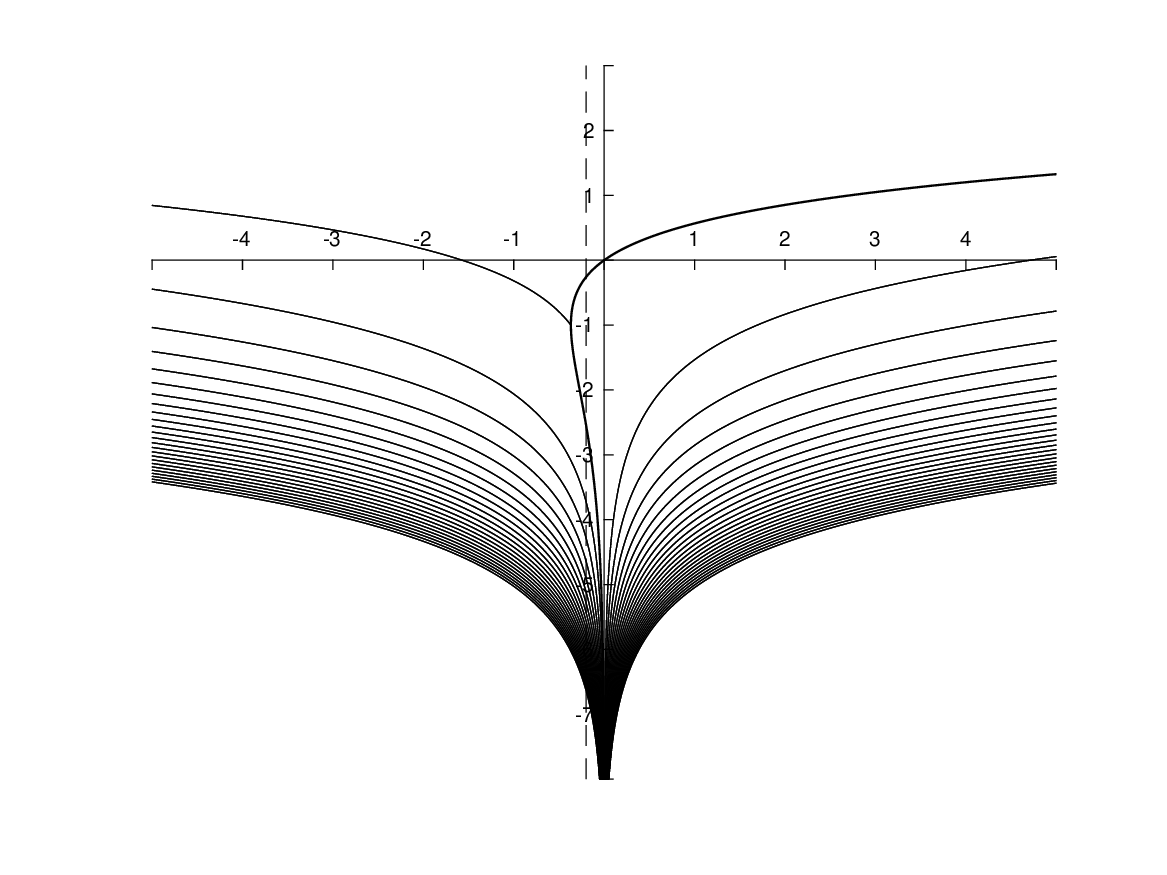} \caption{The real part of the Lambert $W$ roots.}
    \label{complex}
\end{figure}

Now  assume that all the complex solutions for $\left(am\right)e^{am}=\left(-a\right)e^{\left(-a\right)}$
 are $m_k=\lambda_k+\sigma_k i$ for $k=1,2,3,\dots$, where $(\lambda_1,\sigma_1)=(-1,0)$ and $(\lambda_2,\sigma_2)=(m_2,0)$.
Based on the above discussions, we have
$$\lambda_1=-1>\lambda_2=m_2>\lambda_{3}>\lambda_{4}>\dots \text{ when } c \in (1/e,1),$$
$$0>\lambda_2=m_2>\lambda_1=-1>\lambda_{3}>\lambda_{4}>\dots \text{ when }c \in (0,1/e),$$
and
$$\lambda_1=\lambda_2=-1>\lambda_{3}>\lambda_{4}>\dots \text{ when }c=1/e.$$

Let $$\mathbb{C}=\left\{\begin{pmatrix}
\lambda & -\sigma\\
\sigma  & \lambda\\
\end{pmatrix}=\lambda+\sigma i  \mid \lambda,\sigma \in \R\right\}$$ be the set of all complex numbers.
Then,
$$x_\mathbb{C}(t)=\exp\left\{\begin{pmatrix}
\lambda & -\sigma\\
\sigma  & \lambda\\
\end{pmatrix}t\right\}$$
 is a complex solution that solves \eqref{eq:x}. It is clear that \eqref{eq:x} still holds for the linear transform of $x(t)$. Therefore, for any two $2 \times 1$ vector $A$ and $B$,
 $$x(t)=A' x_{\mathbb{C}}(t) B=e^{\lambda t}\left(C_1 \cos(\sigma t)+C_2\sin(\sigma t)
\right)$$
is also a solution to \eqref{eq:x}.

In \cite{BC63}, it is shown that all complex solutions to \eqref{eq:x} can be represented as follows:
$$x_{\mathbb{C}}\left(t\right) =\sum_{k=1}^{\infty}C_{k}e^{m_{k}t}.
$$
Putting it in the real-valued context, we have that all real-valued solutions are in the following form:
$$x\left(t\right) =C_{1}e^{-t}+C_{2}e^{m_2 t}+\sum_{k=3}^{\infty}e^{\lambda_{k}t}\left(C_{k,1}\cos\left(\sigma_{k}t\right)+C_{k,2}\sin\left(\sigma_{k}t\right)\right).$$
Now let us check $x'+x>0$. This means that $C_2$, $\left\{ C_{k,1}\right\} _{k\ge 3}$ and $\left\{ C_{k,2}\right\} _{k\ge 3}$ for all $t$ must satisfy
\begin{multline*}
    x'\left(t\right)+x\left(t\right)=\left(1+m_2\right)C_{2}e^{m_2 t}\\
    +\sum_{k=3}^{\infty}e^{\lambda_{k}t}\left(\left(\lambda_{k}C_{k,1}+\sigma_{k}C_{k,2}+C_{k,1}\right)\cos\left(\sigma_{k}t\right)+\left(\lambda_{k}C_{k,2}-\sigma_{k}C_{k,1}+C_{k,2}\right)\sin\left(\sigma_{k}t\right)\right)>0.
\end{multline*}
As $\lambda_k<m_2$ for $k\ge 3$, we have $\lim_{t \to \infty} \left(x'(t)+x(t)\right)/e^{m_2 t}=(1+m_2)C_2$. Therefore, we need
$C_2<0$ if $m_2<-1$ or $C_2>0$ if $m_2>-1$. That is $C_2(1+m_2)>0$.
Then the solution can be written as
$$x\left(t\right) =C_{1}e^{-t}+C_{2}e^{m_2 t}+O\left(e^{\lambda_{3}t}\right).$$
By a change of variable, we get
$$
f\left(y\right) =C_{1}+C_{2}y^{\alpha}+O\left(y^{\zeta}\right),
$$
where $\alpha=-\left(1+m_2\right)$ and $\zeta=-\left(1+\lambda_{3}\right)$. As $\lambda_3<\min\{-1,m_2\}$ and $C_2(1+m_2)>0$, we have $\zeta>\max\{0,\alpha\}$ and $C_2\alpha<0$.
Also note that since $m_2$ solves $\left(am\right)e^{am}=\left(-a\right)e^{\left(-a\right)}$, by replacing $m_2=-1-\alpha$ and  $a=-\log\left(c\right)$, we get $ {(\alpha+1)^{-1/\alpha}}=c$.
\end{proof}

\section{Omitted proofs in Section \ref{sec:property}}\label{app:proof_sec6}

\begin{proof}[Proof of Proposition \ref{property}]
Before proving the statements in the proposition,
we   introduce the following linking function, for $X\in\X$,
$$
    \Gamma_{X}(\epsilon)=1-F_{X}(\ES_{\epsilon}(X)),~~~~\epsilon\in[0,1].
$$
As $X \in X$ 
it is easy to check that $\Gamma_{X}$ {satisfies Assumption \ref{ass:1} }
for any $X \in \X$. The domain of $\Gamma_{X}(\epsilon)$ is $[0,1]$ and its range is $[0,1-F_{X}(\E[X])]$.

As $\ES_{\epsilon}(X)=\VaR_{\Gamma_X(\epsilon)}(X)$, we have
$$
\VaR_{\Gamma_{X}(\epsilon)}(X)=\ES_{\epsilon}(X)=\VaR_{\epsilon/\pi_{X}(\epsilon)}(X)\quad\text{for}\quad\epsilon\in(0,1].
$$
Hence we have the simple relationship $\Gamma_{X}(\epsilon)=\epsilon/\pi_{X}(\epsilon).$
Therefore, $\Pi_{X}(\epsilon)=\pi_{X}\ensuremath{\left(\Gamma_{X}^{-1}(\epsilon)\right)}$ and $\pi_{X}(\epsilon)=\Pi_{X}(\Gamma_{X}(\epsilon))$.
The function $\Gamma_X$ yields an association  between a point on the  PELVE on $(0,1-F_{X}(\E[X])]$ and a point on the dual PELVE curve on $(0,1]$ with the same value. Furthermore, we have $\pi_X$ is continuous on $(0,1)$  as $\pi_X(\epsilon)=\epsilon/\Gamma_X(\epsilon)$ and $\Gamma_X$ is continuous.

 Next, we show the statements (i)-(iv).
The equivalence (i) of monotonicity of $\Pi_{X}$ and that of $\pi_X(\cdot)$ follows from $\Pi_{X}(\epsilon)=\pi_{X}\ensuremath{\left(\Gamma_{X}^{-1}(\epsilon)\right)}$, $\pi_{X}(\epsilon)=\Pi_X(\Gamma_{X}(\epsilon))$ and that $\Gamma_{X}$ is increasing.

For (ii), (iii) and (iv), we first show that $\Gamma_X$ is location-scale invariant and shape relevant (in the sense of \eqref{eq:shape-relevance}).
Assume that $f:\R \to\R$ is a strictly increasing concave function such that $f(X) \in \X$.
By Jensen's inequality and the dual representation of $\ES_p$, we have
$$\ES_p\left(f(X)\right)\le f\left(\ES_p(X)\right)$$ for all $p \in (0,1)$. This statement can be found in Appendices A.2 in \cite{LW22}.
Therefore,
\begin{equation}  \label{eq:shape-relevance}
\begin{array}{rl}
  &   \Gamma_{f(X)}(\epsilon)=1-F_{f(X)}\left(\ES_{\epsilon}\left(f(X)\right)\right) \\
                           &\qquad \qquad\ge 1-F_{f(X)}\left(f\left(\ES_{\epsilon}(X)\right)\right)
                         =1-F_{X}\left(\ES_{\epsilon}(X)\right)=\Gamma_{X} (\epsilon ).
                         \end{array}
                         \end{equation}
Then, we have $\Gamma_{f(X)}(\epsilon)\ge\Gamma_{X}(\epsilon)$ for all strictly increasing concave functions: $f:\R \to\R$ with $f(X) \in \X$.

For any strictly increasing convex function $g: \R \to \R$ with $g(X) \in \X$, we can take $f(x)=g^{-1}(X)$, which is  a strictly increasing concave function. Therefore, we have $ \Gamma_{g(X)}(\epsilon)\le \Gamma_{X}(\epsilon) $ for all strictly increasing convex functions $g$.

For $\lambda>0$ and $a\in \R$, we have that $f(x)=\lambda X+a$ is both convex and concave. Therefore,   $\Gamma_{\lambda X+a}(\epsilon)=\Gamma_X(\epsilon)$ for all $\epsilon \in [0,1]$. In conclusion, we have the following results for $\Gamma$.
    \begin{enumerate}[(1)]
     \item For all $\lambda>0$ and $a\in\R$, $\Gamma_{\lambda X+a}(\epsilon)=\Gamma_{X} (\epsilon)$.
     \item $\Gamma_{f(X)}(\epsilon)\ge\Gamma_{X}(\epsilon)$ for all strictly increasing concave functions: $f:\R \to\R$ with $f(X) \in \X$.
      \item $\Gamma_{g(X)}(\epsilon)\le\Gamma_{X}(\epsilon)$ for all strictly increasing convex functions: $g:\R \to\R$ with $g(X) \in \X$.
    \end{enumerate}
    Then, we have (ii), (iii) and (iv) from $\pi_X(\epsilon)=\epsilon/\Gamma_X(\epsilon)$.
\end{proof}

\begin{proof}[Proof of Theorem \ref{hazard_rate}]
    The idea is to prove that if $1/\eta$ is convex (concave), then $x\mapsto F^{-1}((1-p)F(x)+p)$ is convex (concave) for all $p\in(0,1)$. Then, we can get the desired result by Proposition \ref{convex}. We will use the following steps to show this statement.

    \underline{Step 1.}
    Let $s(x)=\log\left(1-F(x)\right)$ for $x\in(\essinf(X),\esssup(X))$.
    Then, $s$ is a continuous and strictly decreasing function and $s(x)<0$.
    Let $s^{-1}$ be the inverse function of $s$. Now, we have
    $$ F(x)=1-e^{s(x)},\quad x\in(\essinf(X),\esssup(X)) $$
    and
    $$ F^{-1}(t) =s^{-1}(\log(1-t)),\quad t\in(0,1).$$
    Therefore,
    $$
    \begin{aligned}
     F^{-1}\big((1-p)F(x)+p\big)
     & =s^{-1}\big(\log(1-(1-p)F(x)-p)\big)\\
     & =s^{-1}\Big(\log\big(1-F(x)\big)+\log(1-p)\Big)\\
     & =s^{-1}\Big(\log\big(e^{s(x)}\big)+\log(1-p)\Big)\\
     & =s^{-1}\Big(s(x)+\log(1-p)\Big).
    \end{aligned}
    $$
    Let $\theta=\log(1-p)$. It follows that the statement that $x\mapsto F^{-1}\big((1-p)F(x)+p\big)$ is convex (concave) for all $p\in(0,1)$ is equivalent to the statement
    that $x\mapsto s^{-1}\left(s(x)+\theta\right)$ is convex (concave) for all $\theta<0$.

    \underline{Step 2.}
    Let $g(x):=-s^{-1}(x)$. Then, $g$ is strictly increasing. We will show that if $1/\eta$ is convex (concave), $\log(g'(x))$ is convex (concave).

    As $g(x)=-s^{-1}(x)=-S^{-1}(e^{x}),$ we have
    $$g'(x) =\frac{e^{x}}{f\left(S^{-1}(e^{x})\right)}.$$
    Let $H(x):=\log(g'(x))=x-\log\left(f\left(S^{-1}(e^{x})\right)\right)$.
    We have
    \begin{equation}
    H\left(\log(S(x))\right)=\log S(x)-\log f(x)=-\log\eta(x).\label{H}
    \end{equation}
    Then, taking the derivative on both sides of \eqref{H}, we get
    $$
    \begin{aligned}
    & -H'\big(\log(S(x))\big)\eta(x)=-\frac{\eta'(x)}{\eta(x)}\\
  \Longleftrightarrow ~~& H'\big(\log(S(x))\big)=\frac{\eta'(x)}{\eta^{2}(x)}=-\frac{\d}{\d x}\left(\frac{1}{\eta(x)}\right).
    \end{aligned}
    $$
    Taking a derivative in both sides again, we get
    $$
    -H''\big(\log(S(x))\big)\eta(x)  =-\frac{\d^{2}}{\d x^{2}}\left(\frac{1}{\eta(x)}\right).
    $$
    Then, $1/\eta$ is a convex (concave) function means $H''(x)\ge0$ ($H''(x)\le0$), which gives that $\log\ensuremath{g'(x)}$ is convex
    (concave).

    \underline{Step 3.} For $\theta<0$, let $G_{\theta}(x):=s^{-1}(s(x)+\theta)$.
    We are going to show
    \begin{equation}\label{limG}
    \lim_{z\to 0}\frac{G_{\theta}(x+z)-G_{\theta}(x)}{z}\le\lim_{z\to0}\frac{G_{\theta}(x'+z)-G_{\theta}(x')}{z}
    \end{equation}
    for all $x<x'$.

    We take $z>0$ first. As $s$ is strictly decreasing, $s^{-1}$ is also strictly decreasing. Then, $G_{\theta}$ is a continuous and strictly increasing function. As $\theta<0$, we also have $G_{\theta}(x)>x$.
    Take arbitrary $x,x',y$ and $z$ such that $x<x'$, $x<y$ and $z>0$.
    Let $\theta=s(y)-s(x)$. Then, we have $G_{\theta}(x)=y$.
    Define
    $$
    h=G_{\theta}(x+z)-y,\quad y'=G_{\theta}(x')\quad\text{and}\quad h'=G_{\theta}(x'+z)-y'.
    $$
    By the definition of $G_{\theta}$, we have $s(y+h)=s(x+z)+\theta$, $s(y')=s(x')+\theta$ and $s(y'+h')=s(x'+z)+\theta$. As a result, we have
    $$
    s(y+h)-s(y)  =s(x+z)-s(x)\quad\text{and}\quad s(y'+h')-s(y')=s(x'+z)-s(x').
    $$

    By the mean-value theorem, there exists $\zeta\in(y,y+h)$, $\zeta'\in(y',y'+h')$, $\xi\in(x,x+z)$ and $\xi'\in(x',x'+z)$ such that
    $$
    s'(\zeta)h =s'(\xi)z\text{ and }s'(\zeta')h'=s'(\xi')z.
    $$
    Furthermore, $x,x',y$ and $y'$ satisfy $x<x'<y'$ and $x<y<y'$.
    If $z$ is small enough, then $h$ and $h'$ will also be small enough as $G_{\theta}$ is continuous.
    Therefore, we have $\xi<\xi'<\zeta'$ and $\xi<\zeta<\zeta'$ when $z$ is small enough.

    In Step 2, we have that $\log\ensuremath{g'(x)}$ is convex when $1/\eta$ is convex. Therefore, we get
    $$\log(g'(a))+\log(g'(b))  \ge\log(g'(a'))+\log(g'(b')),$$
    for all $a<a'<b$ and $a<b'<b$, which means
    $$g'(a)g'(b)  \ge g'(a')g'(b').$$
    As $g(x)=-s^{-1}(x)$, we have
    $$\frac{1}{s'(s^{-1}(a))s'(s^{-1}(b))} \ge\frac{1}{s'(s^{-1}(a'))s'(s^{-1}(b'))}.$$
    As $s^{-1}(x)$ is strictly decreasing, it means that
    $$s'(\alpha)s'(\beta)  \le s'(\alpha')s'(\beta')$$
    for $\alpha>\alpha'>\beta$ and $\alpha>\beta'>\beta$.
    Therefore, we have $s'(\xi)s'(\zeta')\le s'(\xi')s'(\zeta)$ as $\zeta'>\zeta>\xi$ and $\zeta'>\xi'>\xi$.
    That is,
    $$h  =\frac{s'(\xi)z}{s'(\zeta)}\le\frac{s'(\xi')z}{s'(\zeta')}=h'.$$

    On the other hand, $h=G_{\theta}(x+z)-G_{\theta}(x)$ and $h'=G_{\theta}(x'+z)-G_{\theta}(x')$.
    Therefore,
    $$G_{\theta}(x+z)-G_{\theta}(x)  \le G_{\theta}(x'+z)-G_{\theta}(x')\label{G}$$
    when $z$ is small enough. If $z<0$, we can also get \eqref{limG} by an analogous argument.

    Hence,   the second-order derivative of $G_{\theta}$ is increasing for each $\theta<0$, which means that $x\mapsto F^{-1}\big((1-p)F(x)+p\big)$ is convex for all $p\in(0,1)$ if $1/\eta$ is convex.

    An analogous argument yields that $x\mapsto F^{-1}\big((1-p)F(x)+p\big)$ is concave for all $p\in(0,1)$ when $1/\eta$ is concave.
\end{proof}
\end{document}